\newcommand{\fracF}[1]{{#1}_{\textit{(frac)}}}
\newcommand{\pisiSE}{$\Pi\Sigma^*$}
\newcommand{\sigmaSE}{$\Sigma^*$}
\newcommand{\sigmaE}{$\Sigma$}
\newcommand{\piE}{$\Pi$}
\newcommand{\pisiE}{$\Pi\Sigma$}
\newcommand{\GG}{\mathbb{G}}
\newcommand{\KK}{\mathbb{K}}
\newcommand{\NN}{\mathbb{N}}
\newcommand{\ZZ}{\mathbb{Z}}
\newcommand{\FF}{\mathbb{F}}
\newcommand{\AS}{\mathbb{A}}
\newcommand{\HH}{\mathbb{H}}
\newcommand{\QQ}{\mathbb{Q}}
\newcommand{\notion}[1]{{\em #1}}
\newcommand{\dfield}[2]{({#1},{#2})}
\newcommand{\SigmaP}{\textsf{Sigma}}
\let\set\mathbb
\newcommand{\const}[2]{{\rm const}_{#2}{#1}}
\newcommand{\fct}[3]{{#1:#2 \to #3}}
\newcommand{\vect}[1]{\vec{#1}}
\newcommand{\coeff}{\text{coeff}}
\newcommand{\PT}{PT}
\newcommand{\FPLDE}{FPLDE}
\newcommand{\SolveFPLDE}{\textsf{\small SolveFPLDE}}
\newcommand{\DegreeReductionFPLDE}{\textsf{\small DegreeReductionFPLDE}}
\newcommand{\NSolveFPLDE}{\textsf{\footnotesize SolveFPLDE}}
\newcommand{\NDegreeReductionFPLDE}{\textsf{\footnotesize DegreeReductionFPLDE}}
\newcommand{\SolvePTRat}{\textsf{\small SolvePTRat}}
\newcommand{\DegreeReductionRat}{\textsf{\small DegreeReductionRat}}
\newcommand{\NSolvePTRat}{\textsf{\footnotesize SolvePTRat}}
\newcommand{\NDegreeReductionRat}{\textsf{\footnotesize DegreeReductionRat}}
\newcommand{\SolvePTPoly}{\textsf{\small SolvePTPoly}}
\newcommand{\DegreeReductionPoly}{\textsf{\small DegreeReductionPoly}}
\newcommand{\NSolvePTPoly}{\textsf{\footnotesize SolvePTPoly}}
\newcommand{\NDegreeReductionPoly}{\textsf{\footnotesize DegreeReductionPoly}}
\newcommand{\FirstEntryPT}{\textsf{\small FirstEntryPT}}
\newcommand{\DegreeReductionFirstEntry}{\textsf{\small DegreeReductionFirstEntry}}
\newcommand{\NFirstEntryPT}{\textsf{\footnotesize FirstEntryPT}}
\newcommand{\NDegreeReductionFirstEntry}{\textsf{\footnotesize DegreeReductionFirstEntry}}
\newcommand{\ReducedPT}{\textsf{\small ReducedPT}}
\newcommand{\DegreeReductionReduced}{\textsf{\small DegreeReductionReduced}}
\newcommand{\NReducedPT}{\textsf{\footnotesize ReducedPT}}
\newcommand{\NDegreeReductionReduced}{\textsf{\footnotesize DegreeReductionReduced}}
\let\set\mathbb
\begin{document}

\title*{Fast Algorithms for Refined Parameterized Telescoping in Difference Fields}
\titlerunning{Fast Algorithms for Refined Parameterized Telescoping in Difference Fields}
\author{Carsten Schneider}
\institute{Carsten Schneider \at Research Institute for Symbolic Computation (RISC), Johannes Kepler University Linz, \email{Carsten.Schneider@risc.jku.at}}
%
%
\maketitle

\vspace*{-1.4cm}

\abstract{Parameterized telescoping (including telescoping and creative telescoping) and refined versions of it play a central role in the research area of symbolic summation. In 1981 Karr introduced \pisiE-fields, a general class of difference fields, that enables one to consider this problem for indefinite nested sums and products covering as special cases, e.g., the ($q$--)hypergeometric case and their mixed versions. This survey article presents the available algorithms in the framework of \pisiE-extensions and elaborates new results concerning efficiency.}

\section{Introduction}

This article deals with the following refined parameterized telescoping problem: given $f_1(k),\dots,f_n(k)$, that are represented in a field or ring $\FF$ and that evaluate for $k\in\NN$ to elements from a field $\KK$; find constants $c_1,\dots,c_n\in\KK$ (not all zero) and $g(k),\psi(k)\in\FF$ such that the refined parameterized telescoping equation 
\begin{equation}\label{Equ:ParaTeleSeq}
g(k+1)-g(k)+\psi(k)=c_1\,f_1(k)+\dots+c_n\,f_n(k)
\end{equation}
holds for all $k\geq \delta$ (for some $\delta\in\NN$) and such that $\psi$ is as simple as possible. Here $\psi=0$ is considered as the simplest and most desirable case. If one succeeds in this task, one can sum~\eqref{Equ:ParaTeleSeq} over $k$ from $\delta$ to $m$ and obtains the relation

\vspace*{-0.2cm}

\begin{equation}\label{Equ:SumRelation}
g(m+1)-g(\delta)+\sum_{k=\delta}^m\psi(k)=c_1\sum_{k=\delta}^mf_1(k)+\dots+c_n\sum_{k=\delta}^mf_n(k).
\end{equation}

\vspace*{-0.1cm}


 The special case $n=1$ (here we can set $c_1=1$ and $f(k)=f_1(k)$) gives refined telescoping: given $f\in\FF$, find $g,\psi\in\FF$ such that

\vspace*{-0.2cm}

\begin{equation}\label{Equ:RefinedTele}
g(k+1)-g(k)+\psi(k)=f(k)
\end{equation}

\vspace*{-0.1cm}

\noindent and such that $\psi$ is as simple as possible. If one restricts to $\psi=0$, we consider standard telescoping. This problem has been considered heavily for rational, \hbox{($q$--)}hypergeometric and mixed terms; see, e.g.,~\cite{Abramov:71,Gosper:78,PauleSchorn:95,Koornwinder:93,PauleRiese:97,Bauer:99}.\\ 
In addition, for a rational function field $\FF=\KK(k)$ refined telescoping has been
considered in~\cite{Abramov:75}: here the simplicity of $\psi$ is
determined by the degree of the denominator polynomial. Theoretical
insight and additional algorithms have been derived
in~\cite{Paule:95}; see also~\cite{Strehl:95}. Extensions for hypergeometric terms are given in~\cite{AB:2001}.\\
Another application is refined creative telescoping: taking $f_i(k)=F(r+i-1,k)$ for a bivariate expression $F(r,k)$, one obtains the recurrence relation 

\vspace*{-0.55cm}

\begin{equation}\label{Equ:SumRec}
g(m+1)-g(\delta)+\sum_{k=\delta}^m\psi(k)=c_1\sum_{k=\delta}^mF(r,k)+\dots+c_n\sum_{k=\delta}^mF(r+n-1,k).
\end{equation}

\vspace*{-0.3cm}
 
\noindent Specializing $m$, e.g., to $r$ and collecting $g(r+1)-g(\delta)+\sum_{k=\delta}^r\psi(k)$ and compensating terms in $h(r)$ yields the recurrence

\vspace*{-0.2cm}

\begin{equation}\label{Equ:Recurrence}
h(r)=c_1\,S(r)+\dots+c_n\,S(r+n-1)
\end{equation}
\noindent for the sum $S(r)=\sum_{k=0}^rF(r,k)$. Zeilberger~\cite{Zeilberger:91,AequalB} observed first that creative telescoping (with $\psi=0$) can be handled algorithmically using Gosper's algorithm; for a sophisticated Mathematica package we refer to~\cite{PauleSchorn:95}. Recently, new complexity aspects were derived yielding new tactics to compute recurrence relations for hypergeometric terms more efficiently~\cite{CK:12,CJKS:13}. 
Similarly, creative telescoping for the $q$-case and mixed case have been considered~\cite{Koornwinder:93,PauleRiese:97,Bauer:99}.  For the holonomic case we refer the reader to~\cite{Chyzak:00,Schneider:05d,Koutschan:13}. Moreover, parameterized telescoping (with $\psi=0$) and its application have been considered for the hypergeometric case~\cite{Paule:01,PS:13}.

\smallskip

A powerful generalization of \hbox{($q$--)}hypergeometric and mixed expressions is the class of indefinite nested sums and products covering in addition, e.g., harmonic sums~\cite{Vermaseren:99,Bluemlein:99} and their generalized versions~\cite{Moch:02,ABS:13,ABS:11}. Such expressions can be represented in \pisiE-fields, a general class of difference fields introduced by Karr~\cite{Karr:81,Karr:85}. 
Many aspects of parameterized telescoping (extending the results mentioned above) have been elaborated in this setting. Here one is faced with three problems:
\begin{enumerate}
\item Reformulate the indefinite nested product-sum expressions $f_i(k)$ of~\eqref{Equ:ParaTeleSeq} in a suitable \pisiSE-field, i.e., in a function field $\FF=\KK(t_1)\dots(t_e)$ where the generators $t_i$ represent the occurring sums and products; for details see Definition~\ref{Def:pisiExt} below. 
\item Solve the underlying problem in this field or in a suitable extension of it.
\item Reformulate the result in terms of sums and products to get a  result for~\eqref{Equ:ParaTeleSeq}.
\end{enumerate}
\noindent Steps~1 and~3 have been worked out, e.g., in~\cite{Schneider:10c,Schneider:10b}; for a recent survey on this part dealing with telescoping and creative telescoping as introduced above, but also considering recurrence solving, we refer to~\cite{Schneider:13a}. In this article we are concerned with Step~2 and present up-to-date and new algorithms that solve parameterized telescoping problems efficiently. After a short summery of \pisiSE-field theory in Section~\ref{Sec:BasicTheory}, the following algorithmic and theoretical aspects are considered.

\smallskip 

\noindent\textit{An algorithmic framework to solve first-order parameterized equations (Section~\ref{Sec:FPLDE}).} The first algorithm of parameterized telescoping in the setting of \pisiE-fields has been introduced in~\cite{Karr:81}. In short, given a \pisiE-field $\FF$ in which the $f_i(k)$ are represented, find all $g\in\FF$ ($\psi=0$) and constants $c_j\in\KK$ such that~\eqref{Equ:ParaTeleSeq} holds. As it turns out, one actually has to solve a more general problem within Karr's algorithm, namely parameterized first-order linear difference equations (\FPLDE).
In Section~\ref{Sec:FPLDE} we will present a streamlined and simplified version of Karr's algorithm~\cite{Schneider:01,Schneider:05a}. Here an important ingredient is that results of Karr~\cite{Karr:81} and Bronstein's extension~\cite{Bron:00} of Abramov's denominator bounding algorithm~\cite{Abramov:89a} can be combined~\cite{Schneider:04b}.  In this presentation we do not restrict to \pisiSE-fields $\FF=\GG(t_1)\dots(t_e)$ where the field generators $t_i$ represent indefinite nested sums and products and $\GG=\KK$ is the constant field. But we work in a rather general framework: $\GG$ is a difference field (modeling extra objects) that provides certain algorithmic building blocks. In this way, all the algorithms in this article are applicable to indefinite nested product-sum expressions where also unspecified sequences~\cite{Schneider:06d,Schneider:06e} and radicals~\cite{Schneider:07f} like $\sqrt[d]{k}$ can arise.

\smallskip

\noindent\textit{An improved algorithm to solve parameterized telescoping (Section~\ref{Sec:NaiveTele}).} With this preparation, we derive a simplified and efficient algorithm in Section~\ref{Sec:NaiveTele} that solves parameterized telescoping; see also~\cite[Sec.~5]{Schneider:08c}. If one deals with sum extensions, the parameterized telescoping algorithm is simplified further.

\smallskip

\noindent\textit{Further improvement by searching first-entry solutions (Section~\ref{Sec:FirstEntry}).} So far, the existing algorithms aim at finding all available solutions $c_j\in\KK$ and $g\in\FF$ ($\psi=0$) of~\eqref{Equ:ParaTeleSeq}. However, at least one of the $c_j$ should be non-zero.  
We will present an optimized algorithm that determines exactly one such solution with  $c_1\neq0$ (if it exists); such a solution will be also called first-entry solution. An obvious application is telescoping (i.e., $n=1$ and $c_1\neq0$). Another important application is creative telescoping. If there exists a recurrence~\eqref{Equ:Recurrence}, one can also assume that one with $c_1\neq0$ exists (if it exists for $c_1=0$, one can shift backwards in $r$ and gets a recurrence where the coefficient of $S(r)$ is non-zero). Hence w.l.o.g.\ the improved algorithm is applicable.

\smallskip

\noindent\textit{An efficient algorithm for refined parameterized telescoping (Section~\ref{Sec:ReducedSol}).} Analyzing the derived algorithm for first-entry solutions, a slight modification solves the following refined parameterized telescoping problem in a \pisiSE-field $\FF=\GG(t_1)\dots(t_e)$: find $c_1,\dots,c_n\in\KK$ with $c_1\neq0$ and $g\in\FF$, $\psi\in\GG(t_1)\dots(t_i)$ such that~\eqref{Equ:ParaTeleSeq} holds and such that $i$ is minimal; we call such a solution also reduced solution. Note that the derived algorithm strongly simplifies the algorithm presented in~\cite{Schneider:04a} and leads to a much more efficient version. In addition, the algorithm can be combined with algorithmic ideas of~\cite{Schneider:07d} that generalize the refined telescoping versions of~\cite{Abramov:75}: A reduced solution can be improved further by searching for a $\psi\in\GG(t_1)\dots(t_i)$ such that the degrees of the numerator and denominator polynomials in $t_i$ are minimal. The benefit of these tools will be illustrated for the special case of refined telescoping and creative telescoping. 

\smallskip

\noindent\textit{Exploiting structural properties (Section~\ref{Sec:StructuralThm}).}
The presented algorithms  can be used to transform any \pisiSE-field to a reduced version~\cite{Schneider:10a}. As a consequence, we obtain a constructive version of Karr's structural theorem, which can be considered as the discrete analogue of Liouville's Theorem~\cite{Liouville:1835} for indefinite integration. 
This in turn allows additional speed ups to solve parameterized telescoping in such fields. Finally, in Section~\ref{Sec:Conclusion} we relate the introduced algorithms to the difference field theory of depth-optimal \pisiSE-extensions~\cite{Schneider:05f,Schneider:08c}.

\smallskip

We conclude the introduction by remarking that all the presented algorithms play an important role in concrete problem solving like in the fields of combinatorics~\cite{APS:05,PSW:11}, numerics~\cite{Schneider:06c}, number theory~\cite{Schneider:03,Schneider:09a} or particle physics~\cite{Schneider:08e,BHKS:13}. In particular, if one solves linear recurrence relations in terms of d'Alembertian solutions~\cite{Petkov:92,Abramov:94,AequalB,Schneider:01}, a subclass of Liouvillian solutions~\cite{Singer:99,Petkov:2013}, one obtains highly nested indefinite nested sums. It is then a necessary task to simplify these sums by fast parameterized telescoping algorithms. All the presented algorithms are part of the summation package \SigmaP~\cite{Schneider:07a,Schneider:13a}.


\section{A short summary of \pisiSE-field theory and \pisiSE-extensions}\label{Sec:BasicTheory}

We start with some basic definitions and notations. All fields and rings are computable and contain as subfield (resp. subring) the rational numbers $\QQ$; $\NN$ denotes the non-negative integers. For a set $\AS$ (in particular for a ring and field) we define $\AS^*=\AS\setminus\{0\}$. For a polynomial $f=\sum_{i=0}^df_i\,t^i\in\AS[t]$ with $f_i\in\AS$, we define $\coeff(f,i)=f_i$; if $f_d\neq0$, $\deg(f)=d$. By convention, $\deg(0)=-1$. For $m\in\set\ZZ$ we define $\AS[t]_m:=\{f\in\AS[t]\,|\,\deg(f)\leq m\}$.
Moreover, we define the rational part of $\AS(t)$ as
$\fracF{\AS(t)}=\{\frac{p}{q}\,|\,p,q\in\AS[t], \deg(p)<\deg(q)\}$. 

For a vector $\vect{f}=(f_1,\dots,f_n)\in\AS^n$ and $h\in\AS$, we define $\vect{f}\wedge h=(f_1,\dots,f_n,h)$; and for a function $\fct{\sigma}{\AS}{\AS}$, we define 
$\sigma(\vect{f})=(\sigma(f_1),\dots,\sigma(f_n))$. The zero-vector in $\AS^n$ is also denoted by $\vect{0}$. For a linear independent set (basis) $\{b_1,\dots,b_{\nu}\}$ of a vector space we assume that the elements are ordered (by the given indices).

\smallskip

A \notion{difference ring (resp.\ difference field)} $\dfield{\AS}{\sigma}$ is a ring $\AS$ (resp.\ field) equipped with an automorphism $\fct{\sigma}{\AS}{\AS}$. The \notion{constants} are given by
$\const{\AS}{\sigma}:=\{c\in\AS|\sigma(c)=c\}.$
Note that $\const{\AS}{\sigma}$ is a subring (resp. subfield) of $\AS$ and $\QQ$ is contained in it as a subring (resp.\ subfield). Throughout this article we assume that
$\const{\AS}{\sigma}$ \textit{always} forms a field also called \notion{constant field}. 

Subsequently, we will deal with difference fields (resp.\ rings) that are given by iterative application of certain difference field (resp.\ ring) extensions. In general, a difference field $\dfield{\FF}{\sigma}$ is a \notion{difference field (resp.\ ring) extension} of a difference field $\dfield{\GG}{\sigma'}$ if $\GG$ is a subfield (resp.\ subring) of $\FF$ and $\sigma(f)=\sigma'(f)$ for all $f\in\GG$. If it is clear from the context, we do not distinguish between $\sigma$ and $\sigma'$ anymore. Throughout this article we assume that $\KK$ is the constant field (of the arising difference fields) and $\dfield{\GG}{\sigma}$ is a difference field (not necessarily the constant field) where certain algorithmic properties are available. Moreover, $\dfield{\AS(t)}{\sigma}$ is a difference field extension of $\dfield{\AS}{\sigma}$ where $\AS(t)$ is a rational function field. In particular, there is the following chain of difference field extensions: $\KK\leq\GG\leq\AS\leq\AS(t)\leq\FF$.

$\fct{\tau}{\FF}{\FF'}$ is called a \notion{$\sigma$-isomorphism} between two difference fields $\dfield{\FF}{\sigma}$ and $\dfield{\FF'}{\sigma'}$ if $\tau$ is a field isomorphism and $\tau(\sigma(f))=\sigma'(\tau(f))$ for all $f\in\FF$. In particular, let $\dfield{\FF}{\sigma}$ and $\dfield{\FF'}{\sigma'}$ be difference field extensions of $\dfield{\GG}{\sigma}$. Then a $\sigma$-isomorphism $\fct{\tau}{\FF}{\FF'}$ is a a \notion{$\GG$-isomorphism} if $\tau(a)=a$ for all $a\in\GG$.

\begin{example}\label{Exp:DF}
\begin{enumerate}
\item $\dfield{\QQ}{\sigma}$ is a difference field with $\sigma=\text{id}_{\QQ}$. 
\item Take the rational function field $\QQ(k)$ and define $\fct{\sigma}{\QQ(k)}{\QQ(k)}$ by $\sigma(f)=f(k+1)$ where $f(k+1)$ is the shifted version of $f(k)$. Then $\sigma$ is a field automorphism with $\sigma|_{\QQ}=\text{id}_{\QQ}$, i.e, $\dfield{\QQ(k)}{\sigma}$ is a difference field extension of $\dfield{\QQ}{\sigma}$.
\item Let $\dfield{\FF}{\sigma}$ be a difference field and let $t$ be transcendental over $\FF$, i.e., $\FF(t)$ is a rational function field. Take $\alpha,\beta\in\FF$ with $\alpha\neq0$. Then there is exactly one way how the field automorphism $\fct{\sigma}{\FF}{\FF}$
is extended to $\fct{\sigma}{\FF(t)}{\FF(t)}$ such that $\sigma(t)=\alpha\,t+\beta$. Namely, for $f=\sum_if_i\,t^i\in\FF[t]$ it follows that $\sigma(f)=\sum_i\sigma(f_i)(\alpha\,t+\beta)^i$ and for $f,g\in\FF[t]$ with $g\neq0$, it follows that $\sigma(\frac{f}{g})=\frac{\sigma(f)}{\sigma(g)}$. 
\item For instance, given the rational function field $\QQ(k)(p)(h)$, consider the difference field extensions $\dfield{\QQ(k)(p)}{\sigma}$ of $\dfield{\QQ(k)}{\sigma}$ determined by $\sigma(p)=(k+1)\,p$ and $\dfield{\QQ(k)(p)(h)}{\sigma}$ of $\dfield{\QQ(k)(p)}{\sigma}$ determined by $\sigma(h)=h+\frac{1}{k+1}$. Note that $p$ and $h$ represent the factorial $k!$ and the harmonic numbers $H_k=\sum_{i=1}^k\frac{1}{i}$ with their shift behaviors $(k+1)!=(k+1)k!$ and $H_{k+1}=H_k+\frac{1}{k+1}$, respectively. 
\end{enumerate}
\end{example}

\noindent In the following we deal with exactly this type of extensions with the constraint that during the extension the constant field remains unchanged. 

\begin{definition}
Consider the difference field extension $\dfield{\FF(t)}{\sigma}$ of $\dfield{\FF}{\sigma}$ with $t$ transcendental over $\FF$ and $\sigma(t)=\alpha\,t+\beta$ where $\alpha\in\FF^*$ and $\beta\in\FF$. 

\vspace*{-0.1cm}

\begin{enumerate}
\item This extension is called \notion{\piE-extension} if $\beta=0$ and $\const{\FF(t)}{\sigma}=\const{\FF}{\sigma}$. 
\item This extension is called \notion{\sigmaSE-extension}\footnote{Karr's \sigmaE-extensions~\cite{Karr:81} are given by generators with $\sigma(t)=\alpha\,t+\beta$ with extra conditions on $\alpha$. For simplicity, we prefer to work with \sigmaSE-extensions that are relevant in symbolic summation.} if $\alpha=1$ and $\const{\FF(t)}{\sigma}=\const{\FF}{\sigma}$.
\item This extension is called \notion{\pisiSE-extension} if it is a \piE- or \sigmaSE-extension.
\end{enumerate}
\end{definition}

\noindent In the following we are interested in a tower of such extensions 
$\dfield{\FF_0}{\sigma}<\dfield{\FF_1}{\sigma}<\dots<\dfield{\FF_e}{\sigma},$
i.e., we start with a given difference field $\dfield{\FF_0}{\sigma}:=\dfield{\GG}{\sigma}$ and construct iteratively the \pisiSE-extension $\dfield{\FF_i}{\sigma}$ of $\dfield{\FF_{i-1}}{\sigma}$ where $\FF_i=\FF_{i-1}(t_i)$, i.e., $t_i$ is transcendental over $\FF_{i-1}$, $\sigma$ is extended from $\FF_{i-1}$ to $\FF_i$ subject the shift relation $\sigma(t_i)=\alpha_i\,t_i+\beta_i$ ($\alpha_i\in\FF_{i-1}^*, \beta_i=0$ or $\alpha_i=1, \beta_i\in\FF_{i-1}$), and $\const{\FF_i}{\sigma}=\const{\FF_{i-1}}{\sigma}$. This gives the difference field $\dfield{\FF}{\sigma}$ where $\FF=\GG(t_1)\dots(t_e)$ is a rational function field and $\const{\FF}{\sigma}=\const{\GG}{\sigma}$. Throughout this article, it is assumed that the generators $t_1,\dots,t_e$ of such an extension are given explicitly. 

\begin{definition}\label{Def:pisiExt}
A difference field extension $\dfield{\FF}{\sigma}$ of $\dfield{\GG}{\sigma}$ with $\FF=\GG(t_1)\dots(t_e)$ and $\KK:=\const{\GG}{\sigma}$ is called \notion{(nested) \pisiSE-extension} (resp.\ \piE-/\sigmaSE-extension), if it is a tower of (single) \pisiSE-extensions (resp.\ \piE-/\sigmaSE-extensions). 
If $\GG=\KK$, such a difference field is called \notion{\pisiSE-field over $\KK$}.
\end{definition}

\noindent Summarizing, the generators $t_1,\dots,t_e$ represent indefinite nested sums and products whose shift-behaviors are modeled by $\sigma$. E.g., the difference field $\dfield{\QQ(k)(p)(h)}{\sigma}$ from Example~\ref{Exp:DF}.4 is a \pisiSE-field over $\QQ$ representing $k!$ and $H_k$. We emphasize that the construction of \sigmaSE-extensions is directly connected to telescoping.

\begin{theorem}\label{Thm:Karr}[\cite{Karr:81,Karr:85}]
Consider the difference field extension $\dfield{\FF(t)}{\sigma}$ of $\dfield{\FF}{\sigma}$ with $t$ being transcendental over $\FF$ and $\sigma(t)=t+\beta$. Then this is a \sigmaSE-extension iff there is no $g\in\FF$ with $\sigma(g)=g+\beta$.
\end{theorem}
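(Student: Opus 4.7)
The plan is to prove the two directions separately. The easy direction is the ``only if'': given $g \in \FF$ with $\sigma(g) = g + \beta$, I set $h := t - g$; since $t$ is transcendental over $\FF$, $h \in \FF(t) \setminus \FF$, and a one-line computation gives $\sigma(h) = (t + \beta) - (g + \beta) = h$, exhibiting a constant in $\FF(t)$ outside $\const{\FF}{\sigma}$ and so violating the $\Sigma^*$ condition.

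For the substantive direction I would argue the contrapositive: assume the extension fails to be $\Sigma^*$, so there is $f \in \const{\FF(t)}{\sigma}\setminus\FF$, and from this produce the desired telescoper $g \in \FF$. The central technical observation is that $\sigma$ maps $\FF[t]$ to itself and, since $\sigma(t) = t + \beta$, preserves both the $t$-degree and monicity (the leading coefficient of $\sigma(P)$ equals $\sigma$ applied to the leading coefficient of $P$). Writing $f = p/q$ in lowest terms with $q$ monic, applying $\sigma$ yields another lowest-terms monic-denominator representation of $f$; uniqueness of such representations then forces $\sigma(p) = p$ and $\sigma(q) = q$. Since $f \notin \FF$, at least one of $p, q$ has $t$-degree $\geq 1$, so in either case I obtain a polynomial $P = \sum_{i=0}^d P_i\, t^i \in \FF[t]$ with $d \geq 1$, $P_d \neq 0$, and $\sigma(P) = P$.

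Finally, I extract $g$ from $P$ by coefficient comparison. Expanding $\sigma(P) = \sum_i \sigma(P_i)(t + \beta)^i$, the coefficient of $t^d$ gives $\sigma(P_d) = P_d$, and the coefficient of $t^{d-1}$ gives $\sigma(P_{d-1}) + d\,\beta\,P_d = P_{d-1}$. Setting $g := -P_{d-1}/(d\,P_d) \in \FF$ (well-defined since $\QQ \subseteq \FF$ and $P_d \in \FF^*$), a rearrangement using $\sigma(P_d) = P_d$ yields $\sigma(g) = g + \beta$, contradicting the hypothesis.

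The main obstacle is the rational-function step: reducing from $\sigma$-fixed $f \in \FF(t)$ to a $\sigma$-fixed polynomial. This requires carefully combining the degree- and leading-coefficient preservation properties of $\sigma$ on $\FF[t]$ with the uniqueness of the lowest-terms-with-monic-denominator representation in $\FF(t)$. Once that is in hand, the coefficient comparison in $\FF[t]$ is essentially a one-step calculation and hands over the explicit $g$.
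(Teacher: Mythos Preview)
The paper does not supply its own proof of this theorem; it is stated with a citation to Karr's original papers \cite{Karr:81,Karr:85} and then used as a black box. Your argument is correct and is essentially the standard proof of this classical result: the ``only if'' direction via the explicit new constant $t-g$, and the ``if'' direction by reducing a rational constant to a polynomial constant via uniqueness of the monic lowest-terms representation, followed by comparing the $t^{d}$ and $t^{d-1}$ coefficients to extract $g=-P_{d-1}/(d\,P_d)$. The characteristic-zero assumption (needed for $d\neq 0$ in $\FF$) is indeed available from the paper's standing hypothesis that $\QQ\subseteq\FF$, and the degree- and leading-coefficient preservation of $\sigma$ on $\FF[t]$ (crucial for the lowest-terms uniqueness step) follows immediately from $\sigma(t)=t+\beta$.
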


\noindent If $\dfield{\FF}{\sigma}$ is a \pisiSE-field, the existence of such an element $g\in\FF$ with $\sigma(g)=g+\beta$ can be decided constructively with Karr's telescoping algorithm~\cite{Karr:81}. However, the algorithms are not tuned for large fields. In this article we aim at developing refined telescoping algorithms in order to construct large \pisiSE-fields efficiently. To demonstrate the underlying construction process, consider the following example.

\begin{example}\label{Exp:ConstructionProb}
Consider $S(k)=\sum_{i=1}^kF(i)$ with $F(i)=(i^2+1)\,i!\,H_i^2$. We rephrase $F(k)$ in the \pisiSE-field from Ex.~\ref{Exp:DF}.4: replacing $k!$ and $H_k$ by $p$ and $h$, respectively, we get $\tilde{f}=(k^2+1)\,p\,h^2$. In particular, $F(k+1)=\frac1{k+1}\big(k^2+2 k+2\big) k! (H_k (k+1)+1)^2$ is given by
\begin{equation}\label{Equ:ConcreteTelef}
f=\sigma(\tilde{f})=\tfrac1{k+1}\big(k^2+2 k+2\big) p (h (k+1)+1)^2.
\end{equation} 
Using our summation algorithm (for the concrete execution steps see Example~\ref{Exp:PolySummation}) we prove that there does not exist a $g\in\QQ(k)(p)(h)$ with $\sigma(g)=g+f$. Consequently, we can construct the \sigmaSE-extension $\dfield{\QQ(k)(p)(h)(t)}{\sigma}$ of $\dfield{\QQ(k)(p)(h)}{\sigma}$ with $\sigma(t)=t+f$ by Theorem~\ref{Thm:Karr}. In particular, like $p$ and $h$ represent $k!$ and $H_k$, respectively, $t$ represents the sum $S(k)$ with the appropriate shift-relation.
\end{example}

More generally, consider an indefinite nested sum, say $S(k)=\sum_{i=1}^k F(i)$ where the summand $F(n)$ with $n\in\NN$ evaluates to elements of the field $\KK$. Now suppose that $F(k)$ is already represented in a \pisiSE-field $\dfield{\FF}{\sigma}$ over $\KK$ with $\tilde{f}\in\FF$, i.e, $f:=\sigma(\tilde{f})\in\FF$ represents $F(k+1)$. More precisely, as worked out in~\cite{Schneider:13a} one can attach a mapping such that each element from $\FF$ represents an indefinite nested product-sum expression. Then by the telescoping algorithms given below one can decide algorithmically if there is a $g\in\FF$ such that $\sigma(g)=g+f$ holds. If yes, one can construct an indefinite nested product-sum expression $G(k)$ with $G(k+1)-G(k)=F(k+1)$. Since also $S(k+1)-S(k)=F(k+1)$, it follows that $G(k)=S(k)+c$ for some $c\in\KK$. Looking at the initial value $k=1$ gives $c:=G(1)-S(1)\in\KK$. In other words, $g+c$ represents the sum $S(k)$ in the given field $\FF$. Otherwise, if there does not exist a $g\in\FF$ with $\sigma(g)=g+f$, we can construct the \sigmaSE-extension $\dfield{\FF(t)}{\sigma}$ of $\dfield{\FF}{\sigma}$ with $\sigma(t)=t+f$ by Theorem~\ref{Thm:Karr}. In this field the generator $t$ represents $S(k)$. Since $\dfield{\FF(t)}{\sigma}$ is again a \pisiSE-field over $\KK$, we can repeat this process iteratively and represent expressions of indefinite nested sums in a \pisiSE-field; for further details see~\cite{Schneider:13a}. The \piE-case can be treated similarly, however further aspects have to be considered; see~\cite{Schneider:05c,Petkov:10,Erocal:11}.

Depending on the ground field $\dfield{\GG}{\sigma}$ in Definition~\ref{Def:pisiExt}, the class of indefinite nested sums and products can be enhanced. Besides the case $\const{\GG}{\sigma}=\GG$ (the usual case), the following classes have been considered so far.

\begin{example}\label{Exp:ExpGroundField}
\begin{enumerate}
\item  The \notion{free difference field} $\dfield{\GG}{\sigma}$ over $\KK$: here we are given a rational function field 
$\GG=\KK(\dots,x_{-1},x_0,x_1,\dots)$ with $\sigma(c)=c$ for all $c\in\KK$, and $\sigma(x_i)=x_{i+1}$. In this field one can model indefinite nested sums and products over unspecified sequences; see~\cite{Schneider:06d,Schneider:06e}.
\item The \notion{radical difference field} $\dfield{\GG}{\sigma}$ over $\KK$ of order $d\in\NN^*$: starting with the \pisiSE-field $\dfield{\KK(x)}{\sigma}$ over $\KK$ with $\sigma(x)=x+1$ one takes the infinite field extension $\KK(x)(\dots,y_{-1},y_0,y_1,\dots)$ subject to the relations $y_k^d=x$ and $\sigma(y_k)=y_{k+1}$ for all $k\in\ZZ$.  With this field one can model indefinite nested sums and products involving objects like $\sqrt[d]{k}$; see~\cite{Schneider:07f}. 
\end{enumerate}
\end{example}

\section{Solving parameterized first-order equations}\label{Sec:FPLDE}

As motivated in the introduction, we aim at solving \notion{parameterized telescoping} (PT) equations in a difference field $\dfield{\AS}{\sigma}$.
The classical version ($\psi=0$) can be formulated as follows. Given $\vect{f}=(f_1,\dots,f_n)\in\AS^n$, find all $c_j\in\const{\AS}{\sigma}=:\KK$ and $q\in\AS$ (note that $q$ takes over the role of $g$ used in~\eqref{Equ:ParaTeleSeq}) such that
\begin{equation}\label{Equ:TeleEqu}
\sigma(q)-q=c_1\,f_1+\dots+c_n\,f_n
\end{equation}
holds. Here all proposed telescoping algorithms rely on the fact that one can solve the more general case of
\notion{first-order parameterized linear difference equations} (FPLDE) in a difference field $\dfield{\AS}{\sigma}$: Given 
 $\vect0\neq\vect{a}=(a_0,a_1)\in\AS^2$  and $\vect{f}=(f_1,\dots,f_n)\in\AS^n$, find all $c_i\in\const{\AS}{\sigma}=:\KK$ and $q\in\AS$ such that
\begin{equation}\label{Equ:PLDEEqu}
a_1\sigma(q)+a_0q=c_1\,f_1+\dots+c_n\,f_n
\end{equation}
holds. More generally, given a subspace $W$ of $\AS$ over $\KK$, we are interested in the following solution sets (for \FPLDE\ and \PT):
\begin{align*}
V(\vect{a},\vect{f},W)&:=\{(c_1,\dots,c_n,q)\in\KK^n\times W\;|\;\text{\eqref{Equ:PLDEEqu} holds}\},\\
V(\vect{f},W)&:=V((-1,1),\vect{f},W)=\{(c_1,\dots,c_n,q)\in\KK^n\times W\;|\;\text{\eqref{Equ:TeleEqu} holds}\}.
\end{align*}
Note that $V:=V(\vect{a},\vect{f},W)$ is a subspace of $\KK^n\times\AS$ over $\KK$ and its dimension is less than or equal to $n+1$: there is at most one homogeneous solution $(0\dots,0,h)$ and there are at most $n$ linearly independent particular solutions; for a proof see~\cite{Schneider:02} which is based on~\cite[Theorem XII]{Cohn:65}. Note: if $V$ consists only of the zero vector, its basis is the empty set by convention. 

\smallskip

Summarizing, if $W=\AS$, we aim at solving the following problems.

\smallskip

\noindent\textbf{Problem~\FPLDE\ in $\dfield{\AS}{\sigma}$.} Given a difference field (resp.\ ring) $\dfield{\AS}{\sigma}$, $\vect{0}\neq\vect{a}=(a_0,a_1)\in\AS^2$  and $\vect{f}=(f_1,\dots,f_n)\in\AS^n$; find a basis of $V(\vect{a},\vect{f},\AS)$.

\medskip

\noindent\textbf{Problem PT in $\dfield{\AS}{\sigma}$.} Given a difference field (resp.\ ring) $\dfield{\AS}{\sigma}$ and a vector $\vect{f}=(f_1,\dots,f_n)\in\AS^n$; find a basis of $V(\vect{f},\AS)=V((-1,1),\vect{f},\AS)$.

\subsection{A general strategy}\label{Sec:GeneralStrat}

Based on~\cite{Karr:81} the following strategy has been proposed in~\cite{Schneider:01,Schneider:04a,Schneider:05a} to solve Problem~\FPLDE\ for a \pisiSE-extension $\dfield{\AS(t)}{\sigma}$ of $\dfield{\AS}{\sigma}$ with $\sigma(t)=\alpha\,t+\beta$ and $\KK:=\const{\AS}{\sigma}$: given $\vect0\neq\vect{a}=(a_0,a_1)\in\AS(t)^2$ and $\vect{f}\in\AS(t)^n$, find a basis of $V:=V(\vect{a},\vect{f},\AS(t))$. 

\medskip

\noindent\textbf{A simple special case.} If $a_0\,a_1=0$, a basis of $V$ can be obtained by solving a linear system of equations over $\AS(t)$. Thus (under the assumption that one can solve linear systems in $\AS(t)$), it suffices to consider the case $a_0\,a_1\neq0$.

\medskip

\noindent\textbf{Step 1: denominator bounding.} Next, we suppose that we can solve the following denominator bound problem.

\smallskip

\noindent\textbf{Problem DenB}. Given a \pisiSE-extension $\dfield{\AS(t)}{\sigma}$ of $\dfield{\AS}{\sigma}$, $\vect{a}\in(\AS(t)^*)^2$, and $\vect{f}\in\AS(t)^n$. Find $d\in\AS[t]^*$ such that for all $(c_1,\dots,c_n,q)\in V$ we have that $q\,d\in\AS[t]$.

\smallskip

\noindent In short, $d$ contains all arising denominators of the solution set. Since\ the $\KK$-vector space $V$ has finite dimension, it follows that such a $d$ exists. Subsequently a $d$ with this property is called \notion{denominator bound} or \notion{universal denominator}.

\noindent Suppose that one succeeds in computing a denominator bound $d$. Then the remaining (and often most challenging) task is to calculate the possible numerators of the rational solutions, i.e., we are interested in finding all solutions of the $\KK$-vector space
\begin{equation}\label{Equ:RatReduction}
\begin{split}
V'&=V((\tfrac{a_0}{d},\tfrac{a_1}{\sigma(d)}),\vect{f},\AS[t])\\
&=\{(c_1,\dots,c_n,p)\in\KK^n\times\AS[t]\,|\,a_1\sigma(\tfrac{p}{d})+a_0\tfrac{p}{d}=c_1\,f_1+\dots+c_n\,f_n\}.
\end{split}
\end{equation}
Since the dimension of $V$ is bounded by $n+1$, it follows immediately that also the dimension of $V'$ is bounded by $n+1$. Moreover, if $\{(e_{i1},\dots,e_{in},p_i)\}_{1\leq i\leq \mu}\subseteq\KK^n\times\AS[t]$ is a basis of $V'$ with dimension $\mu$, then it is easy to see that $\{(e_{i1},\dots,e_{in},\frac{p_i}{d})\}_{1\leq i\leq \mu}$ is a basis of $V$. In a nutshell, given $d$, it remains to derive a basis of $V'$ and the construction of a basis of $V$ can be obtained. Subsequently, we clear denominators and obtain $\vect{a'}=(a'_0,a'_1)\in(\AS[t]^*)^2$, $\vect{f'}=(f'_1,\dots,f'_n)\in\AS[t]^n$ such that $V'=V(\vect{a'},\vect{f'},\AS[t])$.

To this end, we aim at finding a basis of 
$P:=V(\vect{a'},\vect{f'},\AS[t])$.
To accomplish this task, the general tactic proceeds as follows.

\medskip

\noindent\textbf{Step 2: degree bounding.} We suppose that we can solve the degree bound problem.

\smallskip

\noindent\textbf{Problem DegB.} Given a \pisiSE-extension $\dfield{\AS(t)}{\sigma}$ of $\dfield{\AS}{\sigma}$, $\vect{a}\in(\AS[t]^*)^2$, and $\vect{f}\in\AS[t]^n$. Find 
$m\in\NN\cup\{-1\}$ such that $V(\vect{a},\vect{f},\AS[t])=V(\vect{a},\vect{f},\AS[t]_m).$

\smallskip

\noindent Since the dimension of $P$ is bounded, such an $m$ exists; in the following $m$ is also called \notion{degree bound} of $P$.
For the following considerations it will be crucial that $m$ satisfies the following additional property:
\begin{equation}\label{Equ:coverDegreeInEqu}
m\geq\max(\deg(f'_1),\dots,\deg(f'_n))-\max(\deg(a'_0),\deg(a'_1)).
\end{equation}

\noindent If $m=-1$, i.e., $\AS[t]_{-1}=\{0\}$, a basis of $V(\vect{a'},\vect{f'},\{0\})$ can be calculated by linear algebra. Otherwise, if $m\geq0$, we are in the position to compute a basis of $V(\vect{a'},\vect{f'},\AS[t]_m)$ provided that we can solve Problem~\FPLDE\ in $\dfield{\AS}{\sigma}$. Namely, if $m=0$
($\AS[t]_m=\AS[t]_0=\AS$), we are in the base case and can calculate a basis. Otherwise, if $m\geq1$, we utilize the following reduction introduced in~\cite{Karr:81}.

\smallskip

\noindent\textbf{Step 3: degree reduction.} We search for all solutions $c_i\in\KK$ and $q=q_0+q_1t+\dots+q_mt^m$ for~\eqref{Equ:PLDEEqu}. Here the crucial idea is to compute a set (more precisely a basis of a vector space) that contains all the possible choices of the leading coefficient $g_m$, to plug in this sub-result and to compute the remaining coefficients $q_i$ with $i<m$ by recursion. More precisely, 
let $(c_1,\dots,c_n,g\,t^m+h)\in P$ with $h\in\AS[t]_{m-1}$ and $g(=q_m)\in\AS$. Thus
\begin{equation}\label{Equ:ParaEquDegAnsatz}
a'_1\sigma(g\,t^m+h)+a'_0(g\,t^m+h)=c_1\,f'_1+\dots+c_n\,f'_n.
\end{equation}
Now define $l:=\max(\deg(a'_0),\deg(a'_1))$ and observe that the degree of the arising terms is bounded by $m+l$; this is guaranteed by~\eqref{Equ:coverDegreeInEqu}. Thus by coefficient comparison w.r.t.\ $t^{m+l}$ and using $\sigma(t)=\alpha\,t+\beta$ we get the following constraint on $g(=q_m)$:
\begin{equation}\label{Equ:LeadingConstraint}
\coeff(a'_1,l)\,\alpha^m\,\sigma(g)+\coeff(a'_0,l)\,g=c_0\,\coeff(f'_1,l+m)+\dots+c_n\,\coeff(f'_n,l+m).
\end{equation}
\noindent\textit{Step 3.1: a solution for the leading coefficient.} Now we solve this FPLDE problem in the ground field $\dfield{\AS}{\sigma}$, i.e., we compute a basis of 
$\tilde{V}=V(\vect{\tilde{a}},\vect{\tilde{f}},\AS)$ with
\begin{equation}\label{Equ:AFTilde}
\begin{split}
\vect{\tilde{f}}&:=(\coeff(f'_1,m+l),\dots,\coeff(f'_n,m+l))\in\AS^n,\\
\vect{0}\neq\vect{\tilde{a}}&:=(\coeff(a'_0,l),\alpha^m\,\coeff(a'_1,l))\in\AS^2.
\end{split}
\end{equation}

\noindent\textit{Special case: finding the homogeneous solution.} If it turns out that $\tilde{V}=\{\vect{0}\}$, it follows that there is no way to find a $\vect{0}\neq\vect{c}\in\KK^n$ such that there is a $g\in\AS[t]_m$ with~\eqref{Equ:PLDEEqu}. However, there might still exist a solution of the homogeneous version, i.e., $a'_1\,\sigma(h)+a'_0\,h=0$. But since $\tilde{V}=\{\vect{0}\}$, the highest possible term (being of degree $m$) is $0$ and consequently $\deg(h)<m$. Thus by recursion we compute a basis of $V(\vect{a'},(0),\AS[t]_{m-1})$. If its basis is $\{\}$, i.e., there is no nonzero homogeneous solution, $P=\{\vect{0}\}\subseteq\KK^n\times\AS[t]$. Thus we return the empty basis $\{\}$ for $P$. Otherwise, we can extract an $h\in\AS[t]_{m-1}^*$ and $\{(0,\dots,0,h)\}\subseteq\KK^n\times\AS[t]_{m-1}$ is a basis of $P$.

\smallskip

If $\tilde{V}\neq\{\vect{0}\}$, let $\{(c_{i1},\dots,c_{in},g_i)\}_{1\leq i\leq\lambda}\subseteq\KK^n\times\AS$ be a basis with $\lambda\geq1$. Then there are $d_1,\dots,d_{\lambda}$ such that
$g=d_1\,g_1+\dots+d_{\lambda}\,g_{\lambda}$
and $c_j=d_1c_{1j}+\dots+d_{\lambda}c_{\lambda j}$ for $1\leq j\leq n$. In vector notation this reads as

\vspace*{-0.2cm}

\begin{equation}\label{Equ:gcVectorNotation}
g=\vect{d}\,\vec{g}\text{ and }\vect{c}=\vect{d}\,\vect{C}
\end{equation}

\vspace*{-0.cm}

\noindent for $\vect{d}=(d_1,\dots,d_{\lambda})\in\KK^{\lambda}$ and $\vect{C}=(c_{ij})\in\KK^{\lambda\times n}.$ Moving the occurring $g$ in ~\eqref{Equ:ParaEquDegAnsatz} to the right hand side and replacing $g$ and $\vect{c}$ by the right hand sides given in~\eqref{Equ:gcVectorNotation} yield
$$a'_1\sigma(h)+a'_0h=\vect{c}\,\vect{f'}-(a'_1\,\sigma(g\,t^m)+a'_0\,g\,t^m)=\vect{d}\,\vect{C}\,\vect{f'}-\big(a'_1\sigma(\vect{d}\,\vect{g}\,t^m)+a'_0\,\vect{d}\,\vect{g}\,t^m\big)=\vect{d}\,\vect{\phi}$$

\vspace*{-0.2cm}

\noindent for  
\begin{equation}\label{Equ:FPrime}
\vect{\phi}:=\vect{C}\,\vect{f'}-\big(a'_1\,\sigma(\vect{g}\,t^m)+a'_0\,\vect{g}\,t^m)\in\AS[t]_{l+m-1}^{\lambda}.
\end{equation}
\noindent\textit{Step 3.2: the solution of the remaining coefficients by recursion.}
In other words, we obtain a first-order parameterized linear difference equation, but this time the desired solution is reduced in its degree, i.e., $h\in\AS[t]_{m-1}$. In short, we need a basis of $V(\vect{a'},\vect{\phi},\AS[t]_{m-1})$. Now we apply the degree reduction recursively, and obtain a basis 
$\{(d_{1i},\dots,d_{i\lambda},h_i)\}_{1\leq i\leq\mu}\subseteq\KK^{\lambda}\times\AS[t]_{m-1}$ of the corresponding solution space $V(\vect{a'},\vect{\phi},\AS[t]_{m-1})$. In vector notation the underlying difference equations read as
\begin{equation}\label{Equ:PhiDiffEqu}
a'_1\,\sigma(\vect{h})+a'_0\,\vect{h}=\vect{D}\,\vect{\phi}
\end{equation}
for $\vect{D}=(d_{ij})\in\KK^{\mu\times\lambda}$ and $\vect{h}:=(h_1,\dots,h_{\mu})\in\AS[t]_{m-1}^{\mu}$.

\smallskip

\noindent\textit{Step 3.3: merging the sub-solutions.} Compute
\begin{equation}\label{Equ:CombineDegRedSol}
\vect{E}=(e_{ij}):=\vect{D}\,\vect{C}\in\KK^{\mu\times n}\text{ and }
\vect{p}=(p_1,\dots,p_{\mu}):=\vect{D}\,\vect{g}\,t^m+\vect{h}\in\AS[t]_m^{\mu}.
\end{equation}
Then it follows that 
\begin{align*}
a'_1\sigma(\vect{p})+a'_0\,\vect{p}&\stackrel{\eqref{Equ:CombineDegRedSol}}{=}a'_1\,\sigma(\vect{h})+a'_0\,\vect{h}+\vect{D}(a'_1\,\sigma(\vect{g}\,t^m)+a'_0\,\vect{g}\,t^m)\\
&\stackrel{\eqref{Equ:PhiDiffEqu}}{=}\vect{D}(\vect{\phi}+a'_1\,\sigma(\vect{g}\,t^m)+a'_0\,\vect{g}\,t^m)\stackrel{\eqref{Equ:FPrime}}{=}\vect{D}\,\vect{C}\,\vect{f'}\stackrel{\eqref{Equ:CombineDegRedSol}}{=}\vect{E}\,\vect{f}',
\end{align*}
i.e., $B:=\{(e_{i1},\dots,e_{in},p_i)\}_{1\leq i\leq\mu}$ is a subset of $P=V(\vect{a'},\vect{f'},\AS[t]_m)$. By further arguments (see~\cite[Theorem 6.2]{Schneider:02}) it follows that $B$ is a basis of $P$.

\begin{example}\label{Exp:FPLDESummation}
Consider the \pisiSE-field $\dfield{\QQ(k)}{\sigma}$ over $\QQ$ with $\sigma(k)=k+1$. Using the above strategy we can calculate a basis of $V=V(\vect{a},\vect{f},\QQ(k))$ with $\vect{a}=(-1, 1 + k)\in\QQ[k]^2$ and $\vect{f}=(2 k, 0)\in\QQ[k]^2$ as follows. A denominator bound of $V$ is $d=1$; here one can use the algorithms mentioned in Remark~\ref{Remark:DenDeg}. Hence $V=V(\vect{a'},\vect{f'},\QQ[k])$ with $\vect{a'}=\vect{a}$ and $\vect{f'}=\vect{f}$. Moreover, a degree bound is $m=1$ (see again Remark~\ref{Remark:DenDeg}); note that the required property~\eqref{Equ:coverDegreeInEqu} holds. Thus $V=V(\vect{a'},\vect{f'},\QQ[k]_1)$. We are now in the position to start the degree reduction process with $m=1$ (step 3). By coefficient comparison (see~\eqref{Equ:AFTilde} with $l=1$) we get that
$\vect{\tilde{a}}=(0,1)\in\QQ^2$ and $\vect{\tilde{f}}=(2,0)\in\QQ^2$. Next, we calculate a basis of $\tilde{V}=V((0,1),(2,0),\QQ)$: By solving the corresponding linear system we get the basis $\{(1/2,0,1),(0,1,0)\}$. Hence we extract the matrix $\vect{C}=\left(\begin{smallmatrix}1/2&0\\0&1\end{smallmatrix}\right)$ and $\vect{g}=(1,0)$. Next, we compute $\vect{\phi}=\vect{C}\,\vect{f'}-((k+1)\sigma(\vect{g}\,k)-\vect{g}\,k)=(0,0)$; compare~\eqref{Equ:FPrime}. What remains to calculate is a basis of $V((-1,k+1),(0,0),\QQ)$. Here we activate the degree reduction process for $m=0$. Taking, e.g., the basis $\{(1,0,0),(0,1,0)\}$, we obtain the matrix $\vect{D}=\left(\begin{smallmatrix}1&0\\0&1\end{smallmatrix}\right)$ and the vector $\vect{h}=(0,0)$. Finally, we get the matrix $\vect{E}=\left(\begin{smallmatrix}1/2&0\\0&1\end{smallmatrix}\right)$ and the vector $\vect{p}=(k,0)$; see~\eqref{Equ:CombineDegRedSol}. To this end, we derive the basis $\{(1/2,0,k),(0,1,0)\}$ of $V$.
\end{example}

\noindent We remark that for the rational case $\dfield{\KK(t)}{\sigma}$ with $\sigma(t)=t+1$ (and also the $q$-rational case) a direct approach is more efficient (and thus implemented in \SigmaP): Plugging in the ansatz $q=q_0+q_1t+\dots+q_mt^m$ with $q_i\in\KK$ into~\eqref{Equ:ParaEquDegAnsatz} yields a linear system for the unknowns $q_i,c_i\in\KK$ and solving it provides a basis of $V(\vect{a'},\vect{f'},\KK(k))$.

\subsection{Turning the strategy to algorithms}

The reduction method above can be summarized as follows.

\smallskip

\begin{proposition} 
Let $\dfield{\AS(t)}{\sigma}$ be a \pisiSE-extension of $\dfield{\AS}{\sigma}$. If one can solve linear systems in $\AS(t)$, can solve Problems DenB and DegB in $\dfield{\AS(t)}{\sigma}$, and can solve Problem~\FPLDE\ in $\dfield{\AS}{\sigma}$, then one can solve Problem \FPLDE\ in $\dfield{\AS(t)}{\sigma}$.
\end{proposition}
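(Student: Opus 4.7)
The proof is essentially a matter of assembling the ingredients laid out in Section~\ref{Sec:GeneralStrat} into a single algorithm and verifying that its output is a basis of $V=V(\vect{a},\vect{f},\AS(t))$. My plan is to proceed by induction on the degree bound $m$ produced in Step~2, after first reducing from the rational case to the polynomial case.

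First I would handle the degenerate case $a_0\,a_1=0$ by noting that the defining equation~\eqref{Equ:PLDEEqu} then becomes either $a_1\,\sigma(q)=\vect{c}\,\vect{f}$ or $a_0\,q=\vect{c}\,\vect{f}$. Either of these, regarded as a condition on the unknowns $(\vect{c},q)\in\KK^n\times\AS(t)$, translates (after expanding $q$ over a suitable basis and using that $\sigma$ is a $\KK$-linear map on $\AS(t)$) to a linear system over $\AS(t)$; the hypothesis that such systems are solvable yields a basis of $V$. From now on I assume $a_0\,a_1\neq 0$.

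Next I would invoke Problem DenB to obtain a denominator bound $d\in\AS[t]^*$, so that the substitution $q=p/d$ induces a $\KK$-linear bijection between $V$ and the polynomial solution space $P=V(\vect{a'},\vect{f'},\AS[t])$, where $\vect{a'}=(a'_0,a'_1)\in(\AS[t]^*)^2$ and $\vect{f'}\in\AS[t]^n$ are obtained by clearing denominators in $(a_0/d,a_1/\sigma(d))$ and $\vect{f}$. A basis of $P$ then transports to a basis of $V$ via $p_i\mapsto p_i/d$. Applying Problem DegB produces $m\in\NN\cup\{-1\}$ satisfying $P=V(\vect{a'},\vect{f'},\AS[t]_m)$; I would additionally demand that $m$ satisfies~\eqref{Equ:coverDegreeInEqu}, which can always be arranged by replacing $m$ with $\max(m,\max_i\deg(f'_i)-\max(\deg(a'_0),\deg(a'_1)))$, since enlarging the degree bound never shrinks the solution space.

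The core of the proof is then an induction on $m$. For $m=-1$ the space $\AS[t]_{-1}=\{0\}$ and $P$ is cut out by a linear system over $\AS$, solvable by hypothesis. For $m\geq 0$, I apply the degree-reduction construction of Section~\ref{Sec:GeneralStrat} verbatim: coefficient comparison at $t^{m+l}$ (using $\sigma(t)=\alpha t+\beta$, which is why the inequality~\eqref{Equ:coverDegreeInEqu} is essential --- it guarantees that no $t^{m+l}$ contribution comes from the right-hand side beyond what is accounted for) produces the auxiliary \FPLDE~\eqref{Equ:LeadingConstraint} in $\dfield{\AS}{\sigma}$, whose basis $\{(c_{i1},\dots,c_{in},g_i)\}$ is computable by assumption. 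From it I form $\vect{\phi}\in\AS[t]_{l+m-1}^\lambda$ as in~\eqref{Equ:FPrime} and recursively solve the \FPLDE\ problem $V(\vect{a'},\vect{\phi},\AS[t]_{m-1})$ on a strictly smaller degree bound, producing matrices $\vect{D}$ and vector $\vect{h}$. Combining them as in~\eqref{Equ:CombineDegRedSol} gives a finite subset $B\subseteq P$; the displayed chain of equalities shows $B\subseteq P$, and the fact that $B$ is actually a \emph{basis} (not merely a spanning subset) is where I would cite~\cite[Theorem 6.2]{Schneider:02} as done in the text.

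The only subtlety I foresee is termination and well-foundedness of the recursion: the outer recursion on $m$ is on a strictly decreasing nonneg\-ative integer and therefore stops, while the inner call at each stage is to an \FPLDE\ problem in the smaller field $\dfield{\AS}{\sigma}$, which is solvable by hypothesis. The degenerate ``homogeneous only'' branch (when $\tilde V=\{\vect 0\}$) is handled by the same recursion on $V(\vect{a'},(0),\AS[t]_{m-1})$. Thus every branch reduces either in $m$ or to the ground-field oracle, and the algorithm terminates with a basis of $P$, hence of $V$.
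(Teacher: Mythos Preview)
Your proposal is correct and follows essentially the same approach as the paper: the proposition is stated immediately after Section~\ref{Sec:GeneralStrat} as a summary of the strategy described there, and your write-up faithfully reproduces that strategy (handle $a_0a_1=0$ by linear algebra, apply DenB to pass to $\AS[t]$, apply DegB to bound $m$, then induct on $m$ via the degree reduction of Step~3, invoking~\cite[Theorem~6.2]{Schneider:02} for the basis claim). The only cosmetic difference is that the paper separates the base case $m=0$ (where $\AS[t]_0=\AS$ and one calls the \FPLDE-solver in $\dfield{\AS}{\sigma}$ directly) from the inductive step $m\geq 1$, whereas you fold both into ``$m\geq 0$''; this does not affect correctness.
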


\noindent The following properties are needed to apply this tactic in a nested \pisiSE-extension.

\begin{definition}\label{Def:FPLDE-solvable}
A \pisiSE-ext.\ $\dfield{\GG(t_1)\dots(t_e)}{\sigma}$ of $\dfield{\GG}{\sigma}$ is called \notion{\FPLDE-solv\-able}, if there are algorithms that solve linear systems with multivariate polynomials over $\GG$, that solve Problems DenB and DegB in the \pisiSE-extensions $\dfield{\GG(t_1)\dots(t_i)}{\sigma}$ of $\dfield{\GG(t_1)\dots(t_{i-1})}{\sigma}$ with $1\leq i\leq e$, and that solve Problem \FPLDE\ in $\dfield{\GG}{\sigma}$.
\end{definition}

\noindent Then by recursive application of the method above we obtain the following theorem. 

\begin{theorem}
Let $\dfield{\FF}{\sigma}$ be an \FPLDE-solvable \pisiSE-extension  of $\dfield{\GG}{\sigma}$. Algorithm \SolveFPLDE\ (using \DegreeReductionFPLDE)
solves Problem~\FPLDE\ in $\dfield{\FF}{\sigma}$.
\end{theorem}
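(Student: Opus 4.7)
The plan is to prove the theorem by induction on the tower height $e$, where $\FF=\GG(t_1)\dots(t_e)$. The base case $e=0$ is immediate from the assumption that Problem~\FPLDE\ is solvable in $\dfield{\GG}{\sigma}$. For the inductive step, I would assume that Problem~\FPLDE\ is solvable in $\dfield{\FF_{e-1}}{\sigma}:=\dfield{\GG(t_1)\dots(t_{e-1})}{\sigma}$ and then show, by encoding the strategy of Section~\ref{Sec:GeneralStrat} as the algorithm \SolveFPLDE, that one can solve Problem~\FPLDE\ in $\dfield{\FF_e}{\sigma}=\dfield{\FF_{e-1}(t_e)}{\sigma}$.

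The reduction proceeds exactly along the steps laid out in Section~\ref{Sec:GeneralStrat}. Given input $(\vect{a},\vect{f})$ with $\vect{a}\in\FF_e^2\setminus\{\vect{0}\}$ and $\vect{f}\in\FF_e^n$, first handle the degenerate case $a_0a_1=0$ by linear algebra over $\FF_e$ (possible by the \FPLDE-solvability assumption that linear systems over $\GG(t_1)\dots(t_e)$ can be solved). Otherwise, invoke the available algorithm for Problem~DenB to obtain a denominator bound $d\in\FF_{e-1}[t_e]^*$ and pass to the polynomial problem $V(\vect{a'},\vect{f'},\FF_{e-1}[t_e])$ as in~\eqref{Equ:RatReduction}. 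Then apply the algorithm for Problem~DegB to obtain a degree bound $m\in\NN\cup\{-1\}$ satisfying~\eqref{Equ:coverDegreeInEqu}; if $m=-1$ the problem collapses to linear algebra, and if $m=0$ the polynomial ring $\FF_{e-1}[t_e]_0=\FF_{e-1}$ reduces the problem directly to an instance of Problem~\FPLDE\ over $\dfield{\FF_{e-1}}{\sigma}$ which is solvable by induction.

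The central recursion is the procedure \DegreeReductionFPLDE. For $m\geq1$ extract the leading-coefficient constraint~\eqref{Equ:LeadingConstraint}, forming the smaller \FPLDE\ instance $(\vect{\tilde{a}},\vect{\tilde{f}})$ in $\dfield{\FF_{e-1}}{\sigma}$ as in~\eqref{Equ:AFTilde}; a basis of $\tilde{V}$ is obtained by the inductive hypothesis. If $\tilde{V}=\{\vect{0}\}$, no inhomogeneous solution can have degree exactly $m$, so recurse on $V(\vect{a'},(0),\FF_{e-1}[t_e]_{m-1})$ to locate a possible nontrivial homogeneous solution. If $\tilde{V}\neq\{\vect{0}\}$, form $\vect{C}$ and $\vect{g}$, build $\vect{\phi}\in\FF_{e-1}[t_e]_{l+m-1}^{\lambda}$ as in~\eqref{Equ:FPrime}, and recurse on $V(\vect{a'},\vect{\phi},\FF_{e-1}[t_e]_{m-1})$ at strictly smaller degree. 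The sub-solutions are glued back via~\eqref{Equ:CombineDegRedSol}; the chain of identities following~\eqref{Equ:CombineDegRedSol} shows that the glued tuples lie in $P$, and~\cite[Theorem 6.2]{Schneider:02} (as cited in the text) yields that they actually form a basis.

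Termination follows because the outer induction reduces the tower by one generator and the inner degree-reduction recursion decreases the degree bound $m$ by one at every step, bottoming out either at $m=-1$ (pure linear algebra) or at $m=0$ (one call to \FPLDE\ in $\dfield{\FF_{e-1}}{\sigma}$, solvable by the inductive hypothesis). The main technical obstacle I expect is a clean bookkeeping argument that the output of \DegreeReductionFPLDE\ is genuinely a basis and not merely a spanning set: the inclusion $B\subseteq P$ is a straightforward calculation using~\eqref{Equ:FPrime},~\eqref{Equ:PhiDiffEqu} and~\eqref{Equ:CombineDegRedSol}, but $\KK$-linear independence requires tracing how the independence of $\{(c_{i1},\dots,c_{in},g_i)\}$ in $\tilde V$ and $\{(d_{i1},\dots,d_{i\lambda},h_i)\}$ in $V(\vect{a'},\vect{\phi},\FF_{e-1}[t_e]_{m-1})$ propagates through the leading-coefficient splitting — this is precisely where the cited proof of~\cite[Theorem 6.2]{Schneider:02} is invoked.
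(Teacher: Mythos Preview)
Your proposal is correct and follows exactly the route the paper indicates: the paper does not spell out a proof beyond stating that the theorem follows ``by recursive application of the method above'' (the proposition preceding it), together with the citation of \cite[Theorem~6.2]{Schneider:02} for the fact that the merged set in Step~3.3 is a basis. Your induction on the tower height $e$ with the inner recursion on the degree bound $m$ is precisely this recursive application made explicit, and you correctly identify the basis verification as the one nontrivial point deferred to \cite{Schneider:02}.
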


\small
\noindent\textbf{Algorithm \NSolveFPLDE}($\vect{a},\vect{f},\FF$)\\
\textbf{Input:} a \pisiSE-extension $\dfield{\FF}{\sigma}$ of $\dfield{\GG}{\sigma}$ with $\FF=\GG(t_1)\dots(t_e)$ which is \FPLDE-solvable; $\vect{0}\neq\vect{a}=(a_0,a_1)\in\FF^2$, $\vect{f}=(f_1,\dots,f_n)\in\FF^n$.\\
\textbf{Output:} a basis of $V(\vect{a},\vect{f},\FF)$ over $\KK:=\const{\GG}{\sigma}$.

\vspace*{-0.2cm}

\begin{enumerate}
\item IF $a_0\,a_1=0$, compute a basis $B$ of $V(\vect{a},\vect{f},\FF)$ by solving a linear system and RETURN $B$.
\item IF $e=0$, compute a basis $B$ of $V(\vect{a},\vect{f},\FF)$ and RETURN $B$.
\item[] Denote $\AS:=\GG(t_1\dots,t_{e-1})$, $t:=t_e$.
\item Get a denominator bound $d\in\AS[t]^*$ of $V(\vect{a},\vect{f},\AS(t))$.
\item Clear denominators, i.e., get
$\vect{a'}\in(\AS[t]^*)^2$, $\vect{f'}\in\AS[t]^2$ with $V((\frac{a_0}{d},\frac{a_1}{\sigma(d)}),\vect{f},\AS[t])=V(\vect{a'},\vect{f'},\AS[t])$.
\item Get a degree bound $m\geq-1$ of $V(\vect{a'},\vect{f'},\AS[t])$ with~\eqref{Equ:coverDegreeInEqu}.
\item Get
$B':=\NDegreeReductionFPLDE(m,\vect{a'},\vect{f'},\AS(t))$, say, $B':=\{(e_{i1},\dots,e_{in},p_i)\}_{1\leq i\leq\mu}.$
\item RETURN $\{(e_{i1},\dots,e_{in},\frac{p_i}{d})\}_{1\leq i\leq\mu}$.
\end{enumerate}

\smallskip

\textbf{Algorithm \NDegreeReductionFPLDE}($m,\vect{a'},\vect{f'},\AS(t)$)\\
\textbf{Input:} a \pisiSE-extension $\dfield{\AS(t)}{\sigma}$ of $\dfield{\GG}{\sigma}$  with $\sigma(t)=\alpha\,t+\beta$ which is \FPLDE-solvable;  $\vect{a'}=(a'_0,a'_1)\in(\AS[t]^*)^2$, $\vect{f'}=(f'_1,\dots,f'_n)\in\AS[t]^n$; $m\in\NN\cup\{-1\}$ such that~\eqref{Equ:coverDegreeInEqu} holds.\\
\textbf{Output:} a basis of $V(\vect{a'},\vect{f'},\AS[t]_m)$ over $\KK:=\const{\GG}{\sigma}$.

\vspace*{-0.2cm}

\begin{enumerate}

\item IF $m=-1$, compute a basis $B$ of $V(\vect{a},\vect{f'},\{0\})$ by linear algebra and RETURN $B$.
\item IF $m=0$,  get $B:=\NSolveFPLDE(\vect{a'},\vect{f'},\AS)$ and RETURN $B$.

\item Define $l:=\max(\deg(a'_0),\deg(a'_1))$, and take 
$\vect{\tilde{f}}\in\AS^n$, $\vect{0}\neq\vect{\tilde{a}}\in\AS^2$ as in~\eqref{Equ:AFTilde}.
\item Get $\tilde{B}:=\NSolveFPLDE(\vect{\tilde{a}},\vect{\tilde{f}},\AS)$, say $\tilde{B}=\{(c_{i1},\dots,c_{in},g_i)\}_{1\leq i\leq\lambda}\subseteq\KK^n\times\AS$
\item IF $\tilde{B}=\{\}$ THEN execute $\NDegreeReductionFPLDE(m-1,\vect{a'},(0),\AS(t))$\\ \hspace*{1cm} and check  if there is a $h\neq0$ with $a'_1\sigma(h)+a'_0(h)=0$.\\ 
\hspace*{1cm} IF yes, RETURN $\{(0,\dots,0,h\}$ ELSE RETURN $\{\}$.
\item Take $\vect{C}:=(c_{ij})\in\KK^{\lambda\times n}$ and $\vect{g}=(g_1,\dots,g_{\lambda})\in\AS^{\lambda}$, and set $\vect{\phi}\in\AS[t]_{l+m-1}^{\lambda}$ as given in~\eqref{Equ:FPrime}. 
\item Get $G:=\NDegreeReductionFPLDE(m-1,\vect{a'},\vect{\phi},\AS(t))$, say $G=\{(d_{1i},\dots,d_{i\lambda},h_i)\}_{1\leq i\leq\mu}$.
\item If $G=\{\}$, RETURN \{\}.
\item Take $\vect{D}=(d_{ij})\in\KK^{\mu\times\lambda}$, $\vect{h}:=(h_1,\dots,h_{\mu})\in\AS[t]_{m-1}^{\mu}$, and
define $(e_{ij})\in\KK^{\mu\times n}$ and $(p_1,\dots,p_{\mu})\in\AS[t]_m^{\mu}$ as given in~\eqref{Equ:CombineDegRedSol}. 
\item RETURN $\{(e_{i1},\dots,e_{in},p_i)\}_{1\leq i\leq\mu}$.

\end{enumerate}

\normalsize

In this article we aim at refinements and improvements of the presented reduction tactic for the special case of parameterized telescoping. Here we assume that the given \pisiSE-extension $\dfield{\FF}{\sigma}$ of $\dfield{\GG}{\sigma}$ is \FPLDE-solvable. To make this assumption more concrete, we present certain classes of difference fields in which the problems mentioned in Definition~\ref{Def:FPLDE-solvable} can be solved by available algorithms, i.e., the telescoping algorithms of the next sections are applicable.

As worked out by M. Karr~\cite{Karr:81} this is the case if one restricts to the case that $\KK:=\const{\GG}{\sigma}=\GG$ and one requires that $\KK$ is $\sigma$-computable.

\begin{definition}\label{Def:SigmaCompGround} A field $\KK$ is \notion{$\sigma$-computable} if the following properties hold.

\vspace*{-0.2cm}

\begin{enumerate}
\item One can perform the usual operations, in particular linear system solving with multivariate rational functions over $\KK$ and deciding if $k\in\ZZ$ for any $k\in\KK$, 
\item one can factorize multivariate polynomials over $\KK$, and
\item for any $f_i\in\KK^*$ one can compute a $\ZZ$-basis of 
$\{(n_1,\dots,n_r)\in\ZZ^r\,|\,f_1^{n_1}\dots f_r^{n_r}=1\}.$
\end{enumerate}
\end{definition}

\noindent More precisely, if $\KK$ is $\sigma$-computable, then any \pisiSE-field over $\KK$ is \FPLDE-solvable: Problem~\FPLDE\ in $\KK$ reduces to a simple linear algebra problem (property 1). Problems DenB and DegB can be solved by exploiting all three properties. In this regard, the following remarks are in place. 

\begin{remark}\label{Remark:DenDeg} Originally, Karr~\cite{Karr:81} solved Problem~\FPLDE\ if $\dfield{\KK(t_1)\dots(t_e)}{\sigma}$ is a \pisiSE-field over a $\sigma$-computable $\KK$.
Namely, he solved Problem DegB for a \pisiSE-field over $\KK$; for detailed proofs and extensions see~\cite{Schneider:01,Schneider:05b}. In order to deal with denominators rather complicated reduction techniques (extending the degree reduction strategy above) have been utilized. In~\cite{Bron:00} Bronstein generalized Abramov's algorithm~\cite{Abramov:71} which solves partially Problem DenB. Finally, in~\cite{Schneider:04b} algorithms of Karr~\cite{Karr:81} have been utilized to obtain a full solution. In a nutshell, this simplified version presented above is implemented in the summation package \SigmaP.
\end{remark}

\begin{example}\label{Exp:Full1}
Given the \pisiSE-field $\dfield{\QQ(k)(p)}{\sigma}$ over $\QQ$ from Example~\ref{Exp:DF}.4, we calculate a basis of $V=V(\vect{f},\QQ(k)(p))=V(\vect{a},\vect{f},\QQ(k)(p))$ with $\vect{a}=(-1,1)$ and $\vect{f}=(2 k p,-\frac{2}{k+1})\in\QQ(k)(p)^2$. Since $\QQ$ is $\sigma$-computable, this can be accomplished by executing \NSolveFPLDE($(-1,1),\vect{f},\QQ(k)(p)$).
We get the denominator bound $d=1$ and set $\vect{a'}=(a'_0,a'_1)=(-1,1)$ and $\vect{f'}=\vect{f}$. Moreover, we determine the degree bound $m=1$, i.e., $V=V(\vect{a'},\vect{f'},\QQ(k)[p]_1)$. Hence we start the degree reduction with \NDegreeReductionFPLDE($1,\vect{a'},\vect{f'},\QQ(k)(p)$), and we define $l:=\max(\deg(a'_0),\deg(a'_1))=0$.\\
We calculate $\vect{\tilde{a}}=(-1,k+1)$ and $\vect{\tilde{f}}=(2 k,0)$ by~\eqref{Equ:LeadingConstraint}. Next, we calculate the basis $\{(\frac{1}{2} , 0 , k),(0 , 1 , 0)\}$
 of $V(\vect{\tilde{a}},\vect{\tilde{f}},\QQ(k))$ with \NSolveFPLDE($\vect{\tilde{a}},\vect{\tilde{f}},\QQ(k)$); for the details of this calculation we refer to Example~\ref{Exp:FPLDESummation}. This gives $\vect{C}=\left(\begin{smallmatrix} 1/2&0\\0&1\end{smallmatrix}\right)$ and $\vect{g}=(k,0)$. Then we calculate $\vect{\phi}=(0,-\frac{2}{k+1}\big)$ using~\eqref{Equ:FPrime}.\\ 
We repeat the degree reduction to determine a basis of $V((-1,1),\phi,\QQ(k)[p]_0)$ for $m=0$. There we calculate the basis $\{ (1 , 0 , 0), (0 , 0 , 1)\}$ of
$V((-1,1),\vect{\phi},\QQ(k))$ by executing \NSolveFPLDE($(-1,1),\vect{\phi},\QQ(k)$).\\
Finally, take $\vect{D}=\left(\begin{smallmatrix} 1&0\\0&0\end{smallmatrix}\right)$ and $\vect{h}=(0,1)$ and derive $\vect{E}=\left(\begin{smallmatrix} 1/2&0\\0&0\end{smallmatrix}\right)$ and $\vect{p}=(p,1)$ as given in~\eqref{Equ:CombineDegRedSol}. This produces the basis $\{(\frac{1}{2} , 0 , p), (0 , 0 , 1)\}$ of $V$.
\end{example}

\begin{example}\label{Exp:Full2}
We repeat the calculation steps of Example~\ref{Exp:Full1}, but this time with the vector
$\vect{f}=(\frac{(-k-2) p}{2 (k+1)},-\frac{1}{k+1})$. Again a denominator bound is $d=1$, and we set $\vect{a'}=(a'_0,a'_1)=(-1,1)$ and $\vect{f'}=\vect{f}$. Moreover,
we get the degree bound $m=1$. We thus activate the degree reduction process with $l=\max(\deg(a'_0),\deg(a'_1))=0$.\\
For $m=1$, we get $\vect{\tilde{a}}=(-1,k+1)$ and $\vect{\tilde{f}}=(\frac{-k-2}{2 (k+1)},0)$ by~\eqref{Equ:LeadingConstraint}. Next, we calculate the basis $\{(0,1,0)\}$ of $V(\vect{\tilde{a}},\vect{\tilde{f}},\QQ(k))$ with $\NSolveFPLDE(\vect{\tilde{a}},\vect{\tilde{f}},\QQ(k))$, and we extract the matrix 
\begin{equation}\label{Equ:CMatrixBad}
\vect{C}=(0,1)
\end{equation} 
and the vector $\vect{g}=(0)$. This yields $\vect{\phi}=(\frac{-1}{k+1})$ using~\eqref{Equ:FPrime}.\\ 
Repeating the degree reduction for $m=0$ we calculate for $V((-1,1),\vect{\phi},\QQ(k))$ the basis $\{(0,1)\}$
 and extract the matrix $\vect{D}=(0)$ and the vector $\vect{h}=(1)$. This finally gives $\vect{E}=(0,0)$ and $\vect{p}=(1)$ using~\eqref{Equ:CombineDegRedSol}, i.e., we end up at the basis $\{(0,0,1)\}$ of $V$.
\end{example}

\noindent As worked out in~\cite{Schneider:06d} these algorithmic ideas can be generalized if the difference field $\dfield{\GG}{\sigma}$ satisfies the following (rather technical) properties. In this context the following functions are used: for $f\in\AS^*$ and $k\in\ZZ$ we define
\small
\begin{equation*}
f_{(k,\sigma)}:=\begin{cases}
f\sigma(f)\dots\sigma^{k-1}(f)&\text{if }k>0\\
1&\text{if }k=0\\
\frac{1}{\sigma^{-1}(f)\dots\sigma^{-k}(f)}&\text{if }k<0,
\end{cases}\;\;
f_{\{k,\sigma\}}:=\begin{cases}
f_{(0,\sigma)}+f_{(1,\sigma)}+\dots+f_{(k-1,\sigma)}&\text{if }k>0\\
0&\text{if }k=0\\
-(f_{(-1,\sigma)}+\dots+f_{(k,\sigma)})&\text{if }k<0.
\end{cases}
\end{equation*}
\normalsize

\begin{definition}\label{Def:SigmaCompDF}
A difference field $\dfield{\GG}{\sigma}$ is \notion{$\sigma$-computable} if the following holds.

\vspace*{-0.2cm}

\begin{enumerate}
\item There is an algorithm that factors multivariate polynomials
over $\GG$ and that solves linear systems with multivariate rational functions over $\GG$.

\item $\dfield{\GG}{\sigma^r}$ is \notion{torsion free} for all $r\in\ZZ$, 
i.e., for all $r\in\ZZ$, for all
$k\in\ZZ^*$ and all $g\in\GG^*$ the equality
$\big(\frac{\sigma^r(g)}{g}\big)^k=1$ implies $\frac{\sigma^r(g)}{g}=1$.

\item {\it \piE-Regularity.} Given $f,g\in\GG$ with $f$ not a root of
unity, there is at most one $n\in\ZZ$ such that $f_{(n,\sigma)}=g$.
There is an algorithm that finds, if possible, this $n$.

\item {\it \sigmaE-Regularity.} Given $k\in\ZZ\setminus\{0\}$ and $f,g\in\GG$ with $f=1$ or $f$ not a root of unity,
there is at most one $n\in\ZZ$ such that $f_{\{n,\sigma^k\}}=g$.
There is an algorithm that finds, if possible, this $n$.

\item {\it Orbit-Problem.} There is an algorithm that solves the orbit problem:
Given $\dfield{\GG}{\sigma}$ and $f_1,\dots,f_m\in\GG^*$, find a
basis of the following $\set Z$-module:
\begin{multline}\label{Equ:MModule}
M(f_1,\dots,f_m;\GG):=
\{\,(e_1,\dots,e_m)\in\set Z^m\mid\exists g\in\set
F^*:f_1^{e_1}\cdots f_m^{e_m}=\tfrac{\sigma(g)}{g}\,\}.
\end{multline}

\item {\it \FPLDE~Problem.} There is an algorithm that solves Problem~\FPLDE\ in $\dfield{\GG}{\sigma}$.
\end{enumerate}
\end{definition}

More precisely, there is the following result.

\begin{theorem}[\cite{Schneider:06d}]
Let $\dfield{\GG}{\sigma}$ be a $\sigma$-computable difference field. Then any \pisiSE-extension $\dfield{\FF}{\sigma}$ of $\dfield{\GG}{\sigma}$ is \FPLDE-solvable.
\end{theorem}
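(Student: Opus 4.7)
The plan is to proceed by induction on the tower length $e$ of the \pisiSE-extension $\FF = \GG(t_1)\dots(t_e)$ of $\GG$. For the base case $e=0$, Definition~\ref{Def:FPLDE-solvable} demands only linear system solving over $\GG$ and Problem~\FPLDE\ in $\dfield{\GG}{\sigma}$; both are guaranteed directly by properties~1 and~6 of $\sigma$-computability (Definition~\ref{Def:SigmaCompDF}).

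The inductive step hinges on an \emph{inheritance lemma} that I would establish as the central preparatory result: if $\dfield{\GG}{\sigma}$ is $\sigma$-computable and $\dfield{\GG(t)}{\sigma}$ is a single \pisiSE-extension, then $\dfield{\GG(t)}{\sigma}$ is again $\sigma$-computable. Iterating this lemma ensures that every intermediate field $\dfield{\GG(t_1)\dots(t_{i-1})}{\sigma}$ possesses all six $\sigma$-computability features, and Problem~\FPLDE\ inside each such intermediate field is then supplied inductively via \SolveFPLDE.

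Granted the inheritance lemma, Problems DenB and DegB in the \pisiSE-extension $\dfield{\GG(t_1)\dots(t_i)}{\sigma}$ of its predecessor are handled as follows. For DegB I would invoke Karr's degree-bound construction: a leading-coefficient comparison reduces the existence of a polynomial solution of a prescribed degree $m$ to a constraint of the form $f_{(m,\sigma)} = g$ or $f_{\{m,\sigma\}} = g$ in the lower field, whose unique admissible $m$ (if any) is identified by \piE- and \sigmaSE-regularity together with the torsion-free hypothesis. For DenB I would combine Bronstein's extension of Abramov's algorithm with Karr's method: in the \sigmaSE-case $\sigma(t)=t+\beta$ no new irreducible shift factors appear, so tracking multiplicities through a finite candidate list suffices; in the \piE-case $\sigma(t)=\alpha\,t$ one must detect shift-equivalent irreducible denominator factors in $\GG(t_1)\dots(t_{i-1})[t]$, whose existence is decided by the orbit problem (property~5) combined with polynomial factorization (property~1).

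The main obstacle is the inheritance lemma itself, and within it the lifting of the orbit-problem solver from $\GG$ to $\GG(t)$: one has to reduce the determination of a $\ZZ$-basis of $M(f_1,\dots,f_m;\GG(t))$ to a basis of the corresponding module over $\GG$ combined with linear bookkeeping of the $t$-exponents of the $f_j$. This reduction in turn relies on torsion-freeness and the regularity properties to ensure that the $t$-exponents are uniquely determined and that shift-equivalences in $\GG(t)$ descend to shift-equivalences in $\GG$. Lifting \piE- and \sigmaSE-regularity to $\GG(t)$ proceeds analogously, by a degree or factor analysis of the defining relation $\sigma(t)=\alpha\,t+\beta$. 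Once the inheritance lemma is in place, the induction closes and the theorem follows.
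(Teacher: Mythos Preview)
The paper does not actually prove this theorem: it is stated with attribution to~\cite{Schneider:06d} and no proof is given in the present article. So there is no ``paper's own proof'' to compare against here; the result is imported wholesale from the cited reference.

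That said, your outline is the standard route taken in~\cite{Schneider:06d} (and already implicit in Karr's original work for the case $\GG=\KK$): one shows that $\sigma$-computability is inherited along a single \pisiSE-extension, and then the requirements of Definition~\ref{Def:FPLDE-solvable} at each level follow by iteration. Your identification of the orbit problem as the delicate point in the inheritance lemma is accurate; lifting property~5 from $\GG$ to $\GG(t)$ is indeed where most of the technical work sits. One small redundancy in your write-up: once the inheritance lemma is established, property~6 (FPLDE solvability) at each intermediate level is already part of $\sigma$-computability, so you do not need to invoke \SolveFPLDE\ separately in the inductive step---the lemma itself furnishes it. Otherwise the plan is sound and matches the argument in the cited source.
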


\noindent In particular, for the following ground fields $\dfield{\GG}{\sigma}$ Problem~\FPLDE\ can be solved.

\begin{example}
\begin{enumerate}
\item We can take $\dfield{\KK}{\sigma}$ with $\const{\KK}{\sigma}=\KK$ which is $\sigma$-computable\footnote{$\dfield{\KK}{\sigma}$ is $\sigma$-computable (see Def.~\ref{Def:SigmaCompDF}) iff $\KK$ is $\sigma$-computable (see Def.~\ref{Def:SigmaCompGround}); we refer to~\cite{Karr:81,Schneider:06d}.}; e.g., $\KK$ can be a rational function field over an algebraic number field; see~\cite{Schneider:05c}.
\item We can take the free difference field $\dfield{\GG}{\sigma}$ over $\KK$ as given in Example~\ref{Exp:ExpGroundField}.1. Then $\dfield{\GG}{\sigma}$ is $\sigma$-computable if $\dfield{\KK}{\sigma}$ is $\sigma$-computable; see~\cite{Schneider:06d}.
\item We can take the radical difference field $\dfield{\GG}{\sigma}$ of order $d$ over $\KK$ as given in Example~\ref{Exp:ExpGroundField}.2. Then $\dfield{\GG}{\sigma}$ is $\sigma$-computable if $\dfield{\KK}{\sigma}$ is $\sigma$-computable; see~\cite{Schneider:07f}.
\end{enumerate}
\end{example}

\noindent We remark that all the presented ideas of this section can be generalized to solve $m$th-order linear difference equations using results from~\cite{Petkov:92,Bron:00,Schneider:01,Schneider:05a,ABPS:13}.

\section{Special case: parameterized telescoping}\label{Sec:NaiveTele}

In the following we consider Problem~\PT\ in a \pisiSE-extension $\dfield{\FF}{\sigma}$ of $\dfield{\GG}{\sigma}$ with $\FF=\GG(t_1)\dots(t_e)$ which is \FPLDE-solvable. Namely, given $\vect{f}=(f_1,\dots,f_n)\in\FF^n$ we aim at computing a basis of $V(\vect{f},\FF)=V((-1,1),\vect{f},\FF)$.
Of course, one option is to execute Algorithm \SolveFPLDE$((-1,1),\vect{f},\FF)$. Subsequently, we present a refined algorithm that is more efficient and will serve as a basis for further improvements.

If $\FF=\GG$, we are in the base case (there one should use, if available, an optimized PT-solver and not a general \FPLDE-solver).
Otherwise, denote the top generator by $t:=t_e$ and consider the \pisiSE-extension $\dfield{\AS(t)}{\sigma}$ of $\dfield{\AS}{\sigma}$ with $\AS=\GG(t_1)\dots(t_{e-1})$ and $\sigma(t)=\alpha\,t+\beta$.
Looking at the general strategy in Section~\ref{Sec:GeneralStrat}, we first have to solve Problem~DenB, i.e., we compute a denominator bound $d\in\AS[t]^*$. If $d\neq1$, we reduce the problem to calculate a basis of~\eqref{Equ:RatReduction} and end up at Problem~\FPLDE\ (which is harder to solve than Problem~\PT). This situation is avoided partially as follows. 

Consider the rational part $\fracF{\AS(t)}$ and polynomial part $\AS[t]$ as $\KK$-subspaces of $\AS(t)$. Then for the direct sum $\AS(t)=\AS[t]\oplus\fracF{\AS(t)}$
we can utilize the following lemma; for a more general version see~\cite[Lemma~3.1]{Schneider:07d}.

\begin{lemma}\label{Equ:SplitRatDeg}
Let $\dfield{\AS(t)}{\sigma}$ be a \pisiSE-extension of $\dfield{\AS}{\sigma}$, $p,g_1\in\AS[t]$, and $r,g_2\in\fracF{\AS(t)}$. Then $\sigma(g_1+g_2)-(g_1+g_2)=p+r$ iff
$\sigma(g_1)-g_1=p$ and $\sigma(g_2)-g_2=r$.
\end{lemma}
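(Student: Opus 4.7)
My plan is to reduce the lemma to the observation that the direct-sum decomposition $\AS(t)=\AS[t]\oplus\fracF{\AS(t)}$ as $\KK$-vector spaces is stable under $\sigma$; then uniqueness of the decomposition matches the polynomial and proper-fractional parts on both sides of the given equation.

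The ``$\Leftarrow$'' direction is immediate: adding the two hypothesized identities and using $\KK$-linearity of $\sigma-\mathrm{id}$ yields $\sigma(g_1+g_2)-(g_1+g_2)=p+r$. For ``$\Rightarrow$'', the core task is to show the two inclusions $\sigma(\AS[t])\subseteq\AS[t]$ and $\sigma(\fracF{\AS(t)})\subseteq\fracF{\AS(t)}$. The first follows directly from the definition of $\sigma$ on $\AS[t]$: for $f=\sum_i f_i t^i\in\AS[t]$ one has $\sigma(f)=\sum_i\sigma(f_i)(\alpha t+\beta)^i\in\AS[t]$. Moreover, since $\alpha\in\AS^*$ and $\sigma$ is an automorphism, the leading coefficient $\sigma(f_{\deg f})\alpha^{\deg f}$ is nonzero, so $\deg(\sigma(f))=\deg(f)$. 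From this degree preservation, the second inclusion follows: writing any $h\in\fracF{\AS(t)}$ as $h=u/v$ with $u,v\in\AS[t]$ and $\deg(u)<\deg(v)$, we get $\sigma(h)=\sigma(u)/\sigma(v)$ with $\deg(\sigma(u))=\deg(u)<\deg(v)=\deg(\sigma(v))$, hence $\sigma(h)\in\fracF{\AS(t)}$.

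With these two stability facts, $\sigma(g_1)-g_1\in\AS[t]$ and $\sigma(g_2)-g_2\in\fracF{\AS(t)}$, so the hypothesized equation
$$\bigl[\sigma(g_1)-g_1\bigr]+\bigl[\sigma(g_2)-g_2\bigr]=p+r$$
exhibits the element $p+r\in\AS(t)$ in two ways as a sum of a polynomial and a proper fraction. By the uniqueness of the decomposition $\AS(t)=\AS[t]\oplus\fracF{\AS(t)}$, the polynomial parts and the proper-fractional parts must agree separately, giving $\sigma(g_1)-g_1=p$ and $\sigma(g_2)-g_2=r$.

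There is no real obstacle; the only point that requires minor care is the degree-preservation claim, which relies on $\alpha\in\AS^*$ (true for both $\Pi$-extensions, where $\beta=0$, and $\Sigma^*$-extensions, where $\alpha=1$). Everything else is bookkeeping in the direct-sum decomposition.
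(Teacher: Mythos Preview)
Your argument is correct and is exactly the natural direct proof. Note that the paper itself does not supply a proof of this lemma; it merely cites \cite[Lemma~3.1]{Schneider:07d} for a more general version, so there is no in-paper proof to compare against. Your route via $\sigma$-stability of the decomposition $\AS(t)=\AS[t]\oplus\fracF{\AS(t)}$ (using degree preservation $\deg(\sigma(f))=\deg(f)$, which relies on $\alpha\in\AS^*$ and $\sigma$ being a field automorphism) is the standard way to establish this and matches what one finds in the cited reference.
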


\noindent Thus we can separate the telescoping problem, i.e., finding a basis of $V=V(\vect{f},\AS(t))$ for both the rational and polynomial part. Namely, by performing polynomial division with remainder on each component of $\vect{f}$ we get $\vect{r}\in\fracF{\AS(t)}^n$ and $\vect{p}\in\AS[t]^n$ such that
\begin{equation}\label{Equ:RatPolyPart}
\vect{f}=\vect{r}+\vect{p}.
\end{equation}
In general, the bases of $V(\vect{r},\fracF{\AS(t)})$ and  $V(\vect{p},\AS[t])$ can be computed independently (e.g., in parallel), and the results can be combined by system solving to derive a basis of $V=V(\vect{f},\AS(t))$. 
For later considerations, we propose the following tactic.

\smallskip

\noindent\textit{Step 1: Solve the rational part.} Get a basis $\{(c_{i1},\dots,c_{in},g_i)\}_{1\leq i\leq\nu}\subseteq\KK^n\times\fracF{\AS(t)}$
 of $V_1=V(\vect{r},\fracF{\AS(t)})$.  If $\nu=0$, i.e., $V_1=\{\vect{0}\}$, it follows that $V=\{0\}^n\times\KK$, i.e., $\{(0,\dots,0,1)\}$ is a basis of our original solution space $V$.
Otherwise, define $\vect{C}=(c_{ij})\in\KK^{\nu\times n}$ and $\vect{g}=(g_1,\dots,g_{\nu})\in\fracF{\AS(t)}^{\nu}$ and proceed as follows.

\smallskip

\noindent\textit{Step 2: Solve a refined version of the polynomial part.}
Set 
\begin{equation}\label{Equ:PolyPart}
\vect{f'}:=\vect{C}\,\vect{p}\in\AS[t]^{\nu}.
\end{equation}
Note that $\nu\leq n$, i.e., the polynomial part might get simpler.
Then compute a basis 
$\{(d_{i1},\dots,d_{i\nu},h_i)\}_{1\leq i\leq\mu}\subseteq\KK^{\nu}\times\AS[t]_m$ of $V_2=V(\vect{f'},\AS[t])$; note that $\mu\geq1$, since $(0,\dots,0,1)$ is a solution. 

\smallskip

\noindent\textit{Step 3: Combine the rational and polynomial part.} Define $\vect{D}=(d_{ij})\in\KK^{\mu\times\nu}$ and $\vect{h}=(h_1,\dots,h_{\mu})\in\AS[t]^{\mu}$.
By Step~1, $\sigma(\vect{g})-\vect{g}=\vect{C}\,\vect{r}$, thus
$\sigma(\vect{D}\,\vect{g})-\vect{D}\,\vect{g}=\vect{D}\,\vect{C}\,\vect{r},$
and by Step~2 we have that 
$\sigma(\vect{h})-\vect{h}=\vect{D}\,\vect{f'}=\vect{D}\,\vect{C}\,\vect{p}.$
Adding the last two equations shows that
$\sigma(\vect{D}\,\vect{g}+\vec{h})-(\vect{D}\,\vect{g}+\vec{h})=\vect{D}\,\vect{C}(\vect{r}+\vect{p})=\vect{D}\,\vect{C}\,\vect{f}.$
Consequently define 
\begin{equation}\label{Equ:CombineRat}
(e_{ij}):=\vect{D}\,\vect{C}\in\KK^{\mu\times n}\text{ and }(q_1,\dots,q_{\mu}):=\vect{D}\vect{g}+\vect{h}\in\AS(t)^{\mu}.
\end{equation}
Then $B=\{(e_{i1},\dots,e_{in},q_i)\}_{1\leq i\leq\mu}\subseteq V(\vect{f},\AS(t))$. By further linear algebra arguments and Lemma~\ref{Equ:SplitRatDeg} it follows that $B$ forms a basis of $V=V(\vect{f},\AS(t))$.

\medskip

\noindent\textbf{Details of Step 1: Solving the rational part:} Find a basis of $V_1:=V(\vect{r},\fracF{\AS(t)})$. Derive a denominator bound $d\in\AS[t]^*$ for $V_1$ (see DenB). Then we have to compute a basis of 
$V':=V((-\frac{1}{d},\frac{1}{\sigma(d)}),\vect{r},\AS[t]_{\deg(d)-1})$; note that the numerator degree is bounded by $\deg(d)-1$. Hence clear denominators and get $\vect{a'}\in\AS[t]^2$, $\vect{f'}\in\AS[t]^n$ with $V'=V(\vect{a'},\vect{f'},\AS[t]_{\deg(d)-1})$. Thus \DegreeReductionFPLDE$(\deg(d)-1,\vect{a'},\vect{f'},\AS(t))$ gives a basis $\{(c_{i1},\dots,c_{in},p_i)\}_{1\leq i\leq\nu}\subseteq\KK^n\times\AS[t]$ of $V'$. As a consequence $\{(c_{i1},\dots,c_{in},\frac{p_i}{d})\}_{1\leq i\leq\nu}$ is a basis of $V_1$.

\medskip

\noindent\textbf{Details of Step 2: Solving the polynomial part:}
Find a basis of $V_2=V(\vect{f'},\AS[t])$ with $\vect{f'}=(f'_1,\dots,f'_{\nu})\in\AS[t]^{\nu}$.  Here Problem DegB can be read off by the following result given in~\cite{Karr:81};
for further details and proofs see Corollaries~3 and~6 in~\cite{Schneider:05b}.

\begin{theorem}\label{Thm:DegreeBound}
Let $\dfield{\AS(t)}{\sigma}$ be a \pisiSE-extension and $\vect{f'}=(f'_1,\dots,f'_{\nu})\in\AS[t]^{\nu}$. Then a degree bound of $V(\vect{f'},\AS[t])$ is

\vspace*{-0.3cm}

\begin{equation}\label{Equ:DegBound}
m:=\left\{
\begin{array}{ll}
\max_{1\leq i\leq\nu}\deg(f'_i)+1&\text{if }\sigma(t)-t\in\AS\\	
\max_{1\leq i\leq\nu}\deg(f'_i)&\text{if }\sigma(t)/t\in\AS.
\end{array}\right.
\end{equation}
\end{theorem}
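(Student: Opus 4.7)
The plan is to take an arbitrary element $(c_1,\dots,c_\nu,q) \in V(\vect{f'},\AS[t])$ with $q\neq 0$, write $q = \sum_{i=0}^d q_i\,t^i$ with $d=\deg(q)$ and $q_d\in\AS^*$, and argue by contradiction that $d\leq m$. The key idea is to compare the coefficients of both sides of $\sigma(q)-q = c_1 f'_1 + \dots + c_\nu f'_\nu$: since the right hand side has degree at most $\max_i\deg(f'_i)$, any coefficient of $\sigma(q)-q$ at degree strictly above this threshold must vanish, and the defining property of the \pisiSE-extension will then be violated.

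For the \piE-case $\sigma(t)=\alpha\,t$, one has $\sigma(q)-q = \sum_i\bigl(\sigma(q_i)\,\alpha^i - q_i\bigr)\,t^i$. If $d > m = \max_i\deg(f'_i)$, the vanishing of the $t^d$-coefficient on the right forces $\sigma(q_d)\,\alpha^d = q_d$, i.e.\ $\sigma(q_d\,t^d) = q_d\,t^d$. Since $d\geq 1$, the nonzero element $q_d\,t^d$ lies in $\const{\AS(t)}{\sigma}\setminus\const{\AS}{\sigma}$, contradicting the very definition of a \piE-extension.

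For the \sigmaSE-case $\sigma(t)=t+\beta$, a binomial expansion yields $\sigma(q)-q = (\sigma(q_d)-q_d)\,t^d + \bigl(d\,\beta\,\sigma(q_d) + \sigma(q_{d-1}) - q_{d-1}\bigr)\,t^{d-1} + (\text{lower terms})$. Assume $d \geq m+1 = \max_i\deg(f'_i)+2$, so that both the $t^d$- and the $t^{d-1}$-coefficients on the right vanish. The former gives $\sigma(q_d)=q_d$, hence $q_d\in\KK^*$. Substituting into the latter yields $\sigma(q_{d-1}) - q_{d-1} = -d\,q_d\,\beta$, and since $-d\,q_d\in\KK^*$ is a nonzero constant, the element $g := q_{d-1}/(-d\,q_d) \in \AS$ satisfies $\sigma(g) - g = \beta$, contradicting Theorem~\ref{Thm:Karr}.

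The main obstacle is the \sigmaSE-case: the vanishing of the top coefficient alone does not suffice, because $q_d \in \KK^*$ is entirely compatible with the \sigmaSE-property. One must descend one more coefficient and repackage the resulting equation as a genuine telescoping equation for $\beta$ itself, which is exactly what Theorem~\ref{Thm:Karr} rules out. This asymmetry between the two cases is precisely what accounts for the $+1$ appearing only in the \sigmaSE-bound.
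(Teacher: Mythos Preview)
Your argument is correct. The paper itself does not supply a proof of this theorem; it attributes the result to Karr~\cite{Karr:81} and points to Corollaries~3 and~6 of~\cite{Schneider:05b} for details. Your coefficient-comparison argument is precisely the standard route taken there: in the \piE-case one reads off a new constant $q_d\,t^d$ of positive degree, and in the \sigmaSE-case one descends two coefficients to manufacture a telescoper for $\beta$ in $\AS$, which Theorem~\ref{Thm:Karr} forbids. Two small remarks worth making explicit: the division by $-d\,q_d$ in the \sigmaSE-case uses that the fields contain $\QQ$ (characteristic zero), which the paper assumes globally; and in the \piE-case your assertion ``$d\geq 1$'' tacitly uses $m\geq 0$, i.e.\ that not all $f'_i$ vanish---a harmless degeneracy, but you may want to flag it.
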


\noindent Thus taking the corresponding $m\geq0$, we can make the ansatz $q=g\,t^m+h$ for some $g\in\AS$ and $h\in\AS[t]_{m-1}$, and we can activate the degree reduction strategy; see Step~3 in Section~\ref{Sec:GeneralStrat}. Note that $\vect{f'}\in\AS[t]^{\nu}_m$. Thus the highest possible degree in
$\sigma(g\,t^m+h)-(g\,t^m+h)=c_1\,f'_1+\dots+c_{\nu}\,f'_{\nu}$
is $m$ and doing coefficient comparison on this degree and using $\sigma(t)=\alpha\,t+\beta$ give the constraint
$\alpha^m\,\sigma(g)-g=c_1\,\coeff(f'_1,m)+\dots+c_{\nu}\,\coeff(f'_{\nu},m);$
compare~\eqref{Equ:LeadingConstraint} for the general situation. Thus we have to compute a basis of the solution space $\tilde{V}=V((-1,\alpha^m),\vect{\tilde{f}},\AS)$
with
\begin{equation}\label{Equ:fGoundField}
\vect{\tilde{f}}:=(\coeff(f'_1,m),\dots,\coeff(f'_{\nu},m)\in\AS^{\nu}.
\end{equation}
If $\alpha=1$, this is again a \PT\ problem. Otherwise, a \FPLDE-solver (e.g., our \SolveFPLDE) has to be activated. Given a basis $\{(c_{i1},\dots,c_{i\nu},g_i)\}_{1\leq i\leq\lambda}\subseteq\KK^{\nu}\times\AS$, we continue to extract the remaining part $h\in\AS[t]_{m-1}$ by recursion. I.e., take
\begin{equation}\label{Equ:PhiDef}
\vect{\phi}:=\vect{C}\,\vect{f}-[\sigma(\vect{g}\,t^m)-\vect{g}\,t^m]\in\AS[t]_{m-1}^{\lambda}
\end{equation}
and compute a basis $\{(d_{1i},\dots,d_{i\lambda},h_i)\}_{1\leq i\leq\mu}\subseteq\KK^{\lambda}\times\AS[t]_{m-1}$ of $V(\phi,\AS[t]_{m-1})$. Finally, 
take $\vect{D}:=(d_{ij})\in\KK^{\mu\times\lambda}$ and
define 
\begin{equation}\label{Equ:TeleDegRedCombine}
(e_{ij}):=\vect{D}\,\vect{C}\in\KK^{\mu\times\nu}\text{ and }(p_1,\dots,p_{\mu}):=\vect{D}\,\vect{g}\,t^m+(h_1,\dots,h_{\mu})\in\AS[t]_m^{\mu}.
\end{equation}
Then $\{(e_{i1},\dots,e_{i\nu},p_i)\}_{1\leq i\leq\mu}$ is a basis of $V(\vect{f'},\AS[t]_m)$. This refined reduction can be summarized as follows.

\medskip

\small
\textbf{Algorithm \SolvePTRat}($\vect{f},\FF$)\\
\textbf{Input:} a \pisiSE-extension $\dfield{\FF}{\sigma}$ of $\dfield{\GG}{\sigma}$  with $\FF=\GG(t_1)\dots(t_e)$ which is \FPLDE-solvable; $\vect{f}\in\FF^n$.\\
\textbf{Output:} a basis of $V(\vect{f},\FF)$ over $\KK:=\const{\GG}{\sigma}$.

\vspace*{-0.2cm}

\begin{enumerate}
\item IF $e=0$, compute a basis $B$ of $V(\vect{f},\GG)$ and RETURN $B$.

\item[] Denote $\AS:=\GG(t_1\dots,t_{e-1})$, $t:=t_e$.

\item Compute $\vect{r}\in\fracF{\AS(t)}^n$ and $\vect{p}\in\AS[t]^n$ such that $\vect{f}=\vect{r}+\vect{p}$.

\item Get a basis of $V(\vect{r},\fracF{\AS(t)})$, say
$B_1=\{(c_{i1},\dots,c_{in},g_i)\}_{1\leq i\leq\nu}$; see ``Details of Step 1''.
\item IF $B_1=\{\}$, RETURN $\{(0,\dots,0,1)\}$.
\item Define $\vect{C}:=(c_{ij})\in\KK^{\nu\times n}$, $\vect{g}:=(g_1,\dots,g_{\nu})\in\fracF{\AS(t)}^{\nu}$ and set
$\vect{f'}:=(f'_1,\dots,f'_{\nu})=\vect{C}\,\vect{p}\in\AS[t]^{\nu}$.
\item Define $m\in\NN\cup\{-1\}$ as given in~\eqref{Equ:DegBound}.
\item Get
$B_2:=\NDegreeReductionRat(m,\vect{f'},\AS(t))$, say 
$B_2=\{(d_{i1},\dots,d_{i\nu},h_i)\}_{1\leq i\leq\mu}$.
\item Take $\vect{D}:=(d_{ij})\in\KK^{\mu\times\nu}$ and $\vect{h}=(h_1,\dots,h_{\mu})\in\AS[t]_m^{\mu}$; and define $(e_{ij})\in\KK^{\mu\times n}$, $(q_1,\dots,q_{\mu})\in\AS(t)^{\mu}$ as given in~\eqref{Equ:CombineRat}.

\item RETURN $\{(e_{i1},\dots,e_{in},q_i)\}_{1\leq i\leq\mu}$.
\end{enumerate}

\smallskip

\textbf{Algorithm \NDegreeReductionRat}($m,\vect{f'},\AS(t)$)\\
\textbf{Input:} $m\in\NN\cup\{-1\}$, a \pisiSE-extension $\dfield{\AS(t)}{\sigma}$ of $\dfield{\GG}{\sigma}$ with $\sigma(t)=\alpha\,t+\beta$ which is \FPLDE-solvable; $\vect{f'}=(f'_1,\dots,f'_{\nu})\in\AS[t]^{\nu}_m$.\\
\textbf{Output:} a basis of $V(\vect{f'},\AS[t]_m)$ over $\KK:=\const{\GG}{\sigma}$.

\vspace*{-0.2cm}

\begin{enumerate}
\item IF $m=-1$, compute a basis $B$ of $V(\vect{f'},\{0\})$ by linear algebra and RETURN $B$.
\item IF $m=0$,  RETURN \NSolvePTRat$(\vect{f'},\AS)$.
\item Define
$\vect{\tilde{f}}\in\AS^{\nu}$ as in~\eqref{Equ:fGoundField}. 
\item Get $
\tilde{B}:=\left\{\begin{array}{ll}
\SolvePTRat(\vect{\tilde{f}},\AS)&\text{ if $\alpha=1$}\\
\NSolveFPLDE((-1,\alpha^m),\vect{\tilde{f}},\AS)&\text{ if $\alpha\neq1$}
\end{array}\right.$, say
$\tilde{B}=\{(c_{i1},\dots,c_{i\nu},g_i)\}_{1\leq i\leq\lambda}$.
\item IF $\tilde{B}=\{\}$, RETURN $\{(0,\dots,0,1)\}$.\quad {(*possible if $\alpha\neq1$*)}
\item Take $\vect{C}:=(c_{ij})\in\KK^{\lambda\times\nu}$, $\vect{g}:=(g_1,\dots,g_{\lambda})\in\AS^{\lambda}$, and define
$\vect{\phi}\in\AS[t]_{m-1}^{\lambda}$ as given in~\eqref{Equ:PhiDef}.
\item Get $G:= \NDegreeReductionRat(m-1,\vect{\phi},\AS(t))$, say $G=\{(d_{1i},\dots,d_{i\lambda},h_i)\}_{1\leq i\leq\mu}$.
\item Take $\vect{D}:=(d_{ij})\in\KK^{\mu\times\lambda}$ and $\vect{h}:=(h_1,\dots,h_{\mu})\in\AS[t]_{m-1}^{\mu}$;
define $(e_{ij})\in\KK^{\mu\times\nu}$ and $(p_1,\dots,p_{\mu})\in\AS[t]_m^{\mu}$ as given in~\eqref{Equ:TeleDegRedCombine}.
\item RETURN $\{(e_{i1},\dots,e_{i\nu},p_i)\}_{1\leq i\leq\mu}$.
\end{enumerate}
\normalsize
\medskip

\noindent In concrete applications one is usually given $\dfield{\GG}{\sigma}$ (e.g., as a \pisiSE-field build by \piE-extensions) and deals with \sigmaSE-extensions on top where the generators (describing sums) occur only in the numerators. This motivates the following definition.

\begin{definition}
$\dfield{\GG(t_1)\dots(t_e)}{\sigma}$ is called \notion{polynomial \sigmaSE-extension} of $\dfield{\GG}{\sigma}$ if for all $1\leq i\leq e$, $\dfield{\GG(t_1)\dots(t_i)}{\sigma}$ is a \sigmaSE-ext.\ of $\dfield{\GG}{\sigma}$ with $\sigma(t_i)-t_i\in\GG[t_1,\dots,t_{i-1}]$.
\end{definition} 

\noindent In such difference fields we have the following property; for a more general version and its corresponding proof see~\cite[Theorem~2.7]{Schneider:10c}.

\begin{theorem}
Let $\dfield{\GG(t_1)\dots(t_e)}{\sigma}$ be a polynomial \sigmaSE-extension of $\dfield{\GG}{\sigma}$. Then for all $g\in\GG(t_1)\dots(t_e)$:
$\sigma(g)-g\in\GG[t_1,\dots,t_e]$ iff $g\in\GG[t_1,\dots,t_e]$.
\end{theorem}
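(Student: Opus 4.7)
The plan is to prove both implications separately, the $(\Leftarrow)$ direction being immediate and the $(\Rightarrow)$ direction requiring a nested induction. For $(\Leftarrow)$: in a polynomial \sigmaSE-extension one has $\sigma(t_i)-t_i\in\GG[t_1,\dots,t_{i-1}]\subseteq\GG[t_1,\dots,t_e]$ for every $i$, so $\sigma$ preserves the polynomial ring $\GG[t_1,\dots,t_e]$. Hence if $g\in\GG[t_1,\dots,t_e]$, then $\sigma(g)-g\in\GG[t_1,\dots,t_e]$ as well.

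For $(\Rightarrow)$ I will proceed by induction on $e$. The case $e=0$ is trivial. For the step, set $\AS:=\GG(t_1)\dots(t_{e-1})$ and $t:=t_e$ with $\sigma(t)=t+\beta$, $\beta\in\GG[t_1,\dots,t_{e-1}]$. Decompose $g=p+r$ with $p\in\AS[t]$ and $r\in\fracF{\AS(t)}$. Since $\sigma(t)=t+\beta$ with $\beta\in\AS$ the map $\sigma$ preserves both $\AS[t]$ and $\fracF{\AS(t)}$, so $\sigma(g)-g=(\sigma(p)-p)+(\sigma(r)-r)$ is the analogous decomposition of the right-hand side. By assumption $\sigma(g)-g\in\GG[t_1,\dots,t_e]\subseteq\AS[t]$, so its rational part vanishes; applying Lemma~\ref{Equ:SplitRatDeg} gives $\sigma(r)=r$. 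Because $\dfield{\AS(t)}{\sigma}$ is a \sigmaSE-extension, $\const{\AS(t)}{\sigma}=\const{\AS}{\sigma}\subseteq\AS$, so $r\in\AS$; combined with $r\in\fracF{\AS(t)}$ this forces $r=0$ and hence $g=p\in\AS[t]$.

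A second induction on $d:=\deg_t(g)$ now shows $g\in\GG[t_1,\dots,t_e]$. Write $g=\sum_{i=0}^d g_i t^i$ with $g_i\in\AS$. Expanding $\sigma(g)=\sum_i\sigma(g_i)(t+\beta)^i$, the coefficient of $t^d$ in $\sigma(g)-g$ equals $\sigma(g_d)-g_d$. This lies in $\GG[t_1,\dots,t_{e-1}]$ by hypothesis, so by the outer induction hypothesis applied to the polynomial \sigmaSE-extension $\dfield{\GG(t_1)\dots(t_{e-1})}{\sigma}$ of $\dfield{\GG}{\sigma}$, we conclude $g_d\in\GG[t_1,\dots,t_{e-1}]$. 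Then $g':=g-g_d\,t^d$ still lies in $\AS[t]$, satisfies $\deg_t(g')<d$, and
\[
\sigma(g')-g'=(\sigma(g)-g)-\bigl(\sigma(g_d)(t+\beta)^d-g_d t^d\bigr)\in\GG[t_1,\dots,t_e],
\]
since both summands belong to $\GG[t_1,\dots,t_e]$ (using that $\sigma$ preserves $\GG[t_1,\dots,t_{e-1}]$ by the outer IH-argument). The inner inductive hypothesis then yields $g'\in\GG[t_1,\dots,t_e]$, whence $g=g'+g_d t^d\in\GG[t_1,\dots,t_e]$.

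The main obstacle is the bookkeeping around the nested induction: one has to be careful that the decomposition $g=p+r$ is compatible with $\sigma$ (which relies on $\sigma$ preserving $\fracF{\AS(t)}$, a consequence of $\sigma(t)=t+\beta$ being degree-preserving on $\AS[t]$) and that the outer induction hypothesis can be invoked at two different places inside the inductive step—once to kill the leading coefficient and once to propagate the assumption that $\sigma$ preserves $\GG[t_1,\dots,t_{e-1}]$. Once these points are settled the argument reduces to routine coefficient comparison.
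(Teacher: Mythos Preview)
Your proof is correct. The $(\Leftarrow)$ direction is indeed immediate from the fact that $\sigma$ preserves $\GG[t_1,\dots,t_e]$, and your nested induction for $(\Rightarrow)$ is sound: the appeal to Lemma~\ref{Equ:SplitRatDeg} (with the polynomial part of $\sigma(g)-g$ being all of $\sigma(g)-g$ and the rational part being $0$) correctly yields $\sigma(r)=r$, whence $r\in\const{\AS(t)}{\sigma}\subseteq\AS$ and so $r=0$; the inner induction on $\deg_t(g)$ then goes through because the leading coefficient of $\sigma(g)-g$ in $t$ is exactly $\sigma(g_d)-g_d$, and the subtraction of $g_d t^d$ keeps everything inside $\GG[t_1,\dots,t_e]$ once $g_d\in\GG[t_1,\dots,t_{e-1}]$ is established.

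There is no proof in the present paper to compare against: the text defers entirely to \cite[Theorem~2.7]{Schneider:10c} for a more general statement and its proof. Your argument is a clean, self-contained specialization that uses only the tools already available in this paper (the direct-sum decomposition $\AS(t)=\AS[t]\oplus\fracF{\AS(t)}$, Lemma~\ref{Equ:SplitRatDeg}, and the defining property $\const{\AS(t)}{\sigma}=\const{\AS}{\sigma}$ of a \sigmaSE-extension). One cosmetic remark: it would tidy things up to state the base case of the inner induction explicitly (say $d\le 0$, where the claim reduces directly to the outer induction hypothesis), rather than leaving it implicit in the general step.
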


\noindent Here the (easy) direction from right to left implies that $\dfield{\GG[t_1,\dots,t_e]}{\sigma}$ forms a difference ring. The other direction implies that a solution of a telescoping problem does not introduce sums in the denominator provided that the summand has no sums in the denominator. As a consequence, a denominator bound of $V(\vect{f},\GG(t_1)\dots(t_{i-1})(t_i))$ with $\vect{f}\in\GG(t_1)\dots(t_{i-1})[t_i]^n$ is always $1$. In other words, Algorithms \SolvePTRat\ and \DegreeReductionRat\ can be simplified to Algorithms \SolvePTPoly\ and \DegreeReductionPoly, respectively. In particular (since Problem DenB and DegB can be obtained without any cost), we end up at the following 
\begin{corollary}
Let $\dfield{\GG(t_1)\dots(t_e)}{\sigma}$ be a polynomial \sigmaSE-extension of $\dfield{\GG}{\sigma}$. Then one can solve Problem~\PT\ in $\dfield{\GG[t_1,\dots,t_e]}{\sigma}$ if one can solve Problem~\PT\ in $\dfield{\GG}{\sigma}$.
\end{corollary}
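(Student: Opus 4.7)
The plan is to induct on $e$. The base case $e=0$ is immediate: $\GG[t_1,\dots,t_e]=\GG$, and Problem~\PT\ in $\dfield{\GG}{\sigma}$ is solvable by hypothesis. For the inductive step, assume Problem~\PT\ is solvable in $\dfield{\tilde\AS}{\sigma}$ where $\tilde\AS:=\GG[t_1,\dots,t_{e-1}]$, and set $t:=t_e$. Since $\dfield{\GG(t_1)\dots(t_{i})}{\sigma}$ is a polynomial \sigmaSE-extension of $\dfield{\GG}{\sigma}$ for every $i\leq e-1$, the induction hypothesis covers exactly the ring that arises in the recursive calls.

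Given $\vect{f}\in\tilde\AS[t]^n$, I would first invoke the theorem just before the corollary: since $\sigma(q)-q=c_1f_1+\dots+c_nf_n\in\tilde\AS[t]$ forces $q\in\tilde\AS[t]$ whenever $q\in\GG(t_1)\dots(t_e)$ solves the equation, we obtain
\[
V(\vect{f},\tilde\AS[t])=V(\vect{f},\GG(t_1)\dots(t_e)).
\]
In particular the denominator bound in the sense of Problem~DenB is trivial ($d=1$), and the polynomial/rational split from Lemma~\ref{Equ:SplitRatDeg} degenerates with $\vect{r}=\vect{0}$ and $\vect{p}=\vect{f}$. Hence Steps~1--3 of the strategy for \SolvePTRat\ reduce to solving $V(\vect{f},\tilde\AS[t])$ directly via the degree-reduction mechanism.

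Next I would run the degree reduction exactly as in \DegreeReductionRat, but with all denominators suppressed. Theorem~\ref{Thm:DegreeBound} supplies the bound $m:=\max_i\deg(f_i)+1$ in closed form, so Problem~DegB is also free. Making the ansatz $q=g\,t^m+h$ with $g\in\tilde\AS$ and $h\in\tilde\AS[t]_{m-1}$, coefficient comparison in $t^m$ (using $\alpha=1$ in a \sigmaSE-extension, see~\eqref{Equ:LeadingConstraint}) yields a \PT-problem
\[
\sigma(g)-g=c_1\coeff(f_1,m)+\dots+c_n\coeff(f_n,m)
\]
in $\dfield{\tilde\AS}{\sigma}$, which is solvable by the induction hypothesis. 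Plugging any basis of its solution space back into the original equation produces, via~\eqref{Equ:PhiDef}, a new vector $\vect{\phi}$ that still lies in $\tilde\AS[t]^\lambda_{m-1}$ (no denominators are introduced because $\sigma(t)-t\in\tilde\AS$ and all multipliers are polynomial), so the recursion on $m$ stays within $\tilde\AS[t]$ and terminates when $m=-1$, where the basis is obtained by linear algebra.

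The only point requiring care is verifying that the intermediate objects indeed remain in $\tilde\AS[t]$ rather than escaping into the fraction field; this is the content of the preceding theorem combined with the observation that $\sigma$ preserves $\tilde\AS[t]$. Once this is granted, merging the sub-bases via~\eqref{Equ:TeleDegRedCombine} reassembles a basis of $V(\vect{f},\tilde\AS[t])$, and the corollary follows.
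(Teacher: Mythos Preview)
Your argument is correct and mirrors the paper's own justification: the paper does not give a formal proof of the corollary but derives it from the preceding theorem (denominator bound is $1$), Theorem~\ref{Thm:DegreeBound} (degree bound for free), and the observation that in a \sigmaSE-extension $\alpha=1$ forces every leading-coefficient subproblem in the degree reduction to be a \PT\ problem rather than a general \FPLDE; your induction on $e$ with inner recursion on $m$ is exactly the content of the simplified algorithms \SolvePTPoly/\DegreeReductionPoly\ that the paper presents in lieu of a proof. The only cosmetic difference is that the paper's \DegreeReductionPoly\ terminates at $m=0$ with a final \PT\ call in $\tilde\AS$, whereas you phrase it as running down to $m=-1$ with trivial linear algebra; the two are equivalent since the $m=0$ step in your description already handles the constant term via \PT\ in $\tilde\AS$.
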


\small
\textbf{Algorithm \NSolvePTPoly}($\vect{f},\GG(t_1)\dots(t_e)$)\\
\textbf{Input:} a polynomial \sigmaSE-extension $\dfield{\GG(t_1)\dots(t_e)}{\sigma}$ of $\dfield{\GG}{\sigma}$ where Problem~\PT\ is solvable in $\dfield{\GG}{\sigma}$; $\vect{f}=(f_1,\dots,f_n)\in\GG[t_1,\dots,t_e]^n$.\\
\textbf{Output:} a basis of $V(\vect{f},\GG(t_1)\dots(t_e))=V(\vect{f},\GG[t_1,\dots,t_e])$ over $\KK:=\const{\GG}{\sigma}$.

\vspace*{-0.2cm}

\begin{enumerate}
\item IF $e=0$, compute a basis $B$ of $V(\vect{f},\GG)$ and RETURN $B$.
\item Define $m:=\max_{1\leq i\leq n}\deg_{t_e}(f_i)+1$.\hspace*{1cm}{(*Note: $\vect{f}\in\GG[t_1,\dots,t_{e-1}][t_e]^n_{m}$ and $m\geq0$*)}
\item Get $B:=\NDegreeReductionPoly(m,\vect{f},\GG(t_1,\dots,t_e))$ and RETURN $B$.
\end{enumerate}

\smallskip

\textbf{Algorithm \NDegreeReductionPoly}($m,\vect{f'},\GG(t_1,\dots,t_e)$)\\
\textbf{Input:} $m\in\NN$; a polynomial \sigmaSE-extension $\dfield{\GG(t_1)\dots(t_e)}{\sigma}$ of $\dfield{\GG}{\sigma}$ where Problem~\PT\ is solvable in $\dfield{\GG}{\sigma}$; $\vect{f'}=(f'_1,\dots,f'_{\nu})\in\GG[t_1,\dots,t_{e-1}][t_e]^{\nu}_{m}$.\\
\textbf{Output:} a basis of $V(\vect{f'},\GG[t_1,\dots,t_{e-1}][t_e]_m)$ over $\KK:=\const{\GG}{\sigma}$\\
\noindent Denote $\AS:=\GG[t_1\dots,t_{e-1}]$, $t:=t_e$.

\vspace*{-0.2cm}

\begin{enumerate}
\item  IF $m=0$, get $B:=\NSolvePTPoly(\vect{f'},\GG(t_1)\dots(t_{e-1}))$ and RETURN $B$.
\item Define $\vect{\tilde{f}}\in\AS^{\nu}$ as in~\eqref{Equ:fGoundField}.
\item Get $\tilde{B}:=\NSolvePTPoly(\vect{\tilde{f}},\GG(t_1)\dots(t_{e-1}))$, say
$\tilde{B}=\{(c_{i1},\dots,c_{i{\nu}},g_i)\}_{1\leq i\leq\lambda}\subseteq\KK^{\nu}\times\AS$.
\item Let $\vect{C}:=(c_{ij})\in\KK^{\lambda\times\nu}$, $\vect{g}=(g_1,\dots,g_{\lambda})\in\AS^{\lambda}$, and set $
\vect{\phi}\in\AS[t]_{m-1}^{\lambda}$ as in~\eqref{Equ:PhiDef}.
\item Get 
$G:=\NDegreeReductionPoly(m-1,\vect{\phi},\GG(t_1)\dots(t_e))$, say
$G=\{(d_{1i},\dots,d_{i\lambda},h_i)\}_{1\leq i\leq\mu}$.
\item Let $\vect{D}=(d_{ij})\in\KK^{\mu\times\lambda}$, $\vect{h}:=(h_1,\dots,h_{\mu})\in\AS[t]_{m-1}^{\mu}$, and define $(e_{ij}):=\vect{D}\,\vect{C}\in\KK^{\mu\times\nu}$ and $(p_1,\dots,p_{\mu}):=\vect{D}\,\vect{g}\,t^m+\vect{h}\in\AS[t]_m^{\mu}$. 
\item RETURN $\{(e_{i1},\dots,e_{i\nu},p_i)\}_{1\leq i\leq\mu}$.
\end{enumerate}
\normalsize

\begin{example}\label{Exp:PolySummation}
Consider $f$ given in~\eqref{Equ:ConcreteTelef} within the \pisiSE-field $\dfield{\QQ(k)(p)(h)}{\sigma}$ from Example~\ref{Exp:ConstructionProb}. Note that $\dfield{\GG(h)}{\sigma}$ with $\GG=\QQ(k)(p)$ is trivially a polynomial \sigmaSE-extension of $\dfield{\GG}{\sigma}$. Since $f\in\GG[h]$, we can calculate a basis of $V((f),\GG(h))$ by executing
\NSolvePTPoly($(f),\GG(h)$). We obtain the degree bound $m=3$ and start the degree reduction with \NDegreeReductionPoly($3,\vect{\phi_3},\GG(h))$) where $\vect{\phi_3}(=\vect{f'})=(f)$.\\
$\bullet$ We determine the vector $\vect{\tilde{f}_3}(=\tilde{\vect{f}})=(0)$ using~\eqref{Equ:fGoundField}, and take the basis $\tilde{B_3}=\{(1,0),(0,1)\}$ of $V(\vect{\tilde{f}_3},\GG)$. Hence we get $\vect{C_3}=(1,0)$ and $\vect{g_3}=(0,1)$. This gives $\vect{f_2}=\vect{\phi}=(f,-\sigma(h^3)+h^3)=(f,-\frac{3 h^2}{k+1}-\frac{3 h}{(k+1)^2}-\frac{1}{(k+1)^3})$ using~\eqref{Equ:PhiDef}.\\ 
$\bullet$ We repeat the degree reduction for $m=2$. Taking the coefficients of $h^2$ from the entries in $\vect{f_2}$ produces the vector
$\vect{\tilde{f}_2}=
((1+k) (2+2 k+k^2) p,-(3/(1+k)))$ using~\eqref{Equ:fGoundField}; here $\vect{\phi_2}$ and $\vect{\tilde{f}_2}$ take over the role of $\vect{f'}$ and $\vect{\tilde{f}}$, respectively. We
calculate the basis $\tilde{B_2}=\{( -1 , 0 , -k (k+1) p), (0 , 0 , 1)\}$
of $V(\vect{\tilde{f}_2},\GG)$ by executing \SolvePTRat($\vect{\phi_2},\GG$). This gives $\vect{g_2}=(-k (k+1) p,1)$ and 
$\vect{C_2}=\left(\begin{smallmatrix}-1&0\\
0&0\end{smallmatrix}\right)$.
Next, we calculate
$\vect{\phi_1}=(2 h k p+\frac{k p}{k+1},-\frac{2 h}{k+1}-\frac{1}{(k+1)^2})$ using~\eqref{Equ:PhiDef}.\\ 
$\bullet$ We active the degree reduction with $m=1$. This time we get the leading coefficient vector
$\vect{\tilde{f}_1}=(2 k p,-\frac{2}{k+1})$ using~\eqref{Equ:fGoundField}.
We obtain the basis $\tilde{B_1}=\{(\frac{1}{2}, 0 , p),(0,0,1)\}$ of $V(\vect{\tilde{f}_1},\GG)$ by executing \SolvePTRat($\vect{\phi_1},\GG$); the calculation steps agree with those of \SolveFPLDE($(-1,1),\vect{\phi_1},\GG$) as given in
Example~\ref{Exp:Full1}. This gives $\vect{C_1}=\left(\begin{smallmatrix}\frac{1}{2}& 0\\0&0\end{smallmatrix}\right)$ and $\vect{g_1}=(p,1)$. Finally, using~\eqref{Equ:PhiDef} we calculate
\begin{equation}\label{Equ:BaseVector}
\vect{\phi_0}=(\frac{(-k-2) p}{2 (k+1)},-\frac{1}{k+1}). 
\end{equation}
$\bullet$ A basis of $V(\vect{\phi_0},\GG)$ is $\{(0,0,1)\}$; for the calculation steps see Example~\ref{Exp:Full2}.\\
To this end, we combine the solutions. We get $\vect{D_0}=(0,0)$ and $\vect{h_0}=(1)$. This yields $\vect{E_1}=\vect{C_0}\,\vect{D_0}=(0,0)$ and $\vect{p_1}=(1)$, i.e., we get the basis
$\{(0,0,1)\}$ of $V(\vect{\phi_1},\GG[h]_1)$. Similarly, we get the basis $\{(0,0,1)\}$ of $V(\vect{\phi_2},\GG[h]_2)$ and the basis $(0,1)$ of $V(\vect{\phi_3},\GG[h]_3)=V((f),\GG(h))$.
\end{example}

\section{Parameterized first-entry telescoping solutions}\label{Sec:FirstEntry}

The developed algorithms can be optimized further for first-entry solutions.

\begin{definition}
Let $\dfield{\AS}{\sigma}$ be a difference ring with $\KK:=\const{\AS}{\sigma}$. 
Let $W$ be a $\KK$-subspace of $\AS$ and let $\vect{f}\in\AS^n$. An element $(c_1,\dots,c_n,g)\in V(\vect{f},W)$ is called \notion{first-entry solution} if $c_1\neq0$. 
A \notion{first-entry solution set} of $V(\vect{f},W)$ is the empty set if there does not exist a first-entry solution of $V(\vect{f},W)$. Otherwise, the first-entry solution set consists of exactly one such first entry solution element.
\end{definition}

\noindent Trivially, a first-entry solution set can be determined by computing a basis of $V(\vect{f},\GG(t_1)\dots(t_e))$ and taking --if possible-- a vector where the first entry is non-zero. Note that exactly such a solution is needed if one solves the telescoping problem or if one hunts for a creative telescoping solution (see the introduction). For the following considerations the basis representation is refined as follows.

\begin{definition}
Let $\KK$ be a field being a subring of $\AS$. A matrix $(c_{ij})\in\KK^{\lambda\times n}$ is \notion{first-row reduced} if for the first column $(c_{11},\dots,c_{\lambda1})$ we have that $c_{21}=\dots=c_{2\lambda}=0$, i.e., only the first entry may be non-zero. Moreover, a linearly independent set $\{(c_{i1},\dots,c_{in},g_i)\}_{1\leq i\leq\lambda}\subseteq\KK^n\times\AS$ over $\KK$ is \notion{first-entry reduced} if $(c_{ij})\in\KK^{\lambda\times n}$ is first-row reduced.
\end{definition}

\begin{example}
The matrices $\vect{C_3}=(1,0)$, $\vect{C_2}=\left(\begin{smallmatrix}-1&0\\
0&0\end{smallmatrix}\right)$,and $\vect{C_3}=(1,0)$
from Example~\ref{Exp:PolySummation} are first-row reduced, i.e., in the first column only the first entry is non-zero.
\end{example}


Subsequently, we simplify the already derived reduction technique for Problem~\PT\ (resp.\ Algorithm \SolvePTRat)  as much as possible such that exactly a first-entry solution set is produced. 
More precisely, suppose we are given a \pisiSE-extension $\dfield{\GG(t_1)\dots(t_e)}{\sigma}$ of $\dfield{\GG}{\sigma}$ being \FPLDE-solvable; let $\vect{f}=(f_1,\dots,f_n)$.\\
\noindent If $f_1=0$, $\{(1,0,\dots,0)\}$ is a first-entry solution.
Moreover, if $e=0$, we calculate by assumption a basis of $V=V(\vect{f},\GG)$ and extract a first-entry solution set (ideally, one should use here improved algorithms to determine a first-entry solution set).

Now let $\AS:=\GG(t_1,\dots,t_{e-1})$ and $t:=t_e$ with $\sigma(t)=\alpha\,t+\beta$. We proceed as in Section~\ref{Sec:NaiveTele}. Write $\vect{f}$ in the form~\eqref{Equ:RatPolyPart} with $\vect{r}\in\fracF{\AS(t)}^n$ and $\vect{p}\in\AS[t]^n$. Consider the solution space $V_1=V(\vect{r},\fracF{\AS(t)})$ of the rational part. If $V_1=\{\vect{0}\}$, then there is no way to get a first-entry solution for $V$, and $\{\}$ is the first-entry solution set. Otherwise,
let $B_1=\{(c_{i1},\dots,c_{in},g_i)\}_{1\leq i\leq\nu}\subseteq\KK^n\times\fracF{\AS(t)}$ be a first-entry reduced basis of $V_1$, take the first-row reduced matrix
$\vect{C}:=(c_{ij})\in\KK^{\nu\times n}$ and define $\vect{f'}\in\AS[t]^{\nu}$ as in~\eqref{Equ:PolyPart}. Finally, take a first-entry reduced basis 
$B_2=\{(d_{i1},\dots,d_{i\nu},h_i)\}_{1\leq i\leq\mu}\subseteq\KK^{\nu}\times\AS[t]_m$ of $V_2=V(\vect{f'},\AS[t])$. Hence we get the first-row reduced matrix $\vect{D}=(d_{ij})\in\KK^{\mu\times\nu}$ and define the matrix
\begin{equation}
\vect{E}=(e_{ij}):=\vect{D}\,\vect{C}\in\KK^{\mu\times n}
\end{equation}
and $q_i\in\AS(t)$ as in~\eqref{Equ:CombineRat}. In particular,
we obtain a basis $B=\{(e_{i1},\dots,e_{in},q_i)\}_{1\leq i\leq\mu}$ of $V$. 
Moreover, since $\vect{C}$ and $\vect{D}$ are first-row reduced, $\vect{E}$ is first-row reduced. In particular, the first entry of the first column of $\vect{E}$ is non-zero if and only if the first entry of the first column for both, $\vect{C}$ and $\vect{D}$, are non-zero. In other words, $B$ contains a first entry solution if and only if $B_1$ and $B_2$ have a first entry solution.
With this knowledge, the reduction can be simplified as follows. If the first column of $\vect{C}$ is the zero-vector, a first-entry solution  of $V$ does not exist. Hence $\{\}$ is the first-entry solution set. Otherwise, $B$ has a first-entry solution if and only if $B_2$ has a first-entry solution. In particular, if we are given a first-entry solution of $V_2$, then we obtain immediately a first-entry solution of $V$.
The corresponding modifications of Algorithm \SolvePTRat\ are summarized in Algorithm \FirstEntryPT.

In particular, we do not have to compute a full bases of $V_2=V(\vect{f'},\AS[t])$, but we only need a first-entry solution set of $V_2$.
Here the following refinements are in place.
If $f'_1=0$, then $\{(1,0,\dots,0)\}$ is a first-entry solution set.
Otherwise, let $m$ be the degree bound of $V_2$ as given in~\eqref{Equ:DegBound}. Note that $m=-1$ is only possible if $\sigma(t)=\alpha\,t$ and $\vect{f'}=\vect{0}$. But this case is already covered with $f'_1=0$. Thus we may assume that $m\geq0$.
Then the coefficient of the highest possible term $t^m$ of the polynomial solutions is contained in the solution space $\tilde{V}=V((-1,\alpha^m),\vect{\tilde{f}},\AS)$
with~\eqref{Equ:fGoundField}. If $\tilde{V}=\{\vect{0}\}$, it follows that there is no first-entry solution of $V_2$ and $\{\}$ is the first-entry solution set. Otherwise, let $\{(c_{i1},\dots,c_{i{\nu}},g_i)\}_{1\leq i\leq\lambda}\subseteq\KK^{\nu}\times\AS$ be a first-entry reduced basis of $\tilde{V}$, and take $\vect{C}:=(c_{ij})\in\KK^{\lambda\times{\nu}}$ and $\vect{g}:=(g_1,\dots,g_{\lambda})\in\AS^{\lambda}$. Finally, define~\eqref{Equ:PhiDef} and take a basis
$\{(d_{1i},\dots,d_{i\lambda},h_i)\}_{1\leq i\leq\mu}\subseteq\KK^{\lambda}\times\AS[t]_{m-1}$ of $W=V(\phi,\AS[t]_{m-1})$. Take $\vect{D}:=(d_{ij})\in\KK^{\mu\times\lambda}$ and
define 
$\vect{E}=(e_{ij}):=\vect{D}\,\vect{C}\in\KK^{\mu\times\nu}$
and $p_i\in\AS[t]_m$ as in~\eqref{Equ:TeleDegRedCombine}.
Then $B_2=\{(e_{i1},\dots,e_{i\nu},p_i)\}_{1\leq i\leq\mu}$ is a basis of $V_2$. 
As above we can conclude that $V_2$ has a first-entry solution if and only if $W$ has a first-entry solution. In particular, given an explicit first-entry solution of $W$ and a basis of $\tilde{V}$ (which contains a first-entry solution), one can construct a first-entry solution of $V_2$. In summary, it suffices to calculate a first-entry solution set of $W$ (instead of calculating a full basis of $W$).

\begin{example}\label{Exp:FirstEntrySol}
In Example~\ref{Exp:PolySummation} we calculated a full basis of $V((f),\QQ(k)(p)(h))$ where $f$ is given in~\eqref{Equ:ConcreteTelef}. Now we calculate only a first-entry solution set. In the beginning, the calculations are the same as in Example~\ref{Exp:PolySummation}. Since the matrices $\vect{C_i}$ $i=3,2,1$ given in Example~\ref{Exp:PolySummation} are first-entry reduced and the corresponding bases $\tilde{B_i}$ contain a first-entry solution, nothing changes in these steps. Finally, we enter the problem to compute a first-entry solution for $V_0=V(\vect{\phi_0},\GG)$ with~\eqref{Equ:BaseVector}. Hence we continue the reduction as given in Example~\ref{Exp:Full2} with $\vect{f}:=\vect{\phi_0}$. When we enter here the degree reduction we calculate the first-row reduced matrix~\eqref{Equ:CMatrixBad}. Exactly here our proposed method delivers a shortcut. Since there is no first-entry solution, $V_0$ and thus also $V$ has no first entry solution. Thus we return the first entry solution set $\{\}$.
\end{example}

\noindent In the previous example just a small part of the reduction could be avoided. However, if one has more field generators (e.g., $e=100$), big parts of the full reduction process can be skipped. The refined reduction can be summarized as follows.

\smallskip

\small
\textbf{Algorithm \NFirstEntryPT}($\vect{f},\FF$)\\
\textbf{Input:} a \pisiSE-extension $\dfield{\FF}{\sigma}$ of $\dfield{\GG}{\sigma}$ with $\FF=\GG(t_1)\dots(t_e)$ and $\KK:=\const{\GG}{\sigma}$ which is \FPLDE-solvable; $\vect{f}\in\FF^n$.\\
\textbf{Output:} a first-entry solution set of $V(\vect{f},\FF)$.

\vspace*{-0.2cm}

\begin{enumerate}
\item IF $f_1=0$, RETURN $\{(1,0,\dots,0)\}$.\quad\textbf{(*shortcut*)}\label{FirstEntry:ShortCut1}

\item IF $e=0$, compute a first-entry solution set $B$ of $V(\vect{f},\GG)$ and RETURN $B$.

\item[] Denote $\AS:=\GG(t_1\dots,t_{e-1})$, $t:=t_e$.

\item Compute $\vect{r}\in\fracF{\AS(t)}^n$ and $\vect{p}\in\AS[t]^n$ such that $\vect{f}=\vect{r}+\vect{p}$.

\item Get a first-entry reduced basis of $V(\vect{r},\fracF{\AS(t)})$, say
$B_1=\{(c_{i1},\dots,c_{in},g_i)\}_{1\leq i\leq\nu}$.\label{FirstEntry:Rat}
\item IF $B_1=\{\}$ OR $c_{11}=\dots=c_{\nu1}=0$, RETURN $\{\}$.\label{FirstEntry:ShortCut2}
\quad\textbf{(*shortcut*)}

\item Take $\vect{C}:=(c_{ij})\in\KK^{\nu\times n}$, $\vect{g}:=(g_1,\dots,g_{\nu})\in\fracF{\AS(t)}^{\nu}$, and let 
$\vect{f'}:=(f'_1,\dots,f'_{\nu})=\vect{C}\,\vect{p}\in\AS[t]^{\nu}$.\label{FirstEntry:fP}

\item Define $m\in\NN\cup\{-1\}$ as given in~\eqref{Equ:DegBound}.

\item Get
$B_2:=\NDegreeReductionFirstEntry(m,\vect{f'},\AS(t))$. IF $B_2=\{\}$, RETURN $\{\}$.

\item Otherwise, let $B_2=\{(d_{11},\dots,d_{1\nu},h_1)\}\subseteq\KK^{\nu}\times\AS[t]_m$ and
take $\vect{D}:=(d_{11},\dots,d_{1\nu})\in\KK^{1\times\nu}$, $\vect{h}:=(h_1)\in\AS[t]^{1}_m$.
Define $(e_{11},\dots,e_{1n}):=\vect{D}\,\vect{C}\in\KK^{1\times n}$ and $(q_1):=\vect{D}\,\vect{g}+\vect{h}\in\AS(t)^{1}$.

\item RETURN $\{(e_{11},\dots,e_{1n},q_1)\}$.
\end{enumerate}

\smallskip

\textbf{Algorithm \NDegreeReductionFirstEntry}($m,\vect{f'},\AS(t)$)\\
\textbf{Input:} $m\in\NN\cup\{-1\}$, a \pisiSE-extension $\dfield{\AS(t)}{\sigma}$ of $\dfield{\GG}{\sigma}$  with $\sigma(t)=\alpha\,t+\beta$ which is \FPLDE-solvable; $\vect{f'}=(f'_1,\dots,f'_{\nu})\in\AS[t]^{\nu}_m$.\\
\textbf{Output:} a first-entry solution of $V(\vect{f'},\AS[t]_m)$ over $\KK:=\const{\GG}{\sigma}$.

\vspace*{-0.2cm}

\begin{enumerate}
\item IF $f'_1=0$, RETURN $\{(1,0,\dots,0)\}$.  \quad(*\textbf{shortcut} - note: this covers also the case $m=-1$*)\label{FirstEntryDR:ShortCut1}

\item IF $m=0$, RETURN \FirstEntryPT$(\vect{f'},\AS)$.\label{FirstEntryDR:m=0}

\item Define $\vect{\tilde{f}}\in\AS^{\nu}$ as in~\eqref{Equ:fGoundField}.\label{FirstEntryDR:fTilde}

\item\label{FirstEntryDR:CoeffP} Get $
\tilde{B}:=\left\{\begin{array}{ll}
\NSolvePTRat(\vect{\tilde{f}},\AS)&\text{ if $\alpha=1$}\\
\NSolveFPLDE((-1,\alpha^m),\vect{\tilde{f}},\AS)&\text{ if $\alpha\neq1$.}
\end{array}\right.$, say $\tilde{B}=\{(c_{i1},\dots,c_{i\nu},g_i)\}_{1\leq i\leq\lambda}$.\\ 
IF the bases is not first-entry reduced, reduce it.

\item IF $\tilde{B}=\{\}$ OR $c_{11}=\dots=c_{\lambda1}=0$, RETURN $\{\}$.\quad\textbf{(*shortcut*)}
\label{FirstEntryDR:ShortCut2}

\item Take $C:=(c_{ij})\in\KK^{\lambda\times\nu}$, $\vect{g}:=(g_1,\dots,g_{\lambda})\in\AS^{\lambda}$, and let
$\vect{\phi}\in\AS[t]_{m-1}^{\lambda}$ as in~\eqref{Equ:PhiDef}.\label{FirstEntryDR:Phi}
\item Get $G:=\NDegreeReductionFirstEntry(m-1,\vect{\phi},\AS(t))$. IF $G=\{\}$, RETURN $\{\}$. 

\item \noindent Otherwise, let $G=\{(d_{11},\dots,d_{1\lambda},h_1)\}\subseteq\KK^{\lambda}\times\AS[t]_{m-1}$.
Take $\vect{D}:=(d_{11},\dots,d_{1\lambda})\in\KK^{1\times\lambda}$, $\vect{h}:=(h_1)\in\AS[t]_{m-1}^{1}$ and
define $(e_{11},\dots,e_{1\nu}):=\vect{D}\,\vect{C}\in\KK^{1\times\nu}$ and $(p_1):=\vect{D}\,\vect{g}\,t^m+\vect{h}\in\AS[t]_m^1$.
\item RETURN $\{(e_{11},\dots,e_{1\nu},p_1)\}$.
\end{enumerate}
\normalsize

\begin{remark}\label{Remark:FurtherShortCuts}
In Algorithm \NDegreeReductionFirstEntry\ we execute
$\NSolvePTRat(\vect{\tilde{f}},\AS)$ in Line~\ref{FirstEntryDR:CoeffP} if $\alpha=1$. If the found basis does not contain a first-entry solution, the algorithm stops. Thus one can modify $\NSolvePTRat(\vect{\tilde{f}},\AS)$ such that it stops as soon as possible when it is clear that a first-entry solution does not exist. More precisely, the modified version is similar to \NFirstEntryPT\ and \NDegreeReductionFirstEntry\ with the difference that a full basis is returned whenever it contains a first-entry solution. Moreover, if $\alpha\neq1$, the algorithms \SolveFPLDE\ and \DegreeReductionFPLDE\ can and should be modified by similar refinements.
\end{remark}

The following observations and properties will be crucial for the next section.

\noindent Let $\vect{f_e}\in\GG(t_1)\dots(t_e)^{n_e}$ with $e\geq1$ and consider the reduction process as carried out in \FirstEntryPT$(\vect{f_e},\GG(t_1)\dots(t_e))$. Then in Line~\ref{FirstEntry:ShortCut1} it might find a solution. Moreover, in Line~\ref{FirstEntry:ShortCut2} it might return $\{\}$. Otherwise it calls \DegreeReductionFirstEntry$(m,\vect{f'},\GG(t_1)\dots(t_e))$ for some $m\in\NN\cup\{-1\}$. If $f'_1=0$ (in particular if $m=-1$), we find again a solution in Line~\ref{FirstEntryDR:ShortCut1}. If $m\geq1$,
we might return $\{\}$ in Line~\ref{FirstEntryDR:ShortCut2}. 
Otherwise we apply $\DegreeReductionFirstEntry(m-1,\vect{\phi},\GG(t_1)\dots(t_e))$. In a nutshell, we run in certain shortcuts (returning $\{\}$ or $\{(1,0,\dots,0)\}$) or we call \DegreeReductionFirstEntry\ $m$-times, until we enter in the case $m=0$ and execute \FirstEntryPT$(\vect{f_{e-1}},\AS)$ in Line~\ref{FirstEntryDR:m=0} for some $\vect{f_{e-1}}\in\AS^{n_{e-1}}=\GG(t_1)\dots(t_{e-1})^{n_{e-1}}$. If one enters in these shortcuts for a particular given reduction, $\vect{f_e}$ is called base-vector (of the given reduction). Otherwise, the resulting vector $\vect{f_{e-1}}$ is a reduction-vector of $\vect{f_e}$ and we write $\vect{f_e}\to\vect{f_{e-1}}$. 

\begin{example}
Define $\vect{f_2}=(f)$ with $f$ given in~\eqref{Equ:ConcreteTelef} and $\vect{f_1}=\vect{\phi_0}$ given in~\eqref{Equ:BaseVector}. Then $\vect{f_1}$ is a reduction vector of $\vect{f_2}$ (in short $\vect{f_2}\to\vect{f_1}$). In particular, $\vect{f_1}$ is a base vector.
\end{example}

Note that such a vector $\vect{f_{e-1}}$ is not uniquely determined. Depending on the choice of used basis representations (i.e., $B_1,\tilde{B}$ during the reduction), different reduction vectors arise, in particular shortcuts might or might not apply; see Lemma~\ref{Lemma:ChainOfReductionVectors:equivalent}.\\
As worked out above, if there is a reduction vector $\vect{f_{e-1}}$ of $\vect{f_e}$, the following holds:  $V(\vect{f_e},\AS(t_e))$ has a first-entry solution if and only if $V(\vect{f_{e-1}},\AS)$ has a first-entry solution. Applying this observation iteratively, we end up at the the following lemma\footnote{To execute the steps above one needs the property that $\dfield{\GG}{\sigma}$ is \FPLDE-solvable. However, in the following we are only interested in the the exploration of the possible reduction processes with the corresponding reduction vectors without the need to calculate them explicitly. We therefore drop the \FPLDE-solvability in the statements below.}.

\begin{lemma}\label{Lemma:ChainOfReductionVectors:solvable}
Let $\dfield{\GG(t_1)\dots(t_e)}{\sigma}$ be a \pisiSE-extension of $\dfield{\GG}{\sigma}$; let $\vect{f_i}\in\GG(t_1)\dots(t_i)^{n_i}$ with $s\leq i\leq e$ be a chain of reduction vectors, i.e.,
\begin{equation}\label{Equ:Chainvector}
\vect{f_e}\to\vect{f_{e-1}}\to\dots\to\vect{f_s}.
\end{equation}
Then  $V(\vect{f_e},\GG(t_1)\dots(t_e))$ has a first-entry solution iff $V(\vect{f_s},\GG(t_1)\dots(t_s))$ has one. 
\end{lemma}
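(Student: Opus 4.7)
The plan is to derive the statement by induction on the chain length $e-s$, leveraging the one-step preservation property that was established (in the paragraph just preceding the lemma) during the detailed analysis of the Algorithms \FirstEntryPT\ and \DegreeReductionFirstEntry.

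The base case $e = s$ is trivial: the chain is empty and the statement reduces to a tautology. For the inductive step, assume the claim holds for chains of length $e-1-s$, and consider a chain $\vect{f_e}\to\vect{f_{e-1}}\to\dots\to\vect{f_s}$ of length $e-s\geq 1$. By definition of the reduction $\vect{f_e}\to\vect{f_{e-1}}$, executing \FirstEntryPT$(\vect{f_e},\GG(t_1)\dots(t_e))$ does not terminate via one of the shortcuts in Line~\ref{FirstEntry:ShortCut1}, Line~\ref{FirstEntry:ShortCut2}, or in the inner \DegreeReductionFirstEntry\ calls (Lines~\ref{FirstEntryDR:ShortCut1} and~\ref{FirstEntryDR:ShortCut2}), but instead successfully reaches the base call \FirstEntryPT$(\vect{f_{e-1}},\GG(t_1)\dots(t_{e-1}))$ in Line~\ref{FirstEntryDR:m=0} of \DegreeReductionFirstEntry.

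The key observation, which was already spelled out above the lemma, is that the matrices $\vect{C}$ used in Line~\ref{FirstEntry:fP} of \FirstEntryPT\ and Line~\ref{FirstEntryDR:Phi} of \DegreeReductionFirstEntry\ are first-row reduced, so the compound matrix $\vect{E}=\vect{D}\,\vect{C}$ carries a nonzero entry in position $(1,1)$ precisely when both $\vect{C}$ and $\vect{D}$ do. Composing these observations across the steps from $V(\vect{f_e},\GG(t_1)\dots(t_e))$ down to $V(\vect{f_{e-1}},\GG(t_1)\dots(t_{e-1}))$ yields the one-step equivalence: $V(\vect{f_e},\GG(t_1)\dots(t_e))$ admits a first-entry solution if and only if $V(\vect{f_{e-1}},\GG(t_1)\dots(t_{e-1}))$ does. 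Chaining this equivalence with the induction hypothesis applied to $\vect{f_{e-1}}\to\dots\to\vect{f_s}$ yields the claim.

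The main (and essentially only) non-trivial point is the one-step equivalence, but this is not really an obstacle here since its proof is already fully contained in the text preceding the lemma; the remaining argument is purely formal. I would therefore keep the written proof very short, stating the induction and citing the one-step property rather than re-deriving the matrix-level bookkeeping. If a more self-contained presentation were desired, one could alternatively phrase the argument by unrolling the reduction as a composition of $\KK$-linear maps between the respective solution spaces whose first columns remain upper-triangular with a $(1,1)$ entry that is a product of the corresponding entries at each level, making nonvanishing multiplicative across the chain.
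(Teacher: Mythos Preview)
Your proposal is correct and follows essentially the same approach as the paper: the paper does not give a formal proof environment for this lemma but simply states that the one-step equivalence (established in the discussion of \FirstEntryPT\ and \DegreeReductionFirstEntry\ immediately above) is applied iteratively along the chain. Your inductive packaging of that iteration is exactly what is intended.
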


\begin{lemma}\label{Lemma:ChainOfReductionVectors:equivalent}
Let $\dfield{\GG(t_1)\dots(t_e)}{\sigma}$ be a \pisiSE-ext.\ of $\dfield{\GG}{\sigma}$. Consider a chain of reduction vectors~\eqref{Equ:Chainvector} with $\vect{f_i}\in\GG(t_1)\dots(t_i)^{n_i}$ for $s\leq i\leq e$ where $\vect{f_s}$ is a base vector. Moreover take another chain of reduction vectors $\vect{\bar{f}_e}\to\vect{\bar{f}_{e-1}}\to\dots\to\vect{\bar{f}_{\bar{s}}}$ with $\vect{\bar{f}_i}\in\GG(t_1)\dots(t_i)^{\bar{n}_i}$ for $\bar{s}\leq i\leq e$ where $\vect{\bar{f}_{\bar{s}}}$ is a base vector. Suppose that $\vect{f_e}=\vect{\bar{f}_e}$. 
\begin{enumerate}
\item For all $j$ with $e\geq j\geq\max(s,\bar{s})$ we have that $n_j=\bar{n}_j$ and $\vect{\bar{f}_j}=\vect{T_j}\,\vect{f_j}$ for first-row reduced invertible matrices $\vect{T_j}\in\KK^{n_j\times n_j}$. 
\item Moreover, if $V(\vect{f_e},\GG(t_1)\dots(t_e))$ has no first-entry solution, then $s=\bar{s}$.
\end{enumerate}
\end{lemma}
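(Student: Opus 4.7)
I would prove part~1 by downward induction on $j$, with base case $j = e$ where $\vect{T_e} = I_{n_e}$ is trivially first-row reduced and invertible. The heart of the induction step is a single auxiliary fact that I would isolate as a small lemma: if a $\KK$-subspace $V \subseteq \KK^n \times \AS$ admits a first-entry solution, then any two first-entry reduced bases of $V$ are related by a first-row reduced invertible change-of-basis matrix. Examining the first column of the two basis matrices makes this transparent: that column equals $(c_{11}, 0, \dots, 0)^T$ with $c_{11} \neq 0$, so any change-of-basis matrix must carry it to the analogous column of the other basis, forcing its first column to have only the $(1,1)$ entry possibly nonzero. It also deserves noting (and is easy to check) that first-row reduced invertible matrices form a group under multiplication.

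For the induction step I would trace how the relation $\vect{\bar{f}_j} = \vect{T_j}\,\vect{f_j}$ is propagated through the three stages that produce $\vect{f_{j-1}}$ from $\vect{f_j}$ inside \FirstEntryPT. Polynomial division is $\KK$-linear, so the splits in Line~3 give $\vect{\bar{r}} = \vect{T_j}\,\vect{r}$ and $\vect{\bar{p}} = \vect{T_j}\,\vect{p}$. The $\KK$-linear map $(\vect{\bar{c}}, g) \mapsto (\vect{\bar{c}}\,\vect{T_j}, g)$ is an isomorphism $V(\vect{\bar{r}}, \fracF{\AS(t)}) \to V(\vect{r}, \fracF{\AS(t)})$ which preserves first-entry solutions (this uses precisely that $\vect{T_j}$ is first-row reduced and that its $(1,1)$ entry, being the only nonzero entry of the first column of an invertible matrix, is nonzero). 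The auxiliary lemma then yields $\bar{\vect{C}}\,\vect{T_j} = \vect{S_1}\,\vect{C}$ for a first-row reduced invertible $\vect{S_1}$, and consequently $\vect{\bar{f}'} = \vect{S_1}\,\vect{f'}$. Invertibility of $\vect{S_1}$ forces $\vect{\bar{f}'}$ and $\vect{f'}$ to share the same maximal entry-degree, so the degree bound from Theorem~\ref{Thm:DegreeBound} agrees in both chains. Applying the same pattern to the leading-coefficient spaces at each successive level of the degree reduction propagates a first-row reduced invertible matrix through every $\vect{\phi}$-recomputation, so at level $m=0$ we arrive at $\vect{\bar{f}_{j-1}} = \vect{T_{j-1}}\,\vect{f_{j-1}}$ with $\vect{T_{j-1}}$ a product of such matrices, hence itself first-row reduced invertible.

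For part~2, assume $V(\vect{f_e}, \FF)$ has no first-entry solution. By Lemma~\ref{Lemma:ChainOfReductionVectors:solvable} the same holds at every reduction vector of either chain. The shortcuts returning $\{(1,0,\dots,0)\}$ (Lines~\ref{FirstEntry:ShortCut1} and~\ref{FirstEntryDR:ShortCut1}) always bubble up to a first-entry solution of the original problem, since along a chain every basis matrix has $c_{11} \neq 0$ and the combination $\vect{D}\,\vect{C}$ with $\vect{D}=(1,0,\dots,0)$ extracts that nonzero entry in the first coordinate at each return; so these shortcuts cannot fire under our hypothesis. The remaining terminating shortcuts (Lines~\ref{FirstEntry:ShortCut2} and~\ref{FirstEntryDR:ShortCut2}) fire exactly when a certain auxiliary solution space has no first-entry solution, and by the bijection built in part~1 this condition is equivalent between the bar and non-bar versions at every level. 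Hence if $s < \bar{s}$, the shortcut firing at $\vect{\bar{f}_{\bar{s}}}$ would force the analogous shortcut to fire at $\vect{f_{\bar{s}}}$, contradicting that the non-bar chain continues past $\bar{s}$; by symmetry $s = \bar{s}$.

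The main obstacle I anticipate is the bookkeeping across the alternating levels inside \DegreeReductionFirstEntry: confirming that every basis choice and every $\vect{\phi}$-recomputation really does preserve the first-row reduced invertible relationship, and that the $c_{11} \neq 0$ hypothesis needed by the auxiliary lemma remains valid at every intermediate level of a chain (which it does precisely because the chain assumption rules out all the shortcuts whose failure corresponds to $c_{11} = 0$).
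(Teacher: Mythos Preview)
Your proposal is correct and follows essentially the same route as the paper's proof: downward induction on $j$ for part~(1), propagating the relation $\vect{\bar{f}_j}=\vect{T_j}\,\vect{f_j}$ through the rational-part basis and then through each degree-reduction step, and for part~(2) using Lemma~\ref{Lemma:ChainOfReductionVectors:solvable} to rule out the ``success'' shortcuts and part~(1) to synchronize the ``failure'' shortcuts. The only cosmetic difference is that you package the key observation---two first-entry reduced bases of a space possessing a first-entry solution are related by a first-row reduced invertible matrix---as a separate auxiliary lemma, whereas the paper derives the change-of-basis matrix $\vect{T}$ directly from $\vect{\bar{C}}\,\vect{T_j}=\vect{T}\,\vect{C}$ and then argues first-row reducedness from the fact that $\vect{C}$, $\vect{\bar{C}}$, and $\vect{T_j}$ are all first-row reduced; this is the same argument in different clothing.
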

\begin{proof}
\textbf{(1)} For $j=e$ the statement clearly holds. Now suppose that for $j$ with $e\geq j>\max(s,\bar{s})$ we have $n_j=\bar{n}_j$ and that  there is a first-row reduced invertible matrix $\vect{T_j}\in\KK^{n_j\times n_j}$ such that $\vect{\bar{f}_j}=\vect{T_j}\,\vect{f_j}$. Following the reduction with $n:=n_j$ as given in \FirstEntryPT\ we calculate 
$\vect{r},\vect{\bar{r}}\in\fracF{\AS(t)}^n$ and $\vect{p},\vect{\bar{p}}\in\AS[t]^n$  s.t.\ $\vect{f_j}=\vect{r}+\vect{p}$ and $\vect{\bar{f}_j}=\vect{\bar{r}}+\vect{\bar{p}}$. Note that $\vect{\bar{r}}=\vect{T_j}\,\vect{r}$ and $\vect{\bar{p}}=\vect{T_j}\,\vect{p}$. Now let $B_1=\{(c_{i1},\dots,c_{in},g_i)\}_{1\leq i\leq\nu}$ be the derived basis of $V(\vect{r},\fracF{\AS(t)})$ and take $\vect{C}\in\KK^{\nu\times n}$ and $\vect{f'}\in\AS[t]^{\nu}$ as given in Line~\ref{FirstEntry:fP}. Similarly, let $\bar{B}_1$ be the derived basis of $V(\vect{\bar{r}},\fracF{\AS(t)})$ and take the corresponding $\vect{\bar{C}}\in\KK^{\bar{\nu}\times n}$ and $\vect{\bar{f}'}\in\AS[t]^{\bar{\nu}}$. Note that $(c_1,\dots,c_n,g)\in V(\vect{r},\AS(t))$ iff $((c_1,\dots,c_n)\vect{T_j}^{-1})\wedge g\in V(\vect{\bar{r}},\AS(t))$. From this one can conclude that the dimension of $B_1$ equals the dimension of $\bar{B_1}$, i.e., $\nu=\bar{\nu}$.
By $\sigma(\vect{\bar{g}})-\vect{\bar{g}}=\vect{\bar{C}}\,\vect{\bar{r}}=\vect{\bar{C}}\,\vect{T_j}\,\vect{r}$ and using the fact that $B_1$ is a basis of $V(\vect{r},\AS(t))$, it follows that $\vect{\bar{C}}\,\vect{T_j}=\vect{T}\,\vect{C}$ and $\vect{\bar{g}}=\vect{T}\,\vect{g}$ for some $\vect{T}\in\KK^{\nu\times\nu}$. Therefore $\vect{T}$ is a basis transformation from $B_1$ to $B_2$ and is invertible. Moreover, since $\vect{C}$, $\vect{\bar{C}}$ and $\vect{T_j}$ are first-row reduced (by construction and assumption), $\vect{T}$ is first-row reduced. 
 Finally,
$\vect{\bar{f}'}=\vect{\bar{C}}\,\vect{\bar{p}}=\vect{\bar{C}}\,\vect{T_j}\,\vect{p}=\vect{T}\,\vect{C}\,\vect{p}=\vect{T}\,\vect{f'}.$
In summary, $\vect{f'}$ and $\vect{\bar{f}'}$ differ by a first-row reduced invertible matrix. Now enter in the degree reduction as given in \DegreeReductionFirstEntry.
Note that the degree bound $m$ is in both cases the same (see Theorem~\ref{Thm:DegreeBound}). Completely analogously it follows that one obtains for the corresponding $\vect{f'}$ and $\vect{\bar{f'}}$ the vectors $\vect{\phi}$ and $\vect{\bar{\phi}}$ which have the same length and which are the same up to the multiplication of a first-row reduced invertible matrix. Hence after $m$ degree reductions steps one gets the reduction vectors $\vect{f_{j-1}}$, $\vect{\bar{f}_{j-1}}$ of $\vect{f_{j}}$, $\vect{\bar{f}_{j}}$, respectively: both have the length $n_{j-1}$ and differ only by the multiplication of a first-entry reduced invertible matrix $\vect{T_{j-1}}$. This proves part (1).\\
\textbf{(2)} Now suppose that $V(\vect{f_e},\GG(t_1)\dots(t_e))$ has no first-entry solution. Then we show that a $\vect{f_i}$ has a reduction vector in any reduction iff $\vect{\bar{f}_i}$ has a reduction vector in any reduction. This shows that $s=\bar{s}$. Namely, by Lemma~\ref{Lemma:ChainOfReductionVectors:solvable} it follows that for all $s\leq i\leq e$, $V(\vect{f_i},\GG(t_1)\dots(t_i))$ has no first-entry solution, and similarly for all $\bar{s}\leq i\leq e$, $V(\vect{\bar{f}_i},\GG(t_1)\dots(t_i))$ has no first-entry solution. Thus we never enter in the shortcuts of 
Line~\ref{FirstEntryDR:ShortCut1} in Alg.~\FirstEntryPT\  and of Line~\ref{FirstEntryDR:ShortCut1} in Alg.~\DegreeReductionFirstEntry. 
Moreover by part (1), we exit in Line~\ref{FirstEntry:ShortCut2} of \FirstEntryPT\ with the input vector $\vect{f_i}$ iff we exist with the input vector $\vect{\bar{f}_i}$. Furthermore, using the fact that also the input vectors of \DegreeReductionFirstEntry\ are the same up to first-entry reduced invertible matrices, it follows that the exit in Line~\ref{FirstEntryDR:ShortCut2} within the degree reduction occurs iff it happens for both vectors. Hence for both chains to a base vector the length is the same.
\qed
\end{proof}

\noindent As a consequence, the length of the chain of reduction vectors is the same if there is no first-entry solution. However, different choices of basis representations in Line~\ref{FirstEntry:Rat} of \FirstEntryPT\ and Line~\ref{FirstEntryDR:CoeffP} of \DegreeReductionFirstEntry\ might deliver more compact reduction vectors (e.g., more zero-entries or less monomials in an entry) which in turn speeds up the underlying rational function arithmetic. In addition, if there is a first-entry solution the shortcuts in 
Line~\ref{FirstEntryDR:ShortCut1} of Alg.~\FirstEntryPT\  and in Line~\ref{FirstEntryDR:ShortCut1} of Alg.~\DegreeReductionFirstEntry\ might occur differently.

In Section~\ref{Sec:ReducedSol} the presented reduction algorithm will be slightly modified to find sum representations where the summand is expressed in the smallest possible subfield. In order to prove the correctness of this refined telescoping algorithm (see Theorem~\ref{Thm:ReducedAlg}) we 
need the following technical properties.

\begin{lemma}\label{Lemma:ChainOfReductionVectors:tDep}
Let $\dfield{\GG(t_1)\dots(t_e)}{\sigma}$ be a \pisiSE-ext.\ of $\dfield{\GG}{\sigma}$ and $\vect{f_i}\in\GG(t_1)\dots(t_i)^{n_i}$ with $s\leq i\leq e$ such that~\eqref{Equ:Chainvector}
where $\vect{f_s}$ is a base-vector. If there is no first-entry solution of $V(\vect{f_e},\GG(t_1)\dots(t_e))$, the first entry of $\vect{f_s}$ is from $\GG(t_1)\dots(t_s)\setminus\GG(t_1)\dots(t_{s-1})$. 
\end{lemma}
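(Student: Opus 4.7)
Set $\AS:=\GG(t_1)\dots(t_{s-1})$ and $t:=t_s$. Assume for contradiction that $(f_s)_1\in\AS$. The strategy is to construct an alternative execution of the algorithm on $\vect{f_s}$ which produces a reduction vector (rather than terminating in a shortcut), and then invoke Lemma~\ref{Lemma:ChainOfReductionVectors:equivalent}(2) to derive a contradiction with the hypothesis that $\vect{f_s}$ is a base vector.

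If $(f_s)_1=0$, then the shortcut in Line~\ref{FirstEntry:ShortCut1} of \NFirstEntryPT\ returns the first-entry solution $\{(1,0,\ldots,0)\}$ of $V(\vect{f_s},\AS(t))$, which by Lemma~\ref{Lemma:ChainOfReductionVectors:solvable} lifts to a first-entry solution of $V(\vect{f_e},\GG(t_1)\dots(t_e))$, contradicting the hypothesis. So we may set $a:=(f_s)_1\in\AS^*$. Split $\vect{f_s}=\vect{r}+\vect{p}$ according to~\eqref{Equ:RatPolyPart}; then $r_1=0$ and $p_1=a$.

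The main observation is that $(1,0,\ldots,0,0)\in V(\vect{r},\fracF{\AS(t)})$, since $r_1=0$ and $\sigma(0)-0=0$. Choose a first-entry reduced basis $B_1$ whose first element is this tuple; then the first row of $\vect{C}$ is $(1,0,\ldots,0)$ with $g_1=0$, so $(f')_1=p_1=a\in\AS^*$, which has degree zero in $t$. If the degree bound $m$ from~\eqref{Equ:DegBound} is $0$ (possible only in the \piE-case), then Line~\ref{FirstEntryDR:m=0} of \NDegreeReductionFirstEntry\ directly invokes \NFirstEntryPT$(\vect{f'},\AS)$, which yields a reduction vector $\vect{f_{s-1}}$ of $\vect{f_s}$ with first entry $a$. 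Otherwise $m\geq 1$, and we enter the degree reduction loop: I will show by induction on the recursion depth $k$ that one can always choose the auxiliary basis $\tilde B$ at step $k$ so that the current input vector has first entry equal to $a$. Indeed, if this holds at step $k$, then $\tilde f_1=\coeff(a,m-k)=0$, so $(1,0,\ldots,0,0)\in V((-1,\alpha^{m-k}),\vect{\tilde f},\AS)$ (with $q=0$); picking it as the first element of $\tilde B$ gives first column $(1,0,\ldots,0)^T$ of $\vect{C}$ with $g_1=0$, and~\eqref{Equ:PhiDef} forces $\phi_1=a$. After $m$ recursions the call \NFirstEntryPT$(\vect{\phi}^{(m)},\AS)$ delivers a reduction vector $\vect{f_{s-1}}$ of $\vect{f_s}$.

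Thus the alternative execution has base-vector level $\bar s\leq s-1$, whereas the given chain has base-vector level $s$. Since $V(\vect{f_e},\GG(t_1)\dots(t_e))$ has no first-entry solution, Lemma~\ref{Lemma:ChainOfReductionVectors:equivalent}(2) forces $s=\bar s$, a contradiction; hence $(f_s)_1\notin\AS$. The main obstacle is making the inductive step in the degree reduction loop airtight: one has to verify at each recursion depth that the distinguished choice $(1,0,\ldots,0,0)$ really is available as the first element of a first-entry reduced basis and that the resulting first column of $\vect{C}$, together with $g_1=0$, keeps the first entry of $\vect{\phi}$ unchanged through~\eqref{Equ:PhiDef}, so that no shortcut in Line~\ref{FirstEntryDR:ShortCut2} of \NDegreeReductionFirstEntry\ is triggered en route to the base case $m=0$.
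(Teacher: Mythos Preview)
Your proposal is correct and follows essentially the same approach as the paper's own proof: assume $(f_s)_1\in\AS$, rule out $(f_s)_1=0$, and then exhibit an alternative run of the reduction in which the distinguished element $(1,0,\dots,0,0)$ (the paper uses $(-1,0,\dots,0)$, a harmless sign change) is inserted into each basis $B_1$ and $\tilde B$ so that the first entry of the working vector stays equal to $a$ through every degree step down to $m=0$, yielding a reduction vector of $\vect{f_s}$; the contradiction is then obtained from Lemma~\ref{Lemma:ChainOfReductionVectors:equivalent}(2). Your inductive bookkeeping is slightly more explicit than the paper's iterative sentence, but the argument is the same.
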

\begin{proof}
Suppose that $\vect{f_s}$ is the base vector with $s\geq1$ and suppose that there is no first-entry solution of $V(\vect{f_e},\GG(t_1)\dots(t_e))$.   
By Lemma~\ref{Lemma:ChainOfReductionVectors:solvable} it follows that there is no first-entry solution of $V(\vect{f_s},\GG(t_1)\dots(t_s))$. Moreover, suppose that the first entry $a$ of $\vect{f_s}$ is from $\GG(t_1)\dots(t_{s-1})$. With these properties we will show that there is a particular choice of basis representations such that $\vect{f_s}$ has a reduction vector. Since by Lemma~\ref{Lemma:ChainOfReductionVectors:equivalent} the chain of reduction vectors to a basis vector has always the same length, there cannot be a reduction process such that $\vect{f_s}$ is a basis vector, a contradiction.
We start the reduction following Algorithm \FirstEntryPT$(\vect{f_s},\GG(t_1)\dots(t_s))$, i.e., $\vect{f_s}=\vect{f}=(f_1,\dots,f_n)$. Note that $f_1=a=0$ would return a first-entry solution, which is not possible. Moreover, we are not in the ground field, since $s\geq1$. Therefore, take $\vect{f}=\vect{r}+\vect{p}$ with $\vect{r}\in\fracF{\GG(t_1)\dots(t_s)}^n$ and $\vect{p}\in\GG(t_1)\dots(t_{s-1})[t_s]^n$.
Since $f_1=a$ is free of $t_s$, the first entry of $\vect{r}$ is $0$ and the first entry of $\vect{p}$ is $a$. Hence we can choose the basis $B_1$ with the vector $(1,0,\dots,0)$ and thus in Line~\ref{FirstEntry:fP} we can take $\vect{C}$ where the first row is $(1,0,\dots,0)$ and we can take $\vect{g}$ where the first entry is $0$. Thus we obtain $\vect{f'}=\vect{C}\,\vect{p}$ where the first entry is $f'_1:=a$. Now we proceed the reduction process as given in \DegreeReductionFirstEntry$(\vect{f'},\GG(t_1)\dots(t_{s}))$. Since $f'_1=a\neq0$ (see above), Line~\ref{FirstEntryDR:ShortCut1} is not considered. Moreover, if $m=0$, we obtain a reduction vector $\vect{f'}$ of $\vect{f_s}$, a contradiction. Thus $m\geq1$ and we enter Lines~\ref{FirstEntryDR:fTilde} and~\ref{FirstEntryDR:CoeffP} of Algorithm \DegreeReductionFirstEntry. Since $\coeff(f'_1,m)=\coeff(a,m)=0$, we can take $\tilde{B}$ with the element $(-1,0,\dots,0)$. Since $\tilde{B}$ is non-empty, we carry out Line~\ref{FirstEntryDR:fTilde} and obtain $\phi$ where the first entry is $a$. Thus we enter \DegreeReductionFirstEntry$(m-1,\vect{\phi},\GG(t_1)\dots(t_{s})$. Hence the degree reduction is applied iteratively to $m=0$; a contradiction (see above).\qed
\end{proof}

\begin{lemma}\label{Lemma:LiftReduction}
Let $\dfield{\AS(t)}{\sigma}$ be a \pisiSE-extension of $\dfield{\GG}{\sigma}$ with $\vect{f}\in\AS(t)^n$ and $h\in\AS$. 
\begin{enumerate}
\item  If $\vect{\psi}\in\AS^{\mu}$ is a reduction-vector of $\vect{f}$, $\vect{\psi}\wedge h$ is a reduction-vector of $\vect{f}\wedge h$. 
\item If $V(\vect{f}\wedge h,\AS(t))$ has a first-entry solution and $V(\vect{f},\AS(t))$ has no first-entry solution, then there is a reduction-vector $\vect{\psi}\in\AS^{\mu}$ of $\vect{f}$.
\end{enumerate}
\end{lemma}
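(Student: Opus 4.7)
The plan rests on the observation that $h\in\AS$ is inert in the $t$-decomposition of $\AS(t)$: its fractional part in $\fracF{\AS(t)}$ vanishes and $\coeff(h,m)=0$ for every $m\geq 1$. So appending $h$ to $\vect{f}$ merely adds a passive coordinate that propagates unchanged through every level of the reduction.

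For statement~(1), the plan is to execute \FirstEntryPT\ on $\vect{f}\wedge h$ making at every step the same basis choices as for $\vect{f}$, extended by a single trivial element for the new coordinate. Concretely, write $\vect{f}=\vect{r}+\vect{p}$, so that $\vect{f}\wedge h=(\vect{r}\wedge 0)+(\vect{p}\wedge h)$. If $B_1=\{(c_{i,1},\dots,c_{i,n},g_i)\}_{i=1}^{\nu}$ is the first-entry reduced basis of $V(\vect{r},\fracF{\AS(t)})$ used in the reduction of $\vect{f}$, take $B_1'=\{(c_{i,1},\dots,c_{i,n},0,g_i)\}_{i}\cup\{(0,\dots,0,1,0)\}$, so that $\vect{C}'=\left(\begin{smallmatrix}\vect{C}&\vect{0}\\\vect{0}&1\end{smallmatrix}\right)$ and $\vect{f}''=\vect{C}'(\vect{p}\wedge h)=\vect{f}'\wedge h$. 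Inside \DegreeReductionFirstEntry, the degree bound of $\vect{f}'\wedge h$ equals that of $\vect{f}'$ since $\deg h=0$, and because $\coeff(h,m)=0$ for $m\geq 1$ the analogous extension trick when computing $\tilde{\vect{f}}$ and its basis $\tilde{B}$ yields $\vect{\phi}'=\vect{\phi}\wedge h$ at every level (cf.~\eqref{Equ:fGoundField},~\eqref{Equ:PhiDef}). An induction on the degree-reduction level then shows that if $\vect{f}$'s reduction reaches $m=0$ with reduction vector $\vect{\psi}$, the parallel reduction of $\vect{f}\wedge h$ reaches $m=0$ with $\vect{\psi}\wedge h$, no shortcut triggering along the way.

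For statement~(2), I would argue by contradiction: if $\vect{f}$ admits no reduction vector, its reduction must abort at a Line~5 shortcut of \FirstEntryPT\ ($B_1=\{\}$ or the first column of $\vect{C}$ vanishes) or of \DegreeReductionFirstEntry\ at some level $m\geq 1$ ($\tilde{B}=\{\}$ or the first column of $\tilde{\vect{C}}$ vanishes). The crucial point is that these shortcut conditions are \emph{intrinsic}: each says precisely that every solution of the corresponding auxiliary problem has $c_1=0$, independently of basis choices. Using Lemma~\ref{Equ:SplitRatDeg} for the rational/polynomial splitting and analysing the leading-coefficient equation behind~\eqref{Equ:fGoundField} for the degree-reduction case, one verifies that every solution of the auxiliary problem for $\vect{f}\wedge h$ arises from a solution for $\vect{f}$ by appending a free scalar in the new coordinate (and $0$ in the $g$-slot). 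Therefore the shortcut condition for $\vect{f}$ propagates to the same shortcut for $\vect{f}\wedge h$, which by correctness of the reduction forces $V(\vect{f}\wedge h,\AS(t))$ to have no first-entry solution, contradicting the hypothesis. Hence the reduction of $\vect{f}$ reaches $m=0$, producing the desired reduction vector $\vect{\psi}\in\AS^{\mu}$.

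The main technical obstacle is the bookkeeping needed to show that the shortcut conditions really are basis-independent and that the decoupling of the new coordinate persists through every recursive level of the degree reduction. The decoupling is transparent for the initial rational/polynomial split via Lemma~\ref{Equ:SplitRatDeg}, and the induction only has to verify that the same pattern recurs each time a new $\tilde{\vect{f}}$ and $\vect{\phi}$ are formed, which follows directly from $\coeff(h,m)=0$ for $m\geq 1$ together with the linearity of the constructions~\eqref{Equ:fGoundField} and~\eqref{Equ:PhiDef}.
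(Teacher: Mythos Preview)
Your proposal is correct and follows essentially the same approach as the paper's proof. Part~(1) is identical: you extend each basis by the trivial element for the new coordinate and observe that the reduction of $\vect{f}\wedge h$ then tracks that of $\vect{f}$ step for step, producing $\vect{\psi}\wedge h$. For part~(2), the paper runs the two reductions for $\vect{f}$ and $\vect{f}\wedge h$ in parallel directly (using the first-entry solution of $V(\vect{f}\wedge h,\AS(t))$ to rule out the Line~\ref{FirstEntry:ShortCut2}/\ref{FirstEntryDR:ShortCut2} shortcuts and the absence of one for $V(\vect{f},\AS(t))$ to rule out the Line~\ref{FirstEntry:ShortCut1}/\ref{FirstEntryDR:ShortCut1} shortcuts), whereas you phrase the same mechanism as a contradiction: if $\vect{f}$ were a base vector, the intrinsic shortcut condition would transfer to $\vect{f}\wedge h$. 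This is the same argument in contrapositive form; just make explicit that the Line~\ref{FirstEntry:ShortCut1}/\ref{FirstEntryDR:ShortCut1} shortcuts are excluded for $\vect{f}$ by the hypothesis that $V(\vect{f},\AS(t))$ has no first-entry solution, and that your claim about the auxiliary problems for $\vect{f}\wedge h$ presupposes running its reduction with the extended bases of part~(1) so that the intermediate vectors really have the shape $\vect{f'}\wedge h$, $\vect{\phi}\wedge h$, etc.
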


\begin{proof}
\textbf{(1)} Let $\vect{\psi}\in\AS^u$ be a reduction-vector of $\vect{f}\in\AS[t]^n$ and suppose that $h\in\AS$. Consider this reduction process which leads to $\vect{\psi}$. First, write $\vect{f}=\vect{r}+\vect{p}$ where the entries of $\vect{r}$ are from $\fracF{\AS(t)}$ and the entries of $\vect{p}$ are from $\AS[t]$. Let
$B_1=\{(c_{i1},\dots,c_{in},g_i)\}_{1\leq i\leq\nu}$ be the basis of $V(\vect{r},\fracF{\AS(t)})$ which will lead us to the reduction-vector $\vect{\psi}$. In particular, let $\vect{f'}\in\AS[t]^{\nu}$ as defined in~\eqref{Equ:PolyPart} using the basis $B_1$. From there we activate the degree reduction to reach the reduction vector $\vect{\psi}$.
Now consider $\vect{f}\wedge h$ and perform the following reduction. We get $\vect{f}\wedge h=(\vect{r}\wedge 0)+(\vect{p}\wedge h)$.
Thus we can choose the basis $B'_1=\{(c_{i1},\dots,c_{in},0,g_i)\}_{1\leq i\leq\nu}\cup\{(0,\dots,0,1,0)\}$, and using this particular basis we obtain the vector $\vect{f'}\wedge h$ to activate the degree reduction. Observe that for both cases $\vect{f'}$ and $\vect{f'}\wedge h$ the degree bound $m\geq 0$ is the same (see Theorem~\ref{Thm:DegreeBound}). Next, let $\tilde{\vect{f}}$ be the coefficient vector of $\vect{f'}$ w.r.t.\ the term $t^m$ as defined in~\eqref{Equ:fGoundField}. Then $\tilde{\vect{f}}\wedge 0$ is the corresponding coefficient vector of $\vect{f'}\wedge h$. Moreover, let $\tilde{B}=\{(c_{i1},\dots,c_{i\nu},g_i)\}_{1\leq i\leq\lambda}$ be the basis of $V(\vect{\tilde{f}},\AS)$ whose choice will bring us to the reduction-vector $\vect{\psi}$. More precisely, we get the vector $\phi$ as defined in~\eqref{Equ:PhiDef} and continue to compute a basis of $V(\vect{\phi},\AS[t]_{m-1})$. Similarly,
we can choose the basis $\tilde{B}':=\{(c_{i1},\dots,c_{i\nu},0,g_i)\}_{1\leq i\leq\lambda}\cup\{0,\dots,0,-1,0\}$ for $V(\vect{\tilde{f}}\wedge0,\AS)$, obtain $\phi\wedge h$ and continue to compute a basis of $V(\phi\wedge h,\AS[t]_{m-1})$. Applying this argument iteratively, we get the reduction-vector $\vect{\psi}\wedge h$ of $\vect{f}\wedge h$ when applied to $V(\vect{f}\wedge h,\AS(t))$.\\
\textbf{(2)} Now suppose that $V(\vect{f}\wedge h,\AS(t))$ has a first-entry solution, but $V(\vect{f}\wedge h,\AS(t))$ has not. Note that we do not exit in Line~\ref{FirstEntry:ShortCut1} of \FirstEntryPT, otherwise also $V(\vect{f},\AS(t))$ has a first-entry solution. 
Since $h$ is free of $t$, we can choose the basis $B'_1$ as in part (1). Note that we cannot exit in Line~\ref{FirstEntry:ShortCut2} by assumption of the existence of a first entry solution. Similarly, we cannot exist in Line~\ref{FirstEntryDR:ShortCut1} in Alg.~\DegreeReductionFirstEntry. Thus we enter the degree reduction process with a degree bound $m\geq0$ and a certain vector $\vect{f'}\wedge h$.
Now activate the reduction for the vector $\vect{f}$. We therefore can take the basis $B_1$ as in part (1), get the same degree bound $m$ (see Theorem~\ref{Thm:DegreeBound}), and start the degree reduction process with $\vect{f'}$. Again we cannot exist in Line~\ref{FirstEntryDR:ShortCut1} in Alg.~\DegreeReductionFirstEntry\ (otherwise it would apply for the reduction vector $\vect{f'}\wedge h$). Thus in both cases we carry out the degree reduction. In particular, choosing the basis accordingly as in part (1), we can perform the degree reduction from $m$ to $m-1$ simultaneously.
Applying this argument iteratively, shows that the degree reductions of $\vect{f'}\wedge h$ and $\vect{f'}$ can be brought to reduction vectors $\vect{\psi}\wedge h$ and $\vect{\psi}$, respectively. \qed
\end{proof}

Applying Lemma~\ref{Lemma:LiftReduction}.1 iteratively gives the following

\begin{corollary}\label{Cor:ChainOfReductionVectorsProlonged}
Let $\dfield{\GG(t_1)\dots(t_e)}{\sigma}$ be a \pisiSE-ext.\ of $\dfield{\GG}{\sigma}$ and $\vect{f_i}\in\GG(t_1)\dots(t_i)^{n_i}$ with $n_i\geq1$, and $h\in\GG(t_1)\dots(t_{s})$.
If~\eqref{Equ:Chainvector} is a chain of reduction vectors then also
\begin{equation}\label{Equ:ChainOfReductionVectorsProlonged}
\vect{f_e}\wedge h\to\vect{f_{e-1}}\wedge h\to\dots\to\vect{f_s}\wedge h.
\end{equation}
\end{corollary}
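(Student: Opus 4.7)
The plan is to apply Lemma~\ref{Lemma:LiftReduction}.1 separately to each link of the chain and then concatenate the resulting arrows. Concretely, fix $i$ with $s+1 \le i \le e$ and set $\AS := \GG(t_1)\dots(t_{i-1})$ and $t := t_i$, so that $\dfield{\AS(t)}{\sigma}$ is a \pisiSE-extension of $\dfield{\AS}{\sigma}$ and the arrow $\vect{f_i} \to \vect{f_{i-1}}$ exactly asserts that $\vect{f_{i-1}} \in \AS^{n_{i-1}}$ is a reduction-vector of $\vect{f_i} \in \AS(t)^{n_i}$ in the sense required by the lemma.

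The only thing to check before invoking Lemma~\ref{Lemma:LiftReduction}.1 is that $h$ lives in the correct ground field at this level, namely $h \in \AS = \GG(t_1)\dots(t_{i-1})$. But this is immediate from the assumption $h \in \GG(t_1)\dots(t_s)$ together with the inclusion $\GG(t_1)\dots(t_s) \subseteq \GG(t_1)\dots(t_{i-1})$, which holds because $s \le i-1$. Hence Lemma~\ref{Lemma:LiftReduction}.1 applies and yields $\vect{f_i} \wedge h \to \vect{f_{i-1}} \wedge h$.

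Doing this for every $i$ from $e$ down to $s+1$ and concatenating the individual arrows produces the full chain~\eqref{Equ:ChainOfReductionVectorsProlonged}, which is the claim. There is no genuine obstacle here; the proof is essentially a bookkeeping argument, and the only subtlety worth flagging is the index check $s \le i-1$ that ensures $h$ lies in the ground field of the currently considered \pisiSE-extension, so that the hypothesis of Lemma~\ref{Lemma:LiftReduction}.1 is satisfied at every step.
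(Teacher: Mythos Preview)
Your proof is correct and follows exactly the approach the paper indicates: the paper's proof is literally the one line ``Applying Lemma~\ref{Lemma:LiftReduction}.1 iteratively gives the following,'' and you have simply spelled out that iteration in detail, including the index check $s\le i-1$ ensuring $h\in\GG(t_1)\dots(t_{i-1})$ at each step.
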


\section{Refined parameterized telescoping: reduced solutions}\label{Sec:ReducedSol}

Finally, we turn to refined parameterized telescoping. A solution of~\eqref{Equ:ParaTeleSeq} in the setting of difference rings (resp.\ fields) is called special solution.

\begin{definition}
Let $\dfield{\AS}{\sigma}$ be a difference ring with $\KK:=\const{\AS}{\sigma}$ and take $\vect{f}=(f_1,\dots,f_n)\in\AS^n$. $(\psi,(c_1,\dots,c_n,g))$ is called a \notion{special $(\vect{f},\AS)$-solution} if $\psi\in\AS$ and $(c_1,\dots,c_n,g)\in\KK^n\times\AS$ with $c_1\neq0$ such that 
$\sigma(g)-g+\psi=c_1\,f_1+\dots+c_n\,f_n$.
\end{definition}

\begin{remark}\label{Remark:RelationSpecFirstE}
A first-entry solution $(c_1,\dots,c_n,g)$ of $V(\vect{f},\AS)$ delivers a $(\vect{f},\AS)$-special solution $(0,(c_1,\dots,c_n,g))$.
\end{remark}

\noindent More precisely, we look for a special solution $(\psi,(c_1,\dots,c_n,g))$ which is reduced.

\begin{definition}
Let $\dfield{\FF}{\sigma}$ be a \pisiSE-extension of $\dfield{\GG}{\sigma}$ with $\FF=\GG(t_1)\dots(t_e)$ and $\KK:=\const{\GG}{\sigma}$; let $\vect{f}\in\FF^n$. A special $(\vect{f},\FF)$-solution $(\psi,(c_1,\dots,c_n,g))$ is called \notion{reduced} over $\GG$ if one of the following holds: 
\begin{enumerate}
\item[1.] $\psi=0$.
\item[2.] $\psi\in\GG\setminus\{0\}$ and there is no $(\vect{f},\FF)$-special solution $(0,(\kappa_1,\dots,\kappa_n,\gamma))$.
\item[3.] $\psi\in\GG(t_1)\dots(t_i)\setminus\GG(t_1\dots,t_{i-1})$ for some $1\leq i\leq e$ and there is no $(\vect{f},\FF)$-special solution $(\phi,(\kappa_1,\dots,\kappa_n,\gamma))$ with $\phi\in\GG(t_1)\dots(t_{i-1})$.
\end{enumerate}
\end{definition}

\noindent In other words, we are interested in a special solution $(\psi,(c_1,\dots,c_n,g))$ where $\psi$ is $0$ or $\psi\neq0$ lives in the smallest possible field, i.e., $\psi\in\GG(t_1)\dots(t_i)$ with minimal $i$.

Note that this problem contains refined telescoping (compare~\eqref{Equ:RefinedTele}): given $\dfield{\FF}{\sigma}$ and $f\in\FF$, find $g,\psi\in\FF$ with

\vspace*{-0.3cm}

\begin{equation}\label{Equ:ReducedProp}
\sigma(g)-g+\psi=f
\end{equation}
such that $\psi=0$ or $\psi\neq0$ is taken from the smallest possible extension field of $\GG$.

\begin{example}\label{Exp:ReducedSolRes}
Consider the \pisiSE-field $\dfield{\QQ(k)(p)(h)}{\sigma}$ as given in Example~\ref{Exp:DF}.4 and take $f$ as given in~\eqref{Equ:ConcreteTelef}.
Then $(\psi,(-1/2,g))$ with $\psi=-\frac{(k+2) p}{2 (k+1)}$ and $g=-\frac{1}{2} h^2 k (k+1) p+h p$ is a reduced solution, i.e, we have that~\eqref{Equ:ReducedProp} where $\psi$  is from the smallest possible sub-field. Reinterpreting this solution in terms of $k!$ and $H_k$ yields a solution of~\eqref{Equ:RefinedTele} with $f(k)=H_k^2\big(k^2+1\big)k!$. 
and summing~\eqref{Equ:RefinedTele} over $k$ produces

\vspace*{-0.2cm}

\begin{equation}\label{Equ:RefinedIndefExp}
\sum_{k=1}^m(k^2+1)\,k!\,H_k^2=\sum_{k=1}^m \frac{(k+1)!}{k^2}+m\,(m+1)!\,H_m^2-2 m!\,H_m.
\end{equation}
\end{example}

\noindent Similarly, this enables one to refine creative telescoping~\cite{Schneider:03,Schneider:06c,Schneider:09a}. 

\begin{example}
We aim at calculating a recurrence for the sum $S(r)=\sum_{k=0}^{2r}f(r,k)$ with the summand $f(r,k)=(-1)^k\binom{2r}{k}^3H_k$. Note that $P_r(k)=(-1)^k\binom{2r}{k}^3$ has the shift-behavior $P_r(k+1)=-\frac{(2r-k)^3}{(k+1)^3}P_r(k)$ and $P_{r+i}(k)=\prod_{j=1}^{2i}\frac{(2r+j)^3}{(2r-k+j)^3}P_r(k)$.
To accomplish this task, take the rational function field $\QQ(r)$, i.e., r is considered as a variable, and construct the \pisiSE-field $\dfield{\QQ(r)(k)(p)(h)}{\sigma}$ over $\QQ(r)$ with $\sigma(k)=k+1$, $\sigma(p)=-\frac{(2r-k)^3}{(k+1)^3}p$ and $\sigma(h)=h+\frac{1}{k+1}$. Thus $p$ and $h$ represent $P_r(k)$ and $h$, respectively. In particular, $f(r+i,k)$ with $i\in\NN$ can be rephrased with $f_i=\big(\prod_{j=1}^{2i}\frac{(2r+j)^3}{(2r-k+j)^3}\big)p\,h\in\QQ(r)(k)(p)(h)$.\\
First, we activate the classical creative telescoping approach with the function call \FirstEntryPT($(f_0,f_1,\dots,f_n),\QQ(r)(k)(p)(h))$ for $n=0,1,2,\dots$ Eventually we find a first-entry solution for $n=2$ 
which produces a summand recurrence of order two of the form
~\eqref{Equ:ParaTeleSeq} with $\psi=0$. Summing this equation over $k$ provides a recurrence of the form~\eqref{Equ:Recurrence} with $\psi=0$ whose coefficients $c_0,c_1,c_2$ are rather large.\\ 
Second, we execute \ReducedPT($(f_0,f_1,\dots,f_n),\QQ(r)(k)(p)(h))$ given below for $n=0,1,2,...$ Here we obtain with $n=1$ a reduced solution
over $\QQ(r)$, namely $(\tfrac{-108 r^3-171 r^2-86 r-13}{2 (r+1)(2 r+1)}p,(3 (3 r+1) (3 r+2),(r+1)^2,g))$ with $g=p(a+b\,h)$ where $a,b\in\QQ(r)(k)$ are large rational functions. Rephrasing this equation in terms of the summation objects gives the summand recurrence~\eqref{Equ:ParaTeleSeq}. Finally, summing this equation over $k$ produces the recurrence

\vspace*{-0.7cm}

$$(r+1)^2\,S(r+1)+3 (3 r+1) (3 r+2)\,S(r)=\tfrac{-108 r^3-171 r^2-86 r-13}{2 (r+1)(2 r+1)} \sum_{k=0}^{2r}(-1)^{k}\tbinom{2r}{k}^3;$$

\vspace*{-0.3cm}

\noindent note that by further simplification (e.g., using symbolic summation tools) one gets that $\sum_{k=0}^{2r}(-1)^{k}\tbinom{2r}{k}^3=\frac{(-1)^r (3 r)!}{(r!)^3}$. Finally, solving this recurrence yields/proves

\vspace*{-0.2cm}

$$\sum_{k=0}^r(-1)^k\binom{r}{k}^3H_k=\Big(H_r+2H_{2r}-H_{3r}\Big) \frac{(-1)^r (3 r)!}{2(r!)^3}.$$
\end{example}

In order to solve this problem, we modify Algorithm \FirstEntryPT\ as follows. Instead of returning a first-entry solution set, we return always a special solution: If we obtain a first-entry solution, it is returned without any changes; see Remark~\ref{Remark:RelationSpecFirstE}. If this is not possible, i.e., we obtain the base vector $(f_1,\dots,f_n)\in\AS(t_1)\dots(t_e)^n$ 
in our reduction, we do not return $\{\}$, but we return the special solution $(f_1,(1,0,\dots,0))$ which trivially holds:
$\sigma(0)-0+f_1=1\,f_1+0\,f_2+\dots+0\,f_n$.

\begin{example}
We apply this tactic for Example~\ref{Exp:ReducedSolRes}. More precisely, we refine the reduction described in Example~\ref{Exp:FirstEntrySol}. Namely, when we reach the base vector~\eqref{Equ:BaseVector}, we do not return the first-entry solution set $\{\}$, but we return the special solution $(\psi,(1,0,0))$ of $V(\vect{\phi_0},\QQ(k)(p))$ with $\psi=\frac{(-k-2) p}{2 (k+1)}$. Now we combine this solution with the sub-solutions calculated during the reduction. This yields the special solutions $(\psi,(\frac{1}{2},0,h p))$, $(\psi,\{(-\frac{1}{2},0,-\frac{1}{2} h^2 k (k+1) p+h p)\}$  and $(\psi,\{(-\frac{1}{2},-\frac{1}{2} h^2 k (k+1) p+h p)\}$ of $V(\vect{\phi_1},\QQ(k)(p)[h]_1)$, $V(\vect{\phi_2},\QQ(k)(p)[h]_2)$ and $V(\vect{\phi_3},\QQ(k)(p)[h]_3)=V((f),\QQ(k)(p)(h)$, respectively. By Theorem~\ref{Thm:ReducedAlg} this solution is reduced over $\QQ$.
\end{example}

\noindent With this mild modification we obtain the following algorithm and Theorem~\ref{Thm:ReducedAlg}.

\medskip

\small
\textbf{Algorithm \NReducedPT}($\vect{f},\FF$)\\
\textbf{Input:} a \pisiSE-extension $\dfield{\FF}{\sigma}$ of $\dfield{\GG}{\sigma}$ with $\FF=\GG(t_1)\dots(t_e)$ and $\KK:=\const{\GG}{\sigma}$ which is \FPLDE-solvable; $\vect{f}\in\FF^n$.\\
\textbf{Output:} a special $(\vect{f},\FF)$-solution $(\psi,(c_1,\dots,c_{n},g))$ being reduced over $\GG$.

\vspace*{-0.2cm}

\begin{enumerate}
\item IF $f_1=0$, RETURN $(0,(1,0,\dots,0))$.

\item IF $e=0$, compute a first-entry solution set $B$ of $V(\vect{f},\GG)$. \quad\textbf{(*Return a special solution*)}

\item[]\hspace*{0.7cm} IF $B=\{\}$, RETURN $(f_1,(1,0,\dots,0))$ ELSE take $B=\{h\}$ and RETURN $(0,h)$. 

\item[] Denote $\AS:=\GG(t_1\dots,t_{e-1})$, $t:=t_e$.

\item Compute $\vect{r}\in\fracF{\AS(t)}^n$ and $\vect{p}\in\AS[t]^n$ such that $\vect{f}=\vect{r}+\vect{p}$.

\item Get a first-entry reduced basis of $V(\vect{r},\fracF{\AS(t)})$, say
$B_1=\{(c_{i1},\dots,c_{in},g_i)\}_{1\leq i\leq\nu}$.
\item IF $B_1=\{\}$ OR $c_{11}=\dots=c_{\nu1}=0$, 
\item[]\hspace*{0.7cm}RETURN $(f_1,(1,0,\dots,0))$.\quad\textbf{(*NEW: return a special solution*)}\label{Reduced:STOP}

\item Take $C:=(c_{ij})\in\KK^{\nu\times n}$, $\vect{g}:=(g_1,\dots,g_{\nu})\in\fracF{\AS(t)}^{\nu}$; define
$\vect{f'}:=(f'_1,\dots,f'_{\nu})=\vect{C}\,\vect{p}\in\AS[t]^{\nu}$.

\item Define $m\in\NN\cup\{-1\}$ as given in~\eqref{Equ:DegBound}.

\item Get
$B_2:=\NDegreeReductionReduced(m,\vect{f'},\AS(t))$

\item IF $B_2=()$, RETURN $(f_1,(1,0,\dots,0))$.  \quad\textbf{(*NEW: return a special solution*)}

\item Otherwise, let $B_2=(\psi,(d_{11},\dots,d_{1\nu},h_1))$ and
take $\vect{D}:=(d_{11},\dots,d_{1\nu})\in\KK^{1\times\nu}$, $\vect{h}:=(h_1)\in\AS[t]^{1}_m$.
Define $(e_{11},\dots,e_{1n}):=\vect{D}\,\vect{C}\in\KK^{1\times n}$ and $(q_1):=\vect{D}\,\vect{g}+\vect{h}\in\AS(t)^{1}$.

\item RETURN $(\psi,(e_{11},\dots,e_{1n},q_1))$.

\end{enumerate}

\textbf{Algorithm \NDegreeReductionReduced}($m,\vect{f},\AS(t)$)\\
\textbf{Input:} $m\in\NN\cup\{-1\}$, a \pisiSE-extension $\dfield{\AS(t)}{\sigma}$ of $\dfield{\GG}{\sigma}$  with $\sigma(t)=\alpha\,t+\beta$ which is \FPLDE-solvable; $\vect{f'}=(f'_1,\dots,f'_{\nu})\in\AS[t]^{\nu}_m$.\\
\textbf{Output:} a special $(\vect{f},\AS[t])$-solution $(\psi,(c_1,\dots,c_n,g))$ with $\psi\in\AS$, $g\in\AS[t]$, $c_i\in\KK$ being reduced over $\GG$ . If this is not possible, the output is $()$.

\vspace*{-0.2cm}

\begin{enumerate}
\item IF $f'_1=0$, RETURN $(0,(1,0,\dots,0))$. (*Note that here we cover also the case $m=-1$*)

\item IF $m=0$, RETURN \NReducedPT$(\vect{f'},\AS)$.

\item Define $\vect{\tilde{f}}\in\AS^{\nu}$ as in~\eqref{Equ:fGoundField}.

\item Get
$\tilde{B}:=\left\{\begin{array}{ll}
\NSolvePTRat(\vect{\tilde{f}},\AS)&\text{ if $\alpha=1$}\\
\NSolveFPLDE((-1,\alpha^m),\vect{\tilde{f}},\AS)&\text{ if $\alpha\neq1$}
\end{array}\right.$, say $\tilde{B}=\{(c_{i1},\dots,c_{i\nu},g_i)\}_{1\leq i\leq\lambda}$.\\ IF the bases is not reduced, reduce it.

\item IF $\tilde{B}=\{\}$ OR $c_{11}=\dots=c_{\lambda1}=0$, RETURN $()$.\label{ReducedDR:STOP}

\item Take $\vect{C}:=(c_{i,j})\in\KK^{\lambda\times\nu}$, $\vect{g}:=(g_1,\dots,g_{\lambda})\in\AS^{\lambda}$, and let $\vect{\phi}\in\AS[t]_{m-1}^{\lambda}$ as in~\eqref{Equ:PhiDef}.

\item Get
$G:=\NDegreeReductionReduced(m-1,\vect{\phi},\AS(t))$. IF $G=()$, RETURN $()$.

\item Let $G=(\psi,(d_{11},\dots,d_{1\lambda},h_1))\subseteq\AS\times\KK^{\lambda}\times\AS[t]_{m-1}$; 
take $D:=(d_{11},\dots,d_{1\lambda})\in\KK^{1\times\lambda}$, $\vect{h}:=(h_1)\in\AS[t]_{m-1}^{1}$, and define $(e_{11},\dots,e_{1\nu}):=\vect{D}\,\vect{C}\in\KK^{1\times\nu}$ and $(p_1):=\vect{D}\,\vect{g}\,t^m+\vect{h}\in\AS[t]_m^1$.

\item RETURN $(\psi,(e_{11},\dots,e_{1\nu},p_1))$.
\end{enumerate}

\normalsize

\begin{theorem}\label{Thm:ReducedAlg}
Let $\dfield{\FF}{\sigma}$ be a \pisiSE-extension of $\dfield{\GG}{\sigma}$ which is \FPLDE-solvable, let $\vect{f}\in\FF^n$. Then one can compute a special $(\vect{f},\FF)$-solution being reduced over $\GG$.
\end{theorem}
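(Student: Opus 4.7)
The plan is to verify that Algorithm \ReducedPT\ (together with \DegreeReductionReduced) terminates and returns a special $(\vect{f},\FF)$-solution which is reduced over $\GG$. Termination and the fact that the output is indeed a valid special $(\vect{f},\FF)$-solution will follow essentially from the corresponding proof for \FirstEntryPT: the only modification in \ReducedPT\ is that whenever \FirstEntryPT\ would return the empty set, \ReducedPT\ instead returns the trivial identity $\sigma(0)-0+f'_1=1\cdot f'_1 + 0\cdot f'_2 + \dots$ which is then lifted through the recursion. I would first check by induction on the recursion depth that each recursive invocation returns either a first-entry solution (with $\psi=0$) lifted up unchanged, or a tuple $(\psi,(c_1,\dots,c_n,g))$ obtained by left-multiplying the coefficient vector of a sub-solution by the first-row reduced matrix $\vect{C}$ of the current layer; the first-row reducedness of $\vect{C}$ then guarantees $c_1\neq 0$, so the returned tuple is a special solution.

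Next I would trace what $\psi$ actually is. Whenever $\psi\neq 0$, inspecting the two stopping conditions shows that $\psi$ equals the first coordinate of some base vector $\vect{f_s}\in\GG(t_1)\dots(t_s)^{n_s}$ reached along the chain of reduction vectors $\vect{f}=\vect{f_e}\to\vect{f_{e-1}}\to\dots\to\vect{f_s}$ built up during the recursion. Since we exit only after verifying that $V(\vect{f_s},\GG(t_1)\dots(t_s))$ has no first-entry solution, Lemma~\ref{Lemma:ChainOfReductionVectors:solvable} yields that $V(\vect{f},\FF)$ has no first-entry solution either, and Lemma~\ref{Lemma:ChainOfReductionVectors:tDep} gives $\psi\in\GG(t_1)\dots(t_s)\setminus\GG(t_1)\dots(t_{s-1})$ in the case $s\geq 1$, respectively $\psi\in\GG$ in the case $s=0$. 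This puts the output into case~2 or case~3 of the definition of ``reduced over $\GG$'', provided minimality of $s$ can be established.

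The minimality statement is the heart of the argument and is where I would spend the most care. Suppose towards a contradiction that $s\geq 1$ and that there is another special $(\vect{f},\FF)$-solution $(\phi,(\kappa_1,\dots,\kappa_n,\gamma))$ with $\phi\in\GG(t_1)\dots(t_{s-1})$. Rewriting $\sigma(\gamma)-\gamma=\kappa_1 f_1+\dots+\kappa_n f_n-\phi$ exhibits the first-entry solution $(\kappa_1,\dots,\kappa_n,-1,\gamma)$ of $V(\vect{f}\wedge\phi,\FF)$. Since $\phi\in\GG(t_1)\dots(t_{s-1})\subseteq\GG(t_1)\dots(t_i)$ for every $i\geq s-1$, Corollary~\ref{Cor:ChainOfReductionVectorsProlonged} prolongs our chain into $\vect{f}\wedge\phi\to\vect{f_{e-1}}\wedge\phi\to\dots\to\vect{f_s}\wedge\phi$, and then Lemma~\ref{Lemma:ChainOfReductionVectors:solvable} transports the first-entry solution down to $V(\vect{f_s}\wedge\phi,\GG(t_1)\dots(t_s))$. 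Because $V(\vect{f_s},\GG(t_1)\dots(t_s))$ has no first-entry solution, Lemma~\ref{Lemma:LiftReduction}(2) now produces, for a suitable choice of basis representations, a reduction vector of $\vect{f_s}$ at level $s-1$. Invoking Lemma~\ref{Lemma:ChainOfReductionVectors:equivalent}(2), the length of every chain of reduction vectors starting at $\vect{f_s}$ is the same, so the existence of this new reduction vector contradicts the fact that the algorithm stopped with $\vect{f_s}$ as a base vector. The same contradiction handles the case $s=0$ with $\phi=0$, and the case $\psi=0$ is vacuously reduced. The main obstacle is precisely this final step, namely making the matching between ``base vector of the algorithm'' and ``terminal node of every possible reduction chain'' fully rigorous, which is exactly what Lemmas~\ref{Lemma:ChainOfReductionVectors:equivalent} and~\ref{Lemma:LiftReduction} are tailored to supply. \qed
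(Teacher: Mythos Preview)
Your proposal is correct and follows essentially the same approach as the paper's proof: both reduce the minimality argument to the chain of reduction vectors produced by the algorithm, append the hypothetical smaller $\phi$ via Corollary~\ref{Cor:ChainOfReductionVectorsProlonged}, transport the first-entry solution down with Lemma~\ref{Lemma:ChainOfReductionVectors:solvable}, and then invoke Lemma~\ref{Lemma:LiftReduction}(2) to contradict that $\vect{f_s}$ is a base vector. Your explicit appeal to Lemma~\ref{Lemma:ChainOfReductionVectors:equivalent}(2) to justify why the existence of \emph{some} reduction vector of $\vect{f_s}$ contradicts its being a base vector in the algorithm's particular run is a detail the paper leaves implicit (it uses the same reasoning inside the proof of Lemma~\ref{Lemma:ChainOfReductionVectors:tDep}), so your write-up is in fact slightly more careful there.
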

\begin{proof}
By construction Algorithm \ReducedPT\ returns a special $(\vect{f},\FF)$-solution, say $(\psi,(c_1,\dots,c_n,g))$. In particular, it returns a first-entry solution of $V(\vect{f},\FF)$ if it exists. Consequently, if $\psi=0$, the special solution is reduced over $\GG$. Moreover, if $\psi\in\GG^*$, $\psi=0$ is not possible, and thus the special solution is again reduced over $\FF$. Finally suppose that $\psi\in\GG(t_1)\dots(t_s)\setminus\GG(t_1)\dots(t_{s-1})$. This implies that there does not exist a first-entry solution of $V(\vect{f},\FF)$. Thus by Lemma~\ref{Lemma:ChainOfReductionVectors:tDep} there is a chain of reduction-vectors~\eqref{Equ:Chainvector} with $\vect{f_e}=\vect{f}$ where $\vect{f_s}$ is a base-vector and where the first entry in $\vect{f_s}$ is $\psi$. Now suppose that there is a special $(\vect{f},\FF)$-solution $(h,(\kappa_1,\dots,\kappa_n,\gamma))$ with $h\in\GG(t_1)\dots(t_{s-1})$ and $\kappa_1\neq0$. 
By Corollary~\ref{Cor:ChainOfReductionVectorsProlonged} it follows that there is also the chain~\eqref{Equ:ChainOfReductionVectorsProlonged} of reduction-vectors. However, we have that
$\sigma(\gamma)-\gamma=\kappa_1\,f_1+\dots+\kappa_n\,f_n-h$.
Consequently, there is a first-entry solution $(\kappa_1,\dots,\kappa_n,-1,\gamma)$ of $V(\vect{f}\wedge h,\FF)$. Thus by Lemma~\ref{Lemma:ChainOfReductionVectors:solvable} $V(\vect{f_s}\wedge h,\GG(t_1)\dots(t_s))$ has a first-entry solution. Moreover, since $V(\vect{f},\FF)$ has no first-entry solution, $V(\vect{f_s},\GG(t_1)\dots(t_s))$ has no first-entry solution by Lemma~\ref{Lemma:ChainOfReductionVectors:solvable}. Therefore we can apply Lemma~\ref{Lemma:LiftReduction}.2 and it follows that there is a reduction-vector $\vec{f_{s-1}}$ of $\vect{f_s}$; a contradiction that $\vect{f_s}$ is a base-vector.\qed
\end{proof}

\begin{remark}
\textbf{(1)} \noindent Restricting to a polynomial \sigmaSE-extension $\dfield{\FF}{\sigma}$ of $\dfield{\GG}{\sigma}$ in which one can solve Problem~\FPLDE, the algorithms can be simplified further (compare \SolvePTPoly\ and \DegreeReductionPoly).\\ 
\textbf{(2)} Also the improvements of Remark~\ref{Remark:FurtherShortCuts} can be applied to the algorithms from above.\\
\textbf{(3)} Furthermore, Algorithm \ReducedPT\ can be refined further (and is available in \SigmaP) as follows.
In Line~\ref{Reduced:STOP} of Algorithm \DegreeReductionReduced, one does not return (), but returns $(f'_1,\{(1,0,\dots,0)\})$. Then one can show that one obtains a solution where the degree of the extension $t$ is minimal. This yields an improved version of the algorithm introduced in~\cite[Alg.~4.2]{Schneider:07d}. Besides that, one can modify Line~\ref{ReducedDR:STOP} in Algorithm \ReducedPT\ to find optimal degree representations of $t$ in the numerators and denominators using the ideas of~\cite[Section~5]{Schneider:07d}. 
\end{remark}

\section{A constructive version of Karr's structural theorem}\label{Sec:StructuralThm}

In~\cite{Karr:81,Karr:85} reduced \pisiSE-fields are introduced to derive a discrete analogue of Liouville's Theorem~\cite{Liouville:1835}. More generally, for \pisiSE-extensions we need the following

\begin{definition}\label{Def:ReducedExt}
A \pisiSE-extension $\dfield{\GG(t_1)\dots(t_e)}{\sigma}$ of
$\dfield{\GG}{\sigma}$ is called \notion{reduced} if for any \sigmaSE-extension $t_i$ ($1\leq i\leq e$) with
$f:=\sigma(t_i)-t_i\in\GG(t_1)\dots(t_{i-1})\setminus\GG$ the following property
holds: there do not exist a $g\in\GG(t_1)\dots(t_{i-1})$ and an $\psi\in\GG$ with~\eqref{Equ:ReducedProp}.
\end{definition}

\noindent With Algorithm \ReducedPT\ one can immediately check if a given \pisiSE-extension $\dfield{\GG(t_1)\dots(t_e)}{\sigma}$ of $\dfield{\GG(t_1)\dots(t_{e-1})}{\sigma}$ with $f:=\sigma(t_e)-t_e$ is reduced over $\GG$. If $f\in\GG$, it is reduced. Otherwise, let $i$ such that $f\in\GG(t_1)\dots(t_{i})\setminus\GG(t_1)\dots(t_{i-1})$. Now calculate a reduced solution $(\psi,(1,g))$ with \ReducedPT$((f),\GG(t_1)\dots(t_{i-1}))$. Then the extension is not reduced iff $\psi\in\GG(t_1)\dots(t_{i-1})$. In this case it can be transformed to a reduced one using the following lemma; see~\cite[Lemma~21]{Schneider:10a}.

\begin{lemma}\label{Lemma:IsoSingle}
Let $\dfield{\FF(t)}{\sigma}$ be a \sigmaSE-extension of
$\dfield{\FF}{\sigma}$ with $\sigma(t)=t+f$, and let $\psi,g\in\FF$ such that~\eqref{Equ:ReducedProp} holds. Then there is a \sigmaSE-extension $\dfield{\FF(s)}{\sigma}$ of $\dfield{\FF}{\sigma}$ with $\sigma(s)=s+\psi$ together with an $\FF$-isomorphism $\fct{\tau}{\FF(t)}{\FF(s)}$ with $\tau(t)=s+g$.
\end{lemma}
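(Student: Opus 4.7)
The plan is to build $\FF(s)$ as a difference field, verify that it is a \sigmaSE-extension of $\dfield{\FF}{\sigma}$ via Theorem~\ref{Thm:Karr}, and then construct $\tau$ as an $\FF$-algebra map sending $t\mapsto s+g$, checking compatibility with $\sigma$ on the single generator $t$.

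First, I would take $s$ transcendental over $\FF$ and form the rational function field $\FF(s)$. By Example~\ref{Exp:DF}.3, there is a unique way to extend $\sigma$ from $\FF$ to a field automorphism of $\FF(s)$ with $\sigma(s)=s+\psi$. To promote this to a \sigmaSE-extension, I must rule out new constants. By Theorem~\ref{Thm:Karr}, it suffices to show that there is no $h\in\FF$ with $\sigma(h)-h=\psi$. Suppose for contradiction such an $h$ existed. Adding the identity $\sigma(g)-g+\psi=f$ yields
$$\sigma(h+g)-(h+g)=\big(\sigma(h)-h\big)+\big(\sigma(g)-g\big)=\psi+(f-\psi)=f,$$
so $h+g\in\FF$ telescopes $f$. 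But Theorem~\ref{Thm:Karr} applied to the given \sigmaSE-extension $\dfield{\FF(t)}{\sigma}$ with $\sigma(t)=t+f$ says no such element exists in $\FF$, a contradiction. Hence $\dfield{\FF(s)}{\sigma}$ is indeed a \sigmaSE-extension.

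Next, I would define the $\FF$-algebra homomorphism $\tau\colon\FF[t]\to\FF(s)$ that fixes $\FF$ pointwise and sends $t\mapsto s+g$. Since $s$ is transcendental over $\FF$ and $g\in\FF$, the element $s+g$ is transcendental over $\FF$, so $\tau$ extends uniquely to a field isomorphism $\fct{\tau}{\FF(t)}{\FF(s)}$ (both being purely transcendental one-variable extensions of $\FF$). By construction, $\tau$ is an $\FF$-isomorphism of fields.

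It remains to check that $\tau$ intertwines the two shift operators. Since both $\tau\circ\sigma$ and $\sigma\circ\tau$ agree with $\sigma|_\FF$ on $\FF$ and are field homomorphisms, it suffices to check agreement on the generator $t$. Here
$$\tau(\sigma(t))=\tau(t+f)=s+g+f,\qquad \sigma(\tau(t))=\sigma(s+g)=s+\psi+\sigma(g),$$
and the hypothesis $\sigma(g)-g+\psi=f$ rearranges to $s+\psi+\sigma(g)=s+g+f$, so the two are equal. The only delicate step is the constants argument, which however reduces directly to a one-line telescoping manipulation together with the \sigmaSE-property of the source extension.
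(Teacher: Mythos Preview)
Your proof is correct. The paper itself does not supply a proof of this lemma but only cites \cite[Lemma~21]{Schneider:10a} and restates the conclusion (``Namely, we can construct the \sigmaSE-extension\dots''); your argument is exactly the natural one underlying that citation: use Theorem~\ref{Thm:Karr} in both directions---first to rule out a telescoper for $\psi$ in $\FF$ (via the additive trick $h+g$), and then to certify the \sigmaSE-property of $\dfield{\FF(s)}{\sigma}$---and check the $\sigma$-compatibility of $\tau$ on the generator.
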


\noindent Namely, we can construct the \sigmaSE-extension $\dfield{\FF(s)}{\sigma}$ of $\dfield{\FF}{\sigma}$ with $\sigma(s)=s+\psi$ together with the $\FF$-isomorphism $\fct{\tau}{\FF(t)}{\FF(s)}$ defined by $\tau(t)=s+g$.

\begin{example}
Take the \pisiSE-field $\dfield{\QQ(k)(p)(h)}{\sigma}$ from Ex.~\ref{Exp:DF}.4 and  consider the \sigmaSE-extension $\dfield{\QQ(k)(p)(h)(t)}{\sigma}$ of $\dfield{\QQ(k)(p)(h)}{\sigma}$ with $\sigma(t)=t+f$ where $f$ is given in~\eqref{Exp:ConstructionProb}. We get an improved version by using the reduced solution $(\psi,(-1/2,g))$ from Ex.~\ref{Exp:ReducedSolRes}. Equivalently, we can take $(-2\,\psi,(1,-2g))$.
By Lemma~\ref{Lemma:IsoSingle} we can construct the \sigmaSE-extension $\dfield{\QQ(k)(p)(h)(s)}{\sigma}$ of $\dfield{\QQ(k)(p)(h)}{\sigma}$ with $\sigma(s)=s+\frac{(k+2) p}{(k+1)}$ and get the $\QQ(k)(p)(h)$-isomorphism $\fct{\tau}{\QQ(k)(p)(h)(t)}{\QQ(k)(p)(h)(s)}$ with $\tau(t)=s+k(k+1)p\,h^2-2\,h\,p$. Note that this map is also reflected in~\eqref{Equ:RefinedIndefExp}.
\end{example}

\noindent Applying this transformation iteratively (see~\cite[Algorithm~1]{Schneider:10a}) enables one to transform any \pisiSE-extension to a reduced version.

\begin{theorem}\label{Thm:TransformToReduced}
For any \pisiSE-extension $\dfield{\HH}{\sigma}$ of $\dfield{\GG}{\sigma}$ there is a reduced \pisiSE-extension
$\dfield{\FF}{\sigma}$ of $\dfield{\GG}{\sigma}$
and an $\FF$-isomorphism
$\fct{\tau}{\HH}{\FF}$. 
\begin{enumerate}
\item Such $\dfield{\FF}{\sigma}$ and $\tau$ can be given explicitly, if $\dfield{\FF}{\sigma}$ is FPDLE-computable. 
\item If $\dfield{\HH}{\sigma}$ is a polynomial \sigmaSE-extension of $\dfield{\GG}{\sigma}$ and one can solve Problem~\PT\ in $\dfield{\GG}{\sigma}$, such a polynomial \sigmaSE-extension $\dfield{\FF}{\sigma}$ of $\dfield{\GG}{\sigma}$ and $\tau$ can be calculated.
\end{enumerate}
\end{theorem}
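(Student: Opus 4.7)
The plan is to prove existence by induction on the number $e$ of generators in $\HH = \GG(t_1)\ldots(t_e)$, with the algorithmic statements (1) and (2) following from a direct reading of the construction. The base case $e = 0$ is trivial: take $\FF = \GG$ and $\tau = \mathrm{id}_{\GG}$. For the inductive step, the induction hypothesis supplies a reduced \pisiSE-extension $\GG(s_1)\ldots(s_{e-1})$ of $\GG$ and a $\GG$-isomorphism $\tau': \GG(t_1)\ldots(t_{e-1}) \to \GG(s_1)\ldots(s_{e-1})$. It remains to extend $\tau'$ to incorporate $t_e$; note that appending a new top generator cannot affect the reducedness of the earlier ones, since their reducedness condition depends only on a strict subfield of the tower.

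If $t_e$ is a \piE-extension with $\sigma(t_e) = \alpha\,t_e$, append the \piE-extension $s_e$ with $\sigma(s_e) = \tau'(\alpha)\,s_e$ and set $\tau(t_e) = s_e$; constant preservation follows from $\tau'$ being a $\sigma$-isomorphism, and the definition of reducedness imposes no constraint on \piE-extensions. If $t_e$ is a \sigmaSE-extension with $f := \sigma(t_e) - t_e$ satisfying $\tau'(f) \in \GG$, append $s_e$ with $\sigma(s_e) = s_e + \tau'(f)$; the resulting extension is vacuously reduced.

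The essential case is $f' := \tau'(f) \notin \GG$. Invoke Algorithm \ReducedPT\ on $((f'), \GG(s_1)\ldots(s_{e-1}))$; by Theorem~\ref{Thm:ReducedAlg} one obtains a special solution $(\psi, (1, g))$ reduced over $\GG$, with $\sigma(g) - g + \psi = f'$. Since $t_e$ is a \sigmaSE-extension and $\tau'$ transfers the non-existence of $g' \in \GG(t_1)\ldots(t_{e-1})$ with $\sigma(g') = g' + f$ to the image field (Theorem~\ref{Thm:Karr}), necessarily $\psi \neq 0$. If $\psi \in \GG$, form the auxiliary \sigmaSE-extension $\GG(s_1)\ldots(s_{e-1})(u)$ with $\sigma(u) = u + f'$ and apply Lemma~\ref{Lemma:IsoSingle}: this yields a \sigmaSE-extension $\GG(s_1)\ldots(s_{e-1})(s_e)$ with $\sigma(s_e) = s_e + \psi$ together with a $\GG(s_1)\ldots(s_{e-1})$-isomorphism sending $u \mapsto s_e + g$; composing with the canonical extension $t_e \mapsto u$ of $\tau'$ produces $\tau$, and the new extension is reduced because $\psi \in \GG$. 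Otherwise $\psi \notin \GG$; then reducedness of $(\psi, (1, g))$ over $\GG$ means no special solution $(\phi, (\kappa, \gamma))$ exists with $\phi$ in any strict subfield of the minimal field housing $\psi$, and in particular none with $\phi \in \GG$. Hence $\GG(s_1)\ldots(s_{e-1})(u)$ with $\sigma(u) = u + f'$ is itself reduced over $\GG$, and we simply take $s_e := u$, $\tau(t_e) := s_e$.

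For part~(1), the hypothesis that $\HH$ is \FPLDE-solvable makes each call to Algorithm \ReducedPT\ effective, so the whole construction becomes an explicit algorithm. For part~(2), since $\HH$ is a polynomial \sigmaSE-extension and Problem~\PT\ is solvable in $\GG$, one uses the polynomial variant \SolvePTPoly\ (adapted in the spirit of \ReducedPT\ to return reduced solutions), which keeps $g$ and $\psi$ polynomial in the $s_i$; consequently the resulting $\FF$ remains a polynomial \sigmaSE-extension of $\GG$. The delicate verification throughout is that in the subcase $\psi \in \GG$ the new top generator $s_e$ truly defines a \sigmaSE-extension---but this is precisely what Lemma~\ref{Lemma:IsoSingle} combined with Theorem~\ref{Thm:Karr} guarantees.
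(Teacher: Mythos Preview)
Your proof is correct and follows essentially the same approach as the paper, which does not spell out a proof but refers to applying Lemma~\ref{Lemma:IsoSingle} iteratively (pointing to \cite[Algorithm~1]{Schneider:10a}); your induction on $e$ is precisely a clean formalization of that iteration.

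One small point worth tightening: for the pure existence statement (the sentence preceding items (1) and (2)) no computability hypothesis is in force, so you should not literally ``invoke Algorithm \ReducedPT'' there. Instead, simply observe that a reduced special $((f'),\GG(s_1)\dots(s_{e-1}))$-solution $(\psi,(1,g))$ always \emph{exists} (e.g.\ start from the trivial solution $(f',(1,0))$ and pick $\psi$ in the smallest possible subfield); then your case analysis on $\psi$ goes through unchanged. The appeal to Theorem~\ref{Thm:ReducedAlg} and the algorithms is exactly right for parts~(1) and~(2). Also note that \ReducedPT\ returns $(\psi,(c_1,g))$ with $c_1\neq 0$ rather than $c_1=1$; the normalization by $c_1$ is harmless but should be mentioned.
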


\noindent Given such a reduced \pisiSE-extension, we are in the position to apply the following theorem. For a proof in the context of reduced \pisiE-fields see~\cite[Result, page~315]{Karr:85}, and in the context of reduced \pisiSE-extensions as above see~\cite[Thm~4.2.1]{Schneider:01}.

\begin{theorem}[Karr's structural theorem]\label{Thm:KarrFundamental}
Let $\dfield{\FF}{\sigma}$  be a reduced \pisiSE-extension of $\dfield{\GG}{\sigma}$ with $\FF=\GG(t_1)\dots(t_e)$ and $\sigma(t_i)=\alpha_i\,t_i+\beta_i$, and define
$S:=\{1\leq i\leq e|\sigma(t_i)-t_i\in\GG\}$. Let $f\in\GG$. Then for any $g\in\FF$ with $\sigma(g)-g=f$ we have that
\begin{equation}\label{Equ:KarrFundGForm}
g=w+\sum_{i\in S}c_i\,t_i,\quad\text{with $c_i\in\const{\GG}{\sigma}$ and $w\in\GG$.}
\end{equation}
\end{theorem}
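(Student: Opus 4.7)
The plan is induction on the number $e$ of generators of $\FF$ over $\GG$. The base case $e=0$ is immediate since $\FF=\GG$ forces $g\in\GG$ and $S=\emptyset$. For the inductive step, set $\EE:=\GG(t_1)\dots(t_{e-1})$ and $t:=t_e$ with $\sigma(t)=\alpha\,t+\beta$; then $\EE$ is itself a reduced \pisiSE-extension of $\GG$ (the reducedness conditions are indexed by $i$ and only concern the subfields $\GG(t_1)\dots(t_{i-1})$ with $i\leq e-1$), so the induction hypothesis applies. The first step is to force $g\in\EE[t]$: write $g=P+R$ with $P\in\EE[t]$ and $R\in\fracF{\EE(t)}$. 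In either the \piE\ or \sigmaSE\ case, $\sigma$ preserves degree and restricts to an automorphism of $\EE[t]$, hence also preserves $\fracF{\EE(t)}$. Applying Lemma~\ref{Equ:SplitRatDeg} to $\sigma(g)-g=f\in\EE\subseteq\EE[t]$ (with zero proper-rational part) yields $\sigma(P)-P=f$ and $\sigma(R)-R=0$; the latter places $R\in\const{\EE(t)}{\sigma}=\KK$, and since a nonzero constant is not a proper rational, $R=0$ and $g=P\in\EE[t]$.

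Next, I write $P=\sum_{i=0}^n g_i\,t^i$ with $g_n\neq 0$ and split on the type of $t$. In the \piE-case ($\beta=0$), the coefficient of $t^n$ in $\sigma(P)-P=f\in\GG$ equals $\sigma(g_n)\alpha^n-g_n$; vanishing for $n\geq 1$ gives $\sigma(g_n\,t^n)=g_n\,t^n$, so $g_n\,t^n\in\const{\EE(t)}{\sigma}=\KK$, forcing $g_n=0$ (since $t$ is transcendental over $\EE$) and contradicting $g_n\neq 0$. Hence $n=0$, $g\in\EE$; since $e\notin S$ the induction hypothesis closes the case. In the \sigmaSE-case ($\alpha=1$, $\sigma(t)=t+\beta$), vanishing of the top coefficient only yields $g_n\in\KK$; comparing coefficients of $t^{n-1}$ when $n\geq 2$ gives $\sigma(g_{n-1})-g_{n-1}=-n\,g_n\,\beta$, so $h:=g_{n-1}/(-n\,g_n)\in\EE$ satisfies $\sigma(h)-h=\beta$. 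This contradicts Karr's criterion (Theorem~\ref{Thm:Karr}) when $\beta\in\GG$, and reducedness with $\psi=0$ when $\beta\in\EE\setminus\GG$; either way $n\leq 1$.

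The residual cases in the \sigmaSE-setting are handled as follows. If $n=1$ and $\beta\notin\GG$, then $g=g_0+g_1\,t$ with $g_1\in\KK^*$ and $\sigma(g_0)-g_0+g_1\,\beta=f$; dividing by $g_1$ produces $h:=-g_0/g_1\in\EE$ and $\psi:=f/g_1\in\GG$ with $\sigma(h)-h+\psi=\beta$, again contradicting reducedness. Hence $g\in\EE$, $e\notin S$, and induction finishes. If $n=1$ and $\beta\in\GG$ (so $e\in S$), then $\sigma(g_0)-g_0=f-g_1\,\beta\in\GG$, induction on $g_0\in\EE$ yields $g_0=w+\sum_{i\in S\setminus\{e\}}c_i\,t_i$, and setting $c_e:=g_1\in\KK$ delivers the required representation of $g$. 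The case $n=0$ is immediate from induction. The only real difficulty is organizational bookkeeping: keeping the four subcases straight (\piE\ vs.\ \sigmaSE, crossed with $\beta\in\GG$ or not) and invoking in each the correct structural fact---the definition of \piE, Karr's criterion for \sigmaSE-extensions, or reducedness---while each individual coefficient calculation remains short.
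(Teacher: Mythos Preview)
Your argument is correct. The induction on $e$, the use of Lemma~\ref{Equ:SplitRatDeg} to eliminate the proper rational part, and the coefficient comparisons in the \piE- and \sigmaSE-cases all go through as stated. One minor remark: in the $n\geq 2$ subcase of the \sigmaSE-analysis your split into $\beta\in\GG$ versus $\beta\in\EE\setminus\GG$ is unnecessary, since Theorem~\ref{Thm:Karr} already rules out any $h\in\EE$ with $\sigma(h)-h=\beta$ regardless of where $\beta$ lives; but this redundancy does no harm.

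As for comparison with the paper: the paper does not supply its own proof of Theorem~\ref{Thm:KarrFundamental}. It merely cites \cite[Result, page~315]{Karr:85} for reduced \pisiE-fields and \cite[Thm~4.2.1]{Schneider:01} for reduced \pisiSE-extensions. Your self-contained inductive argument, built from Lemma~\ref{Equ:SplitRatDeg}, Theorem~\ref{Thm:Karr}, and Definition~\ref{Def:ReducedExt}, is therefore a genuine addition rather than a paraphrase, and it has the virtue of making transparent exactly which hypothesis (\sigmaSE-extension vs.\ reducedness) is invoked at each branching point.
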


\noindent This theorem can simplify the calculation of a solution $g\in\GG(t_1)\dots(t_e)$ of $\sigma(g)-g=f\in\FF$ dramatically: Let $S=\{i_1,\dots,i_r\}$ and $f\in\AS$. Then calculate a first-entry solution set of $V((f,\beta_{i_1},\dots,\beta_{i_r}),\GG)$. If it is the empty set, there does not exist such a $g$ by Theorem~\ref{Thm:KarrFundamental}. Otherwise, the computed first-entry solution $(c,c_1,\dots,c_r,w)$ with $c\neq0$ yields the desired solution $g=\frac{1}{c}(w-c_1\,t_{i_1}-\dots-c_r\,t_{i_r})$. Similarly, parameterized telescoping can be simplified using Karr's structural theorem.

\section{Conclusion}\label{Sec:Conclusion}

We started with a general framework to solve parameterized first-order linear difference equations in a \pisiSE-extension over a ground field that possesses certain computational properties. From there various refinements are derived yielding fast parameterized telescoping algorithms. In particular, a new algorithmic variant has been developed to compute reduced solutions efficiently. This enables one to simplify indefinite sums by telescoping using an extra sum whose summand is expressed in the smallest possible sub-field. In addition, recurrences can be produced using such sum extensions yielding shorter recurrences than naive creative telescoping. Finally, this algorithm 
enables one to construct reduced \pisiSE-extensions and to exploit structural properties such as Theorem~\ref{Thm:KarrFundamental}. Exactly such results give rise to even more efficient strategies to calculate parameterized telescoping solutions. 

In~\cite{Schneider:05f,Schneider:08c} depth-optimal \pisiSE-extension have been introduced which refine the notion of reduced \pisiSE-extensions and which give even stronger structural results than Theorem~\ref{Thm:KarrFundamental}; see~\cite{Schneider:10a}. In particular one can search for such an improved \pisiSE-field that leads to sum representations with minimal nesting depth~\cite{Schneider:10b}. As for reduced \pisiSE-extensions this construction is algorithmically by an improved telescoping algorithm. This has been accomplished by refining Algorithm \SolvePTRat\ in~\cite{Schneider:08c}. However, with the technology presented in this article it is possible to perform this refinement starting with Algorithm \FirstEntryPT\ (instead of \SolvePTRat). This new variant (similarly as we obtained a new algorithm for reduced \pisiSE-extensions) is meanwhile implemented in \SigmaP\ and gives currently the best algorithm to solve parameterized linear difference equations in large \pisiSE-fields. 

The algorithms presented in this article and the refinements for depth-optimal \pisiSE-extension are heavily exploited in ongoing calculations coming from QCD (Quantum ChromoDynamics). In these computations highly complicated Feynman integrals~\cite{Schneider:08e,ABKSW:11,ABHKSW:12,BHKS:13} are transformed to multi-sums~\cite{BKSF:12} and are simplified in terms of indefinite nested product-sum expressions using the packages introduced in~\cite{Schneider:13b}.

\begin{acknowledgement}
This work has been supported by the Austrian Science Fund (FWF) grants P20347-N18 and SFB F50 (F5009-N15) and
by the EU Network {\sf LHCPhenoNet} PITN-GA-2010-264564.
\end{acknowledgement}


\end{document}